\newcommand{\rX}{\mathrm{X}}
\newcommand{\rL}{\mathrm{L}}
\newcommand{\cP}{\mathcal{P}}
\newcommand{\cD}{\mathcal{D}}
\newcommand{\cE}{\mathcal{E}}
\newcommand{\cU}{\mathcal{U}}
\newcommand{\hcU}{\hat{\cU}}
\newcommand{\cH}{\mathcal{H}}
\newcommand{\hcH}{\hat{\cH}}
\newcommand{\cL}{\mathcal{L}}
\newcommand{\cT}{\mathcal{T}}
\newcommand{\hcL}{{\hat{\cL}}}
\newcommand{\hcT}{{\hat{\cT}}}
\newcommand{\cW}{\mathcal{W}}
\newcommand{\lspan}{\operatorname{span}}
\newcommand{\ord}{\operatorname{ord}}
\newcommand{\hP}{{\hat{P}}}
\newcommand{\hH}{{\hat{H}}}
\newcommand{\hU}{{\hat{U}}}
\newcommand{\hW}{{\hat{W}}}
\newcommand{\hL}{\hat{L}}
\newcommand{\hphi}{{\hat{\phi}}}
\newcommand{\hT}{{\hat{T}}}
\newcommand{\hq}{{\hat{q}}}
\newcommand{\hr}{{\hat{r}}}
\newcommand{\hb}{{\hat{b}}}
\newcommand{\hw}{{\hat{w}}}
\newcommand{\Rset}{\mathbb{R}}
\newcommand{\Cset}{\mathbb{C}}
\newcommand{\Nset}{\mathbb{N}}
\newcommand{\al}{\alpha}
\newtheorem{thm}{Theorem}[section]
\newtheorem{conj}{Conjecture}[section]
\newtheorem{prop}{Proposition}[section]
\newtheorem{lem}{Lemma}[section]
\theoremstyle{definition}
\newtheorem{example}{Example}[section]
\newtheorem{definition}{Definition}[section]
\theoremstyle{remark}
\newtheorem{remark}{Remark}[section]
\begin{document}

\title[A conjecture on Exceptional Orthogonal Polynomials]{A conjecture on
  Exceptional Orthogonal Polynomials}

\author{David G\'omez-Ullate} \address{ Departamento de F\'isica
  Te\'orica II, Universidad Complutense de Madrid, 28040 Madrid,
  Spain} 

\author{ Niky Kamran } \address{Department of Mathematics and
  Statistics, McGill University Montreal, QC, H3A 2K6, Canada}

\author{Robert Milson} \address{Department of Mathematics and
  Statistics, Dalhousie University, Halifax, NS, B3H 3J5, Canada}

\begin{abstract}
  Exceptional orthogonal polynomial systems ($\rX$-OPS)  arise as
  eigenfunctions of Sturm-Liouville problems and generalize in this
  sense the classical families of Hermite, Laguerre and Jacobi. They also generalize the family of CPRS orthogonal polynomials introduced by Cari{\~n}ena {\it et al.}, \cite{CPRS}. We formulate the
following conjecture: \textit{every
    exceptional orthogonal polynomial system is related to a classical
    system by a Darboux-Crum transformation}.  We give a proof of
  this conjecture for codimension 2 exceptional orthogonal polynomials
  ($\rX_2$-OPs). As a by-product of this analysis, we prove a
  Bochner-type theorem classifying all possible $\rX_2$-OPS. The
  classification includes all cases known to date plus some new
  examples of $\rX_2$-Laguerre and $\rX_2$-Jacobi polynomials.
\end{abstract}
\maketitle
\section{Introduction}\label{sec:Intro}

The past several years have witnessed a considerable level of research
activity in the area of exceptional orthogonal polynomials, which are
new complete orthogonal polynomial systems arising as eigenfunctions
of Sturm-Liouville operators, extending the classical families of
Hermite, Laguerre and Jacobi.  The first examples of exceptional
orthogonal polynomial systems were discovered in \cite{GKM09a} and
\cite{GKM09b} as a result of the development of a direct approach
\cite{GKM3} to exact or quasi-exact solvability for spectral problems
in quantum mechanics that would go beyond the classical Lie algebraic
formulations \cite{Turbiner,KO,GKO}.

The exceptional orthogonal polynomial systems and the Sturm-Liouville problems
that define them have some key poperties that distinguish them from the classical orthogonal polynomial systems, and which we would
like to highlight. The most apparent one is that they admit {\it gaps} in their degrees, in
the sense that not all degrees are present in the sequence of
polynomials that form a complete orthonormal set of the underlying
weighted $\rL^{2}$ space, even though they are defined by a
Sturm-Liouville problem. This means in particular that they are not
covered by the hypotheses of Bochner's celebrated theorem on the
characterization of orthogonal polynomial systems defined by
Sturm-Liouville problems \cite{Bo}. 

The number of gaps in the sequence of degrees of the polynomials
appearing in a complete family will be referred to as the {\it
  codimension} and we will use the symbol $\rX_{m}$ to denote the
various complete orthogonal systems of codimension $m$.  A second order
differential operator is \emph{exceptional} if it preserves some
exceptional polynomial flag, but does not preserve the standard
polynomial flag generated by the monomials.  Thus, and in contrast
with the classical families where the defining differential operator
has only polynomial coefficients, the second order differential
operators corresponding to the exceptional families have poles in
their coefficients, although all their singular points happen to be
regular.

The first explicit examples of families of exceptional orthogonal
polynomials are the $\rX_1$-Jacobi and $\rX_1$-Laguerre polynomials,
which are of codimension one, and were first introduced in
\cite{GKM09a} and \cite{GKM09b}. In these papers, a characterization
theorem was proved for these orthogonal polynomial families, realizing
them as the unique complete codimension one families defined by a 
Sturm-Liouville problem. One of the key steps in the proof was the
determination of normal forms for the flags of univariate polynomials
of codimension one in the space of all such polynomials, and the
determination of the second-order linear differential operators which
preserve these flags \cite{GKM6,GKM12}.

It is Quesne \cite{Quesne1,Quesne2} who first observed the presence of a
relationship between exceptional orthogonal polynomials, the Darboux
transformation and shape invariant potentials. This enabled her to
obtain examples of potentials corresponding to orthogonal polynomial
families of codimension two, as well as explicit families of $\rX_2$
polynomials. Higher-codimensional families were first obtained by
Odake and Sasaki \cite{SO1}. The same authors further showed the
existence of two families of $\rX_m$-Laguerre and $\rX_m$-Jacobi
polynomials, the existence of which was explained in \cite{GKM10} for
$\rX_m$-Laguerre polynomials and in \cite{GKM12} for  $\rX_m$-Jacobi polynomials,
through the application of the isospectral algebraic Darboux
transformation first introduced in \cite{GKM04,GKM04a}. We also refer to
\cite{STZ10} for similar results, and to \cite{GKM10,GKM12} for the proof
of the completeness of the $\rX_m$-Laguerre families. We also note that some examples of exceptional Hermite polynomials were known to the
quantum physics community in the early 90s, \cite{dubov},
and are actively studied today under the name of CPRS systems,
\cite{CPRS,felsmith}. It should as well be noted that the exceptional Laguerre polynomials have already been used in a number of interesting physical contexts, for Dirac operators minimally coupled to external fields, \cite{Ho}, or in quantum information theory, \cite{Dutta-Roy1}.

The papers cited above contain many examples of orthogonal polynomial
families of arbitrary codimension arising from the Laguerre and Jacobi
system by the application of the Darboux transformation.  However, as
was shown in \cite{GKM12pub}, the above list is not exhaustive:
novel exceptional polynomials can be constructed by means of
\emph{multi-step} Darboux or Darboux-Crum transformations.  The multi-step idea was
further applied to exactly solvable and shape-invariant potentials up
in \cite{Grandati1,Quesne3,SO5}.  However, an essential question that
remains open is to know whether these families exhaust all the
possibilities of higher-codimensional complete orthogonal polynomial
systems, in other words whether {\it all} the higher-codimensional
complete orthogonal polynomial systems are generated by the
application of successive algebraic Darboux transformations. {\it We
  conjecture this result to be true.} In order to prove such a result,
one approach would be to try to carry out for all codimensions an
analysis similar to the one performed in \cite{GKM6, GKM09a, GKM09b}
in codimension one, identify the complete orthogonal sets amongst the
resulting families and show that all of these can be obtained from the
classical codimension-zero families by iterating algebraic Darboux
transformations (we will refer to these as {\em multi-step} Darboux
transformations). This seems like a very challenging task in the
absence of a general classification strategy that would lead to normal
forms for flags of univariate polynomials for all codimensions. Even
in the codimension two case, the question would be quite difficult to
answer if we were only using the tools that were at our disposal in
\cite{GKM6}.  We are nevertheless able to give a complete answer to
this question for codimension two families by suitably refining the
approach taken in these earlier papers. In particular the possible
pole structure of the coefficients of the operators that preserve the
codimension two flags plays a key role in the analysis.

Since the main objects of our study are orthogonal polynomial systems that
arise as eigenfunctions of a Sturm-Liouville problem, let us give a definition:

\begin{definition}
  \label{def:OPS}
  We define a Sturm-Liouville orthogonal polynomial system (SL-OPS) as a
sequence of real polynomials $y_1(x), y_2(x), y_3(x),\ldots$, with $\deg y_i>\deg
y_j$ if $i>j$, satisfying the following conditions:
\begin{itemize}
\item[(i)] There exists a second order differential operator 
\[T[y]=p(z) y''+ q(z) y' + r(z) y\] 
with rational coefficients $p,q,r$ such that
$T[y_i]=\lambda_i y_i$ for all $i$, with the $\lambda_i$ distinct.

\item[(ii)] There exists an interval  $I=(a,b),\; -\infty\leq a<b\leq \infty$
such that the weight function 
\[ W(x) = \frac{1}{p(x)} \exp\left(\int^x \frac{q}{p} dx\right) \]
is positive, that is $W(x) > 0$ for $x\in I$, such that all moments are
finite,
\[ \int_a^b x^i W(x) dx <\infty,\quad i=0,1,2,3\ldots, \] 
and such that $p(x)W(x)\to 0$ at the endpoints $x=a,b$.

\item[(iii)] The polynomial sequence is complete, meaning that $\lspan \{ y_i
    \}_{i=1}^\infty$ is dense in $\rL^2(Wdx, I)$;ß

  \end{itemize}

\end{definition}
\noindent
\begin{remark}
It follows from the above definition that the operator $T$ is essentially self-adjoint on the weighted Hilbert space
$L^2(I,Wdx)$ and that the eigenpolynomials are orthogonal, meaning that
\[ \int_a^b W y_i y_j dx = k_i
    \delta_{ij} ,\quad k_i>0,\]
for some constants $k_i$.
\end{remark}
\begin{remark}
If $\deg y_i = i-1$ for all $i$, we are dealing with one of the
\textit{classical} orthogonal polynomial systems of Hermite, Laguerre and Jacobi: the polynomials in
question span
the standard polynomial flag and $p,q,r$ are polynomials of degrees $2,1$ and
$0$ respectively, \cite{Bo}.
\end{remark}
\begin{remark}
If the degree sequence $\{\deg y_i\}_{i=1}^\infty$  does not contain all
non-negative integers, then we will have an exceptional polynomial system
($\rX$-OPS), and the coefficients of $T$ will be purely rational
(as opposed to polynomial) functions.
\end{remark}
\begin{remark} We shall see in Section 5.1 that the eigenvalue equation $T[y]=\lambda y$ can be put into Sturm-Liouville form.
\end{remark}

Even though several families of $\rX$-OPS have now been described in the literature, the general question of classifying all such systems is still largely open. In particular, major porgress would be achieved in our understanding of the subject if we could obtain a classification or a characterization of all families of SL-OPS. (Recall the classification performed
by Bochner \cite{Bo} and Lesky \cite{Lesky} deals only with the classical OPS.) It seems clear
by now that the Darboux transformation will be one of the key necessary tools in achieving such a goal.
It should be noted that when referring to the Darboux transformation, we do not mean here the factorization of Jacobi matrices into upper triangular and lower triangular matrices \cite{GH}. Indeed, while such a transformation is defined for any OPS, the transformed OPS will in
general not be a SL-OPS even if the original OPS was one.
We will rather use \textit{algebraic Darboux transformations}\footnote{A wider class
of these transformations has been extensively used in Quantum Mechanics to
generate new exactly solvable problems from known ones. The subclass of interest
to us in the context of OPS consists of the set of transformations that preserve the polynomial character of the eigenfunctions. This particular class of
Darboux transformations was characterized in \cite{GKM04,GKM04a}.}, also known as rational factorizations, which are defined only for SL-OPS. In these
transformations, it is the second order operator $T$ that needs to be factorized
as the product of two first order operators $T=AB$, and the transformed operator $\hat T$ is obtained
by reversing the order of the factors, $\hat T=BA$. We shall see that by construction, these algebraic
Darboux transformations transform an SL-OPS into another SL-OPS. 

We are now ready to state the main result of our paper:
\begin{thm}\label{thm:Main}
Every $\rX_m$ orthogonal polynomial system for $m\leq 2$ can be obtained
by applying a sequence of at most $m$ Darboux transformations to a classical
orthogonal polynomial system. 
\end{thm}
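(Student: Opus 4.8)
The plan is to reduce the statement to a finite classification and then treat each resulting family explicitly. For $m=0$ there is nothing to prove: by Bochner's theorem \cite{Bo} the SL-OPS with no gaps in their degrees are exactly the classical Hermite, Laguerre and Jacobi systems, and zero Darboux transformations are required. For $m=1$ the claim is the content of the codimension-one characterization of \cite{GKM09a,GKM09b,GKM6}, which realizes the $\rX_1$-Jacobi and $\rX_1$-Laguerre families as the unique codimension-one SL-OPS and exhibits each as a single algebraic Darboux transform of a classical system. So the real work is the case $m=2$, and here I would first prove a Bochner-type classification of all $\rX_2$-OPS and then verify, family by family, that each member of the classification is a two-step Darboux image of a classical system.

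For the classification I would start from the equation $T[y]=p\,y''+q\,y'+r\,y=\lambda y$ with $p,q,r$ rational, and exploit that $T$ must preserve a codimension-two polynomial flag while not preserving the standard flag. Writing the coefficients over a common denominator, the central object is the denominator polynomial whose zeros carry the poles of $p,q,r$; since all singular points are regular and the weight $W$ must be positive with finite moments on an interval $I$ with $pW\to 0$ at the endpoints (conditions (ii)--(iii) of Definition~\ref{def:OPS}), these poles are tightly constrained in both location and order. I would translate flag-preservation into conditions on the leading behavior of $p,q,r$ and on the local exponents at each pole, and show that this forces the denominator to have degree at most two. A case analysis on the degree and factorization type of this denominator, together with the behavior of $p$ at infinity (which distinguishes the Hermite, Laguerre and Jacobi cases), then yields normal forms for the flag and a finite list of admissible operators; imposing the remaining SL conditions gives the complete catalogue of $\rX_2$-families.

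With the classification in hand, the Darboux reduction is comparatively direct. For each admissible $T$ I would seek a first-order rational operator $B$, together with its partner $A$, giving (up to an additive constant) a factorization $T=AB$ such that the intertwined operator $\hat T=BA$ preserves a flag of codimension one; concretely, $B$ is built from a polynomial seed eigenfunction of $T$ chosen so that applying $B$ to the $\rX_2$-flag lands in an $\rX_1$-flag and removes one gap. Since $\hat T$ is then a codimension-one SL operator, it is one of the families already understood, and a second Darboux step (or the $m=1$ result) carries it to a classical system; composing, the original $\rX_2$-OPS is recovered from a classical one by at most two algebraic Darboux transformations. The point requiring care is to check at each step that the factorization is \emph{rational} in the sense of \cite{GKM04,GKM04a}, i.e. that $A$ and $B$ genuinely preserve polynomial character and that the partner $\hat T$ again satisfies all of Definition~\ref{def:OPS} rather than merely the differential-equation condition (i).

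I expect the main obstacle to be the classification step, and specifically the pole-structure analysis: unlike the codimension-one situation there is no ready-made normal-form theory for codimension-two flags, so one must extract the constraints on the denominator polynomial and on the local exponents directly from flag-preservation and regularity, and then control the resulting case explosion. Once the denominator is pinned down to degree at most two and the admissible operators are enumerated, exhibiting the Darboux seed in each case and verifying that the partner operator is again of SL type should be a bounded, if laborious, computation.
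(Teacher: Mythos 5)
Your overall architecture --- first a Bochner-type classification of the $\rX_2$ operators driven by their pole structure, then an explicit Darboux factorization for each surviving family --- is the paper's own (Sections 3 and 6 carry out the classification, Section 4 the factorizations), and your treatment of the $m=0,1$ cases is correct. The classification step you sketch is also essentially what the paper proves: Lemmas \ref{lem:ordergaps} and \ref{lem:Tansatz} show that an exceptional operator preserving a primitive finite-codimension flag can only have simple poles, each pole imposes a first-order condition $y'(b_i)=a_i y(b_i)$ on the flag (Lemma \ref{lem:1stordcond}), hence a codimension-two flag admits at most two poles; the resulting split into the two-pole flags $\cE^{(11)}$ and the one-pole flags $\cE^{(2)}$ (with the extra cubic constraint \eqref{eq:E2constraint} on the moduli) is your ``degree of the denominator'' dichotomy.

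The gap is in your factorization step. First, you insist that the intermediate partner $\hat T$ satisfy all of Definition \ref{def:OPS}; that is neither needed nor generally true --- the theorem only asserts that the two endpoints are OPS, and the paper deliberately establishes the Darboux connection at the level of flags and of the spaces $\cD_2(\cU)$ (Lemmas \ref{lem:statedel} and \ref{lem:dim2Darboux}), explicitly allowing singular intermediate weights. Second, your recipe of building the intertwiner from a \emph{polynomial} seed eigenfunction so that the partner preserves a \emph{codimension-one} flag fails for the most important families. For the stable flag $\cE^{(11)}_{23}$, which carries the known type I/II $\rX_2$-Laguerre and $\rX_2$-Jacobi systems, the lowest flag element has degree two, and deleting it with a first-order operator produces another codimension-two flag; the paper instead exhibits a single \emph{isospectral} intertwiner pair built from a quasi-rational (non-polynomial) factorization eigenfunction and uses dimensional exhaustion of $\cD_2$ to conclude that the specific SL operator factors accordingly. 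Likewise, for the one-pole families $\cE^{(2a)}_{13}$ and $\cE^{(2a)}_{03}$ the state-deleting step lands on a two-pole codimension-\emph{two} flag of type $\cE^{(11)}$, not on a codimension-one flag, and only the second step reaches the classical system. So the codimension does not drop by one per step, and without admitting quasi-rational seeds your reduction stalls precisely on the families the theorem is mainly about.
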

The proof of this theorem is done in several steps. The first step,
carried out in Section \ref{sect:Classif}, consists in the
classification of $\rX_2$ flags and the
determination of the corresponding pole structure for the coefficients of
the second order linear differential operators that preserve
them. This forms the substance of Theorem \ref{thm:X2flag}.  It should
be noted that in contrast with the codimension one case, the canonical
codimension two flags contain free parameters (flag moduli).
In Section 5 we provide the necessary background to select from the
classification of $\rX_2$-flags those that give rise to a well defined SL-OPS.
This selection is performed in Section 6, where
Theorem \ref{thm:x2op} provides the classification of  $\rX_2$ orthogonal
polynomial systems. It is worth noting that this classification contains new
families of $\rX_2$-Laguerre and $\rX_2$-Jacobi polynomials; for example
the new Laguerre-type family of Section 6.5.6 with weight $e^{-x}
x^{1/4} / (4x+3)^4$. The second step in
the proof of Theorem \ref{thm:Main} , which is carried out in Section
\ref{sect:fact}, consists
of the proof of the key property, stated in Theorem \ref{thm:ExDarb},
that all $\rX_1$ and $\rX_2$ operators are related to a classical
operator by a Darboux transformation or a
sequence of two Darboux transformations.

Finally, we will conclude by stating our general, yet-to-be proved,
conjecture, which extentends the result of Theorem \ref{thm:Main} to arbitrary
codimension.
\begin{conj}
Every $\rX_m$ orthogonal polynomial system for any codimension $m$ can be
obtained by applying a sequence of at most $m$ Darboux transformations to a
classical OPS. 
\end{conj}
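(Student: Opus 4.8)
The plan is to lift the three-part architecture of the proof of Theorem~\ref{thm:Main} to arbitrary codimension, replacing each finite computation by a structural argument. Recall that the $\rX_2$ case was settled by (i) classifying the codimension-two flags together with the admissible pole structure of the second-order operators preserving them (Theorem~\ref{thm:X2flag}), (ii) selecting those flags carrying a genuine SL-OPS, and (iii) exhibiting for each surviving case a Darboux factorization back to a classical operator (Theorem~\ref{thm:ExDarb}). For general $m$ a head-on generalization of step (i) appears hopeless, since no normal form for codimension-$m$ polynomial flags is available; so I would instead reverse the logic and argue by induction on $m$, attempting to peel off a single Darboux step at a time.

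Concretely, given an $\rX_m$-OPS with exceptional operator $T$, I would seek a first-order rational factorization $T-\lambda = AB$ whose seed is a suitable (possibly non-normalizable) eigenfunction of $T$, and then show that the partner operator $\hat{T} = BA + \lambda$ again defines an SL-OPS, but of codimension strictly $m-1$. Iterating $m$ times lands on a codimension-zero system, which by the Bochner--Lesky classification \cite{Bo,Lesky} is Hermite, Laguerre, or Jacobi; reading the factorizations backwards then exhibits the original system as a Darboux--Crum transform of a classical family in at most $m$ steps.

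Several points demand care in the inductive step. First, one must guarantee that an admissible seed exists at every stage, i.e. a factorization of $T-\lambda$ into first-order operators with rational coefficients that keeps the eigenfunctions polynomial; this is the algebraic Darboux condition of \cite{GKM04,GKM04a}, and I would need a lemma asserting its solvability whenever the degree sequence has a gap. Second, one must show the codimension genuinely drops, equivalently that the transformation reduces the gap structure of the degree sequence, which amounts to controlling the Wronskian (Maya-diagram) combinatorics of the chosen seeds. Third, the analytic data of Definition~\ref{def:OPS}---positivity of the weight, finiteness of the moments, the endpoint condition $p(x)W(x)\to 0$, and completeness---must survive each factorization, since the partner operator could fail to be essentially self-adjoint on the relevant weighted $\rL^2$ space; this is where the genuinely analytic, as opposed to purely algebraic, content resides.

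The hard part, and the reason this remains a conjecture, is exactly the exhaustiveness claim buried in the inductive step. One direction---that every Darboux--Crum transform of a classical family yields an exceptional SL-OPS---is already understood; it is the converse that requires ruling out \emph{sporadic} exceptional systems admitting no codimension-lowering factorization at all. Excluding such families seems to demand a global structural theorem---plausibly a representation of every $\rX_m$-OPS as a Wronskian of classical eigenfunctions---rather than the flag-by-flag normal-form analysis that sufficed for $m\leq 2$. I expect the analytic control of the weight at the endpoints under iterated factorization, together with this exhaustiveness obstruction, to be the principal difficulties.
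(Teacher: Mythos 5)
You have not produced a proof, and you say as much yourself; to be clear, neither does the paper. The statement you were given is precisely the paper's concluding Conjecture, which is explicitly left open: only the cases $m\leq 2$ are established (Theorem \ref{thm:Main}), and that is done not by the inductive, one-step-at-a-time descent you sketch but by a complete normal-form classification of the $\rX_1$ and $\rX_2$ flags (Theorems \ref{thm:X1flag} and \ref{thm:X2flag}), followed by a case-by-case construction of explicit intertwiners for each flag (Section \ref{sect:fact}, culminating in Theorem \ref{thm:ExDarb}) and a case-by-case selection of the Sturm--Liouville-admissible weights (Theorem \ref{thm:x2op}). The paper itself remarks in the Introduction that extending this classification strategy to all $m$ ``seems like a very challenging task in the absence of a general classification strategy that would lead to normal forms for flags of univariate polynomials for all codimensions,'' which is consistent with your decision to abandon that route.

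Your alternative --- induct on $m$ by finding, for every $\rX_m$ operator, a rational factorization whose partner defines an SL-OPS of codimension $m-1$ --- is a sensible reorganization, and the difficulties you isolate (existence of an admissible seed at every stage, the guarantee that the codimension genuinely drops, the survival of the analytic data of Definition \ref{def:OPS} under each factorization, and the exclusion of sporadic systems admitting no codimension-lowering factorization) are exactly where the substance lies. But naming the obstacles is not the same as overcoming them, and your own closing paragraph concedes that the exhaustiveness claim in the inductive step is unproved. The verdict is therefore simply that the statement remains a conjecture: your proposal is an honest research plan rather than a proof, and there is no argument in the paper for you to have missed or diverged from.
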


\section{Preliminaries and definitions}

\subsection{Polynomial flags}
Let $\cP$ denote the infinite-dimensional space of real, univariate
polynomials, and let $\cP_n\subset \cP$ be the $n+1$ dimensional
subspace of polynomials having degree $n$ or less.  We define the
degree of a finite dimensional polynomial subspace
$U\subset \cP$ to be
\begin{equation}
  \label{eq:degcodimdef}
  \deg U = \max \{ \deg p: p \in \cU \}.
\end{equation}
\begin{definition}
  A \emph{polynomial flag} is an infinite sequence of polynomial
  subspaces $U_1\subset U_2\subset \ldots$, nested by inclusion, such
  that $\dim U_k = k$, and such that $\deg U_k < \deg U_{k+1}$ for all
  $k$.  The \emph{total space} of a polynomial flag is the
  infinite-dimensional polynomial subspace
  \begin{equation}
    \label{eq:cupUk}
    \cU = \bigcup_{k=1}^\infty  U_k.
  \end{equation}
\end{definition}

\begin{definition}
  Let $\cU \subset \cP$ be an infinite dimensional polynomial
  subspace.  A degree-regular basis of $\cU$ is a sequence of
  polynomials $\{p_k\}_{k=1}^\infty$ such that $ \deg p_k < \deg
  p_{k+1}$ and such that $ \cU = \lspan \{ p_k \}$.
\end{definition}
\begin{prop}
  Let $U_1\subset U_2 \subset \cdots$ be a polynomial flag, $\cU$ the
  total space, and $\{ p_k\}_{k=1}^\infty$ a degree regular basis.
  Then, for all $k=1,2,\ldots$, we have
  \[ U_k = \lspan \{ p_1, \ldots, p_k \}.\]
\end{prop}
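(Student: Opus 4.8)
The plan is to show that the two flags—the one built from the abstract subspaces $U_k$ and the one built from the degree-regular basis—coincide term by term, by an induction on $k$ that exploits the matching dimensions and the strictly increasing degree conditions on both objects. The key observation is that $\lspan\{p_1,\ldots,p_k\}$ is a $k$-dimensional subspace of $\cU$ (the $p_i$ are linearly independent precisely because they have strictly increasing degrees), and $U_k$ is also $k$-dimensional, so to prove equality it suffices to prove one inclusion.

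First I would argue that $\lspan\{p_1,\ldots,p_k\}\subseteq U_k$. Because $\{p_j\}$ is a degree-regular basis of $\cU=\bigcup_j U_j$, each $p_j$ lies in some $U_{n_j}$, and since the flag is nested it lies in $U_n$ for every $n\geq n_j$. So the content of this inclusion is the degree bookkeeping: I must show that $\deg p_j\leq \deg U_k$ for $j\leq k$, which forces $p_j\in U_k$. Here is where I would combine the two monotonicity hypotheses. From $\deg U_1<\deg U_2<\cdots$ together with $\dim U_k=k$, one shows by induction that $\deg U_k\geq k-1$; and from $\deg p_1<\deg p_2<\cdots$ with $\deg p_1\geq 0$ one similarly gets $\deg p_j\geq j-1$. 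The cleaner route, which I would take, is to establish that for a degree-regular basis the degrees are forced to interlace with the flag degrees, so that $\deg p_k=\deg U_k$ for all $k$; this matching of degrees is really the crux.

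The cleanest formulation of the induction is as follows. Suppose inductively that $U_{k-1}=\lspan\{p_1,\ldots,p_{k-1}\}$ (the base case $U_1=\lspan\{p_1\}$ holds because both are one-dimensional subspaces of $\cU$ and $p_1\in\cU$). Since $\dim U_k=k=\dim U_{k-1}+1$ and $U_{k-1}\subset U_k$, there is some polynomial $f\in U_k\setminus U_{k-1}$, and $U_k=U_{k-1}\oplus\lspan\{f\}=\lspan\{p_1,\ldots,p_{k-1},f\}$. It remains to identify $\lspan\{p_1,\ldots,p_{k-1},f\}$ with $\lspan\{p_1,\ldots,p_k\}$, for which I would show $p_k\in U_k$ and that $p_k\notin U_{k-1}$, so that $p_k$ can play the role of $f$. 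That $p_k\notin U_{k-1}=\lspan\{p_1,\ldots,p_{k-1}\}$ is immediate from linear independence of the degree-regular basis. That $p_k\in U_k$ is again the degree estimate $\deg p_k\leq\deg U_k$.

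The main obstacle, and the one genuinely substantive step, is pinning down the degree inequality $\deg p_k\le\deg U_k$; everything else is linear algebra driven by the matching dimensions. I expect this to follow by a parallel counting argument: the strict increase $\deg U_1<\deg U_2<\cdots$ means the flag "uses up" at least one new degree at each stage, and a degree-regular basis of the same total space $\cU$ cannot have its degrees grow any slower, since its first $k$ elements must span a $k$-dimensional space whose degree is at least that of $U_k$ and at most that of $U_k$ by the nesting. Making this precise—ideally by first proving the clean equality $\deg p_k=\deg U_k$ and then feeding it back into the inductive step above—is where I would concentrate the care, after which the proposition falls out immediately.
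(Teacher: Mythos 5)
The paper states this proposition without proof (it is one of three preparatory facts in Section 2.1 left to the reader), so there is no official argument to compare against; I can only assess your plan on its own terms. The overall strategy---both sides are $k$-dimensional subspaces of $\cU$, so one inclusion suffices, and $\lspan\{p_1,\ldots,p_k\}\subseteq U_k$ reduces to degree bookkeeping---is the right one. But the step you lean on throughout, namely that ``$\deg p_j\le\deg U_k$ forces $p_j\in U_k$'', is not a consequence of nesting alone: a priori $U_k$ is just \emph{some} $k$-dimensional subspace with $\deg U_k=n_k$, and you must show it contains \emph{every} element of $\cU$ of degree $\le n_k$. Your justification of $\deg p_k=\deg U_k$ (``at most that of $U_k$ by the nesting'') quietly assumes the same fact. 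Since that fact is essentially the whole content of the proposition, leaving it as ``I expect this to follow by a parallel counting argument'' is a genuine gap, and the heuristic you offer for it is circular as written.

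The missing lemma is: every $z\in U_{m}\setminus U_{m-1}$ has degree exactly $n_m=\deg U_m$. Indeed, since $\dim U_m=\dim U_{m-1}+1$, any such $z$ gives $U_m=U_{m-1}\oplus\lspan\{z\}$, so every element of $U_m$ has degree at most $\max(n_{m-1},\deg z)$; because $\deg U_m=n_m>n_{m-1}$ this forces $\deg z\ge n_m$, while $\deg z\le n_m$ is trivial. It follows that $U_k$ consists precisely of $0$ together with the elements of $\cU$ of degree $\le n_k$ (an element $y\in U_m$ with $\deg y\le n_k<n_{k+1}\le\cdots\le n_m$ descends step by step into $U_k$), and that the set of degrees realized by nonzero elements of $\cU$ is exactly $\{n_1<n_2<\cdots\}$. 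On the other hand, since the $p_j$ have pairwise distinct degrees, leading terms cannot cancel in a finite linear combination, so the degrees realized in $\cU=\lspan\{p_j\}$ are exactly $\{\deg p_1<\deg p_2<\cdots\}$; comparing the two strictly increasing enumerations of the same set gives $\deg p_j=n_j$, hence $p_j\in U_k$ for all $j\le k$ by the characterization above. With this inserted, your dimension count closes the proof.
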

\begin{prop}
  Let $\cU\subset \cP$ be an infinite dimensional polynomial subspace.
  Let $\hat{U}_k\subset \cU$ be the unique $k$-dimensional subspace
  having minimal degree. Then $\hat{U}_1 \subset \hat{U}_2 \subset
  \cdots$ is a polynomial flag whose total space is $\cU$.
\end{prop}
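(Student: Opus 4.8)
The plan is to reduce the abstract minimization over all $k$-dimensional subspaces to an explicit description of $\hU_k$ in terms of the filtration of $\cU$ by degree. First I would record the set of degrees actually realized in $\cU$: let $S=\{\deg p : 0\neq p\in\cU\}\subset\Nset$ and list its elements in increasing order as $d_1<d_2<\cdots$. Since $\cU$ is infinite-dimensional, $S$ is infinite, so this sequence is well defined for every $k$. The candidate I would propose is
\[ \hU_k := \cU\cap\cP_{d_k}, \]
and the whole argument rests on showing that this is indeed the unique $k$-dimensional subspace of minimal degree.

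The hard part — really the only nontrivial point — is the counting identity
\[ \dim\bigl(\cU\cap\cP_n\bigr)=\#\{\,i : d_i\le n\,\}\qquad\text{for all }n. \]
I would prove this by putting a basis of the finite-dimensional space $\cU\cap\cP_n$ into degree-echelon form: starting from any basis and repeatedly subtracting multiples to cancel leading terms, one obtains a basis $q_1,\dots,q_m$ with strictly increasing degrees $\deg q_1<\cdots<\deg q_m$. Because the leading degrees of such an echelon basis are forced to be exactly the distinct degrees occurring among nonzero elements of $\cU\cap\cP_n$, namely the $d_i$ with $d_i\le n$, the dimension $m$ equals $\#\{i: d_i\le n\}$. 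In particular, taking $n=d_k$ gives $\dim(\cU\cap\cP_{d_k})=k$, since $d_1<\cdots<d_k\le d_k<d_{k+1}$.

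With this in hand the remaining steps are routine. For the lower bound on degrees, any $k$-dimensional $V\subset\cU$ has an echelon basis with $k$ distinct leading degrees drawn from $S$, so $\deg V=\deg q_k\ge d_k$; and if $\deg V=d_k$ then $V\subseteq\cU\cap\cP_{d_k}$, which is itself $k$-dimensional, forcing $V=\cU\cap\cP_{d_k}=\hU_k$. This simultaneously establishes that the minimal degree is $d_k$, that it is attained, and that it is attained uniquely, justifying the definition of $\hU_k$. The flag axioms then follow at once: $\dim\hU_k=k$ by the counting identity; $\hU_k=\cU\cap\cP_{d_k}\subseteq\cU\cap\cP_{d_{k+1}}=\hU_{k+1}$ since $d_k<d_{k+1}$; and $\deg\hU_k=d_k<d_{k+1}=\deg\hU_{k+1}$. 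Finally, for the total space, every $p\in\cU$ has degree $d_j\in S$ for some $j$ and hence lies in $\cU\cap\cP_{d_j}=\hU_j$, so $\bigcup_k\hU_k=\cU$, completing the argument.
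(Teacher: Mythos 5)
Your argument is correct and complete; note that the paper itself states this proposition (together with its two companions) without proof, so there is no authorial argument to compare against. Your identification $\hat{U}_k=\cU\cap\cP_{d_k}$ via the realized-degree sequence, with the counting identity $\dim(\cU\cap\cP_n)=\#\{i:d_i\le n\}$ established by degree-echelon bases, is the natural route and also supplies the existence and uniqueness that the proposition's wording tacitly presupposes. The only point worth stating explicitly is that the echelon reduction terminates and that the leading degrees of an echelon basis exhaust the degrees realized in the subspace, both of which you invoke correctly.
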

\begin{prop}
  Let $U_1\subset U_2\subset \ldots$ be a polynomial flag and $\cU$
  the corresponding total space.  Let $\hat{U}_k$ be as above.  Then,
  $\hat{U}_k = U_k$.
\end{prop}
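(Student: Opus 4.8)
The plan is to identify each original flag member $U_k$ as the \emph{unique} minimal-degree $k$-dimensional subspace of $\cU$; since the preceding proposition already guarantees that such a minimal-degree subspace exists, is unique, and is exactly what is denoted $\hat{U}_k$, the equality $\hat{U}_k = U_k$ will follow at once. Thus the whole argument reduces to two elementary facts about degrees of polynomial subspaces: that $U_k$ attains the smallest degree possible among all $k$-dimensional subspaces of $\cU$, and that it does so in a way that forces coincidence with $\hat{U}_k$.

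First I would pin down which degrees occur in $\cU$. Writing $d_k := \deg U_k$, the flag hypotheses give $d_1 < d_2 < \cdots$, and since $\dim U_k = k > \dim U_{k-1}$ while every element of $U_{k-1}$ has degree at most $d_{k-1} < d_k$, the space $U_k$ must contain a polynomial $p_k$ of degree exactly $d_k$ lying outside $U_{k-1}$. The polynomials $p_1, p_2, \ldots$ then have strictly increasing degrees, so $\{p_1,\ldots,p_k\}$ is a basis of $U_k$ and $\{p_k\}_{k\ge 1}$ is a degree-regular basis of $\cU$ (as in the degree-regular basis proposition above). Consequently every nonzero $u \in \cU$ has degree equal to some $d_k$, and the set of degrees realized in $\cU$ is exactly $\{d_1, d_2, \ldots\}$.

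Next comes the one genuinely structural step, a degree lemma: any $k$-dimensional polynomial subspace $V$ possesses a basis $v_1, \ldots, v_k$ with strictly increasing degrees $e_1 < \cdots < e_k$, obtained by Gaussian elimination on leading coefficients, and for such a basis a nonzero combination $\sum_i c_i v_i$ has degree $e_j$, where $j$ is the largest index with $c_j \ne 0$. Hence the degrees realized in $V$ are precisely $\{e_1,\ldots,e_k\}$ and $\deg V = e_k$. Applying this to a $k$-dimensional $V \subset \cU$, the pivot degrees $e_1 < \cdots < e_k$ are $k$ distinct members of $\{d_1, d_2, \ldots\}$, forcing $\deg V = e_k \ge d_k$. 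Since $U_k$ itself satisfies $\deg U_k = d_k$, it realizes this minimal value, so it is a minimal-degree $k$-dimensional subspace of $\cU$, and the uniqueness clause of the preceding proposition yields $U_k = \hat{U}_k$.

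The only step requiring care is the degree lemma, together with the observation that attaining the minimal degree $d_k$ does not by itself single out the subspace; rather than re-deriving uniqueness I would simply invoke it from the previous proposition. The inequality $\deg V \ge d_k$ is the substantive content, while the remainder is routine bookkeeping with the degree filtration.
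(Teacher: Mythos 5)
Your argument is correct: the degree lemma (every $k$-dimensional subspace of $\cP$ admits a basis with strictly distinct degrees, whose degree set is exactly the set of pivot degrees) combined with the observation that the degrees realized in $\cU$ are precisely $d_1<d_2<\cdots$ gives $\deg V\geq d_k$ for every $k$-dimensional $V\subset\cU$, and invoking the uniqueness asserted in the preceding proposition then closes the argument. The paper states this proposition (and the two before it) without proof, so there is nothing to compare against; your write-up supplies the standard argument the authors evidently had in mind, and the one point worth flagging is that the uniqueness of the minimal-degree subspace is itself only asserted, not proved, in the preceding proposition --- your degree lemma in fact yields it for free, since the degree set of any minimal-degree $V$ must be exactly $\{d_1,\ldots,d_k\}$, forcing $V\subset U_k$ and hence $V=U_k$ by dimension count.
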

\noindent
The above propositions show that there is a natural bijection between the set of
polynomial flags
and the set of infinite-dimensional polynomial subspaces.  Going
forward it will sometimes be more conveninient to specify the total
space rather than the actual flag.  The identification of the flag and
its total space will be implicitly assumed. We will use the complete notation
$\cU:U_1\subset U_2\subset\dots$ for the flag and its total space, but we will
write only $\cU$ to denote the flag $U_1\subset U_2\subset\dots$ where no
confusion can arise.

\begin{definition}
 Given a polynomial flag $\cU:U_1\subset U_2\subset\dots$ , we define the
\textit{degree sequence} $\{n_k\}_{k=1}^\infty$ and the \textit{codimension
sequence} $\{m_k\}_{k=1}^\infty$ as
\begin{equation}
  \label{eq:mkdef}
  n_k = \deg U_k,\quad   m_k= n_k+1-k.
\end{equation}
where $m_k$ is the codimension of $U_k$ in $\cP_{n_k}$. 
\end{definition}
It is easy to see that
$\{n_k\}$ is strictly increasing while $\{m_k\}$ is
non-decreasing. In this paper we will focus on flags with \emph{finite
codimension}, which means that the total space $\cU$ has finite codimension in
$\cP$, or equivalently, that the codimension sequence $\{m_k\}$ admits
an upper bound $m=\max_k m_k$, which we call the \textit{codimension of the
flag}. As mentioned in the Introduction, one might also characterize $m$ as the
number of gaps in the degree sequence.  We will say that a polynomial flag
has \emph{stable codimension} if $m_k=m$ for all
$k$, or equivalently if the degree sequence satisfies $n_1=m$ and
$n_{k+1}=n_{k}+1$ for all $k\geq 1$.

The simplest of all polynomial flags in the \textit{standard flag}
$\cU_{\text{st}}:\cP_0\subset\cP_1\subset\cP_2\subset\dots$. The total space
for this flag is $\cP$, its degree sequence is $\mathbb N \cup \{0\}$ and it
has stable codimension zero.

\begin{definition}
  We will say that  a second order differential operator 
  \begin{equation} \label{eq:Tydef} T[y(z)] = p(z) y'' + q(z) y' + r(z) y,
  \end{equation}
  is \emph{rational}, if the coefficients $p,q,r$ are rational
  functions of the independent variable $z$ and the prime denotes derivation
with respect to this variable, $y'=\frac{dy}{dz}$. The poles of a rational
  operator $T$ are the poles of $p,q$ and $r$. An operator $T$ with no
  poles is said to be polynomial.  If there is one or more poles, then
  we will refer to  $T$ as \emph{non-polynomial}.
\end{definition}

\begin{definition}
  We say that a polynomial flag $\cU:U_1\subset U_2\subset\dots$ is invariant
under a rational operator $T[y]$ if $T(U_k) \subset U_k$ for all $k$.  We
let $\cD_2(\cU)$ denote the vector space of all second order operators that
  preserve the flag $\cU$.\footnote{We stress that invariance of the whole flag
$\cU:U_1\subset U_2\subset\dots$ is a much stronger condition than the invariance
of the total space $\cU$. For the purpose of this study, we will always require
invariance of the flag, since this condition leads to polynomial
eigenfunctions of the operator.}
\end{definition}

In the analysis of invariant polynomial flags, no generality is lost
by considering only second order operators with rational coefficients, as
evidenced by the following
\begin{prop}
  \label{prop:cramer}
  Let $T[y] = py''+qy'+ry$ be a differential operator such that
  \[T[y_i] = g_i,\; i=1,2,3,\]
  where $y_i, g_i$ are polynomials with
  $y_1, y_2, y_3$ linearly independent.  Then, $p,q,r$ are rational
  functions. 
\end{prop}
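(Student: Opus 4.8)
The plan is to reconstruct the coefficients $p,q,r$ by treating the three equations $T[y_i]=g_i$ as a linear system in the three unknowns $p,q,r$ at each point $z$, and then invoke Cramer's rule. Writing out the system in matrix form gives
\[
\begin{pmatrix} y_1'' & y_1' & y_1 \\ y_2'' & y_2' & y_2 \\ y_3'' & y_3' & y_3 \end{pmatrix}
\begin{pmatrix} p \\ q \\ r \end{pmatrix}
=
\begin{pmatrix} g_1 \\ g_2 \\ g_3 \end{pmatrix}.
\]
The coefficient matrix here is exactly the Wronskian-type matrix $M(z)$ of the three solutions, and its entries are all polynomials. By Cramer's rule, each of $p,q,r$ equals a ratio of determinants, where the numerator is the determinant of $M(z)$ with one column replaced by the polynomial vector $(g_1,g_2,g_3)^{\mathsf T}$, and the denominator is $\det M(z)$ itself. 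Since determinants of matrices with polynomial entries are polynomials, each of $p,q,r$ is manifestly a ratio of two polynomials, i.e.\ a rational function.

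The one step that requires genuine justification is that the denominator $\det M(z) = W(y_1,y_2,y_3)$, the Wronskian of $y_1,y_2,y_3$, is not the zero polynomial; otherwise Cramer's rule does not apply. Here I would use the classical fact that the Wronskian of a finite collection of functions vanishes identically on an interval (for sufficiently smooth functions on a connected domain) precisely when those functions are linearly dependent. Since $y_1,y_2,y_3$ are assumed linearly independent polynomials, their Wronskian cannot be identically zero, so $\det M(z)$ is a nonzero polynomial. This is the crux of the argument, though it is a standard result rather than a hard obstacle.

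With $\det M(z)$ a nonzero polynomial, Cramer's rule yields the explicit rational expressions
\[
p = \frac{\det M_p}{\det M},\qquad q = \frac{\det M_q}{\det M},\qquad r = \frac{\det M_r}{\det M},
\]
valid at every $z$ where $\det M(z)\neq 0$, i.e.\ off a finite set of points. A subtle point worth addressing is that the proposition asserts $p,q,r$ \emph{are} rational functions, not merely that they agree with rational functions off a finite set. To close this gap I would note that the hypothesis fixes $p,q,r$ as genuine coefficient functions of a single operator $T$ valid for all three (indeed for all) eigenpolynomials; the linear system determines their values uniquely wherever $\det M\neq 0$, and since two rational functions agreeing on a cofinite set coincide as rational functions, the formulas above give the global identification. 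Thus $p,q,r$ are rational, completing the proof. The main thing to be careful about is simply the non-vanishing of the Wronskian and the reconstruction of the coefficients from their values on a dense (cofinite) set.
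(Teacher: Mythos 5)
Your proof is correct and follows exactly the paper's argument: the paper's entire proof is to apply Cramer's rule to the same $3\times 3$ linear system, and you supply the two details it leaves implicit (non-vanishing of the Wronskian of linearly independent polynomials, and the identification of $p,q,r$ with rational functions off the zero set of the determinant). One small caution: the Wronskian criterion you cite is false for merely smooth functions, but it does hold for polynomials (and more generally for analytic functions), which is all that is needed here.
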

\begin{proof}
  It suffices to apply Cramer's rule to solve the linear system
  \[ \begin{pmatrix}
    y''_1 & y'_1 & y_1\\
    y''_2 & y'_2 & y_2\\
    y''_3 & y'_3 & y_3
  \end{pmatrix}
  \begin{pmatrix}
    p\\q\\r
  \end{pmatrix}
  =
  \begin{pmatrix}
    g_1\\ g_2\\ g_3
  \end{pmatrix}
  \]
\end{proof}

\begin{definition}
  A polynomial flag is \emph{imprimitive} if it admits a non-trivial
  common factor.  Otherwise, the flag is said to be \emph{primitive}.
\end{definition}

\begin{prop}
  \label{prop:gaugeequiv}
  Let $\cU$ be a primitive flag, let $\mu$ be a polynomial of degree
  $\geq 1$ and let
  \[ \tilde{\cU} = \mu \cU = \{ \mu p : p \in \cU\}.\] be the
  corresponding imprimitive flag.  Suppose that $T[y]$ is a rational
  operator that preserves $\cU$. Then, the gauge-equivalent rational
  operator $\tilde{T} = \mu T \mu^{-1}$ preserves $\tilde{\cU}$.
\end{prop}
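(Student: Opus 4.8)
The proof of Proposition~\ref{prop:gaugeequiv} is essentially a direct verification that gauge conjugation by a polynomial intertwines the action on a flag with the action on its multiple by that polynomial. The plan is to reduce the flag-invariance statement to the single-space statement and then exploit the definition of the gauge-transformed operator.

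First I would unwind the definitions. Let $\{U_k\}$ be the flag with total space $\cU$, so that $\tilde{U}_k := \mu U_k$ gives the corresponding flag for $\tilde{\cU} = \mu\cU$; one should check that multiplication by a fixed polynomial $\mu$ of degree $d\geq 1$ preserves the defining properties of a flag (nesting, $\dim \tilde{U}_k = k$ since multiplication by $\mu$ is injective on $\cP$, and $\deg \tilde{U}_k = \deg U_k + d$ so the degree sequence remains strictly increasing). The key observation is the intertwining identity: for any $y$,
\[
  \tilde{T}[\mu y] = \mu\, T\mu^{-1}[\mu y] = \mu\, T[y].
\]
Thus $\tilde T$ carries $\mu y$ to $\mu(T[y])$. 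Given an element $\tilde p \in \tilde U_k$, write $\tilde p = \mu p$ with $p \in U_k$; then $\tilde T[\tilde p] = \mu\, T[p]$, and since $T$ preserves the flag we have $T[p] \in U_k$, whence $\tilde T[\tilde p] = \mu\,T[p] \in \mu U_k = \tilde U_k$. This proves $\tilde T(\tilde U_k) \subset \tilde U_k$ for every $k$, which is precisely flag invariance.

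The only genuine point requiring care is to confirm that $\tilde T = \mu T \mu^{-1}$ is again a \emph{rational} second order operator, so that the conclusion lands in the correct class of operators. Here I would compute the conjugation explicitly: writing $T[y] = py'' + qy' + ry$, the gauge-transformed operator acts by $\tilde T[u] = \mu\, T[\mu^{-1}u]$, and applying the product rule to $\mu^{-1}u$ shows that $\tilde T$ again has the form $\tilde p\, u'' + \tilde q\, u' + \tilde r\, u$ with
\[
  \tilde p = p,\quad \tilde q = q - 2p\,\frac{\mu'}{\mu},\quad
  \tilde r = r - q\,\frac{\mu'}{\mu} + p\left(\frac{2(\mu')^2 - \mu\mu''}{\mu^2}\right).
\]
Since $p,q,r$ and $\mu$ are rational (indeed $\mu$ polynomial), the coefficients $\tilde p,\tilde q,\tilde r$ are rational, so $\tilde T$ is a rational operator of second order. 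I do not anticipate any serious obstacle here; the primitivity hypothesis on $\cU$ is not needed for this direction of the argument and merely ensures that $\tilde\cU$ is genuinely imprimitive with $\mu$ as a common factor, so its role is to fix terminology rather than to drive the proof. The entire argument is thus a short computation built on the intertwining identity $\tilde T \circ \mu = \mu \circ T$.
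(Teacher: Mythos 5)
Your proof is correct and is exactly the direct verification the paper has in mind (the paper omits the proof entirely, treating the intertwining identity $\tilde{T}\circ\mu=\mu\circ T$ as immediate). Your explicit computation of $\tilde p,\tilde q,\tilde r$ checks out and confirms that $\tilde T$ is again a rational second-order operator.
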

\noindent
Therefore, primitive flags can be regarded as canonical
representatives for the equivalence relation modulo gauge
transformations, and we can restrict our attention to primitive flags in the
classification of invariant polynomial flags. The main object of our study is
then the class defined in the following definition.
\begin{definition}
  A second order operator that preserves a primitive polynomial flag, but
  does not preserve the standard flag will be called \emph{an
    exceptional operator}.  An \textit{exceptional flag} is the
  \emph{maximal} primitive polynomial flag that is preserved by a second
  order exceptional operator.  Exceptional flags and operators of
  finite codimension $m\geq 1$ will henceforth be called $\rX_m$ flags
  and operators.  By contrast, a second order
  differential operator that preserves the standard flag
  $\cP$, will be referred to as a \emph{classical operator}.
\end{definition}

\begin{thm}[Bochner]
  \label{thm:bochner}
  A classical operator has the form
  \[ T[y] = p y'' + q y' + r y \]
  where $p\in \cP_2,\; q\in \cP_1$ are polynomials of the indicated
  degree, and where $r$ is a constant.
\end{thm}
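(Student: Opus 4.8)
The plan is to read off the degree constraints on $p$, $q$ and $r$ by applying $T$ to the first three monomials and exploiting the invariance of the standard flag $\cU_{\text{st}}$. By definition a classical operator $T[y]=py''+qy'+ry$ preserves the standard flag, which means $T(\cP_n)\subseteq\cP_n$ for every $n\geq 0$. I only need the three lowest instances of this condition, namely $T(\cP_0)\subseteq\cP_0$, $T(\cP_1)\subseteq\cP_1$ and $T(\cP_2)\subseteq\cP_2$, and I would extract the conclusion by peeling off the coefficients one at a time, in increasing order of the derivative they multiply.

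First I would apply $T$ to the constant polynomial $1\in\cP_0$. Since $T[1]=r$ and invariance forces $T[1]\in\cP_0$, this immediately gives that $r$ is a constant. Next I would apply $T$ to $z\in\cP_1$, obtaining $T[z]=q+rz$. Invariance gives $q+rz\in\cP_1$, and since $r$ is now known to be constant we have $rz\in\cP_1$, whence $q\in\cP_1$; in particular $q$ is a polynomial of degree at most one. Finally I would apply $T$ to $z^2\in\cP_2$, obtaining $T[z^2]=2p+2qz+rz^2$. Invariance gives $2p+2qz+rz^2\in\cP_2$, and since $q\in\cP_1$ and $r$ is constant we have $qz\in\cP_2$ and $rz^2\in\cP_2$; therefore $2p\in\cP_2$, so $p\in\cP_2$ is a polynomial of degree at most two. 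This establishes the claimed form.

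It is worth noting that although the coefficients $p,q,r$ are a priori only rational (cf. Proposition \ref{prop:cramer}), the argument above produces them as explicit polynomial combinations of the images $T[1]$, $T[z]$ and $T[z^2]$, so their polynomiality is obtained for free rather than assumed. The only point requiring care is the order of the bootstrap: each step uses the conclusion of the previous one to discard the lower-order contributions, so the coefficients must be determined in the order $r$, then $q$, then $p$. No higher instance of the flag invariance is needed, since once $p\in\cP_2$, $q\in\cP_1$ and $r\in\cP_0$ the inclusion $T(\cP_n)\subseteq\cP_n$ holds automatically for every $n$. Consequently there is no genuine obstacle here; the content of Bochner's theorem in this formulation lies entirely in the standard-flag invariance built into the definition of a classical operator.
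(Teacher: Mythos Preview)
Your proof is correct. The paper does not actually supply a proof of this theorem; it is stated as a classical result attributed to Bochner \cite{Bo} and left unproved. Your argument --- evaluating $T$ on $1$, $z$, $z^2$ and using the inclusions $T(\cP_n)\subseteq\cP_n$ for $n=0,1,2$ to bootstrap the degree bounds on $r$, $q$, $p$ in that order --- is the standard elementary proof, and your observation that the a priori rationality of the coefficients is automatically upgraded to polynomiality by this computation is a nice point to make explicit.
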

\begin{prop}
  \label{prop:polyop}
  An exceptional operator is, necessarily, non-polynomial.
\end{prop}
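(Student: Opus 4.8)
The plan is to prove the contrapositive: I will show that any polynomial second-order operator that preserves an infinite polynomial flag is necessarily \emph{classical}, i.e.\ it preserves the standard flag. Since an exceptional operator preserves a primitive flag but, by definition, does \emph{not} preserve the standard flag, it then cannot be polynomial. So I would fix an exceptional operator $T[y]=py''+qy'+ry$, assume for contradiction that $p,q,r$ are all polynomials, and set $a=\deg p$, $b=\deg q$, $c=\deg r$ (with leading coefficients $p_a,q_b,r_c\neq 0$). Let $\cU:U_1\subset U_2\subset\cdots$ be the primitive flag that $T$ preserves, with degree sequence $n_k=\deg U_k$; by the flag axioms $n_k$ is strictly increasing, so $n_k\to\infty$. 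Choosing a degree-regular basis $\{y_k\}$ (which exists by the propositions on polynomial flags), invariance gives $T[y_k]\in U_k$, hence $\deg T[y_k]\le n_k$ for every $k$.

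The heart of the argument is a leading-coefficient computation. Taking $y_k$ monic of degree $n_k$, the three summands $py_k''$, $qy_k'$, $ry_k$ have degrees $a+n_k-2$, $b+n_k-1$, $c+n_k$, so $\deg T[y_k]\le n_k+M$, where $M:=\max(a-2,\,b-1,\,c)$. Only the leading monomial $z^{n_k}$ of $y_k$ can contribute to the top coefficient (of $z^{n_k+M}$), since each lower monomial $z^{j}$ with $j<n_k$ produces degree strictly below $n_k+M$ in every summand. That coefficient is $f(n_k)$, where
\[
  f(n)=\delta_{a-2,\,M}\,p_a\,n(n-1)+\delta_{b-1,\,M}\,q_b\,n+\delta_{c,\,M}\,r_c ,
\]
with $\delta$ the Kronecker delta. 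When $M>0$, the constraint $\deg T[y_k]\le n_k$ forces $f(n_k)=0$ for all $k$.

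I would then rule out $M>0$. Because $f$ vanishes at the infinitely many values $n_k\to\infty$, it must vanish identically as a polynomial in $n$. Reading $f$ off by degree in $n$: its $n^2$-coefficient is $p_a\neq 0$ if $a-2=M$, which is impossible, so $a-2<M$; then its $n^1$-coefficient is $q_b\neq 0$ if $b-1=M$, so $b-1<M$; finally its constant term is $r_c\neq 0$ if $c=M$, so $c<M$. But $a-2<M$, $b-1<M$, $c<M$ together contradict $M=\max(a-2,\,b-1,\,c)$. Hence $M\le 0$, i.e.\ $a\le 2$, $b\le 1$, $c\le 0$, so by Theorem~\ref{thm:bochner} the operator $T$ is classical and preserves the standard flag, contradicting that $T$ is exceptional.

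The argument is essentially elementary once set up, and I do not expect a serious obstacle. The two points that need care are the bookkeeping verifying that the subleading terms of $y_k$ cannot disturb the coefficient of $z^{n_k+M}$, and the logical step that upgrades the pointwise vanishing $f(n_k)=0$ into the identity $f\equiv 0$: this is exactly where the infinitude of the degree sequence of a flag is used, and it is the feature that distinguishes flag invariance from invariance of a single finite-dimensional space. Everything else reduces to the $\max$-type case analysis above.
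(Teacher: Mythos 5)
Your proof is correct. Note first that the paper itself does not argue this proposition internally: it simply cites Lemma 3.1 of \cite{GKM12}. Your argument supplies a complete, self-contained substitute, and it is the standard one for this kind of statement: flag invariance forces $\deg T[y_k]\le n_k$ along a degree-regular basis, the coefficient of $z^{n_k+M}$ is a fixed quadratic $f(n)$ evaluated at $n=n_k$, and the infinitude of the strictly increasing degree sequence upgrades $f(n_k)=0$ to $f\equiv 0$, which is incompatible with $M=\max(a-2,b-1,c)>0$. The case bookkeeping (handling $q\equiv 0$ or $r\equiv 0$ via the convention $\deg 0=-\infty$, and $p\not\equiv 0$ because $T$ is second order) is routine and does not affect the conclusion. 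One small point of logic deserves a remark: you invoke Theorem~\ref{thm:bochner} to pass from the degree bounds $p\in\cP_2$, $q\in\cP_1$, $r$ constant to the conclusion that $T$ is classical, but Bochner's theorem as stated runs in the opposite direction (classical implies those degree bounds). The direction you actually need is the elementary one: an operator with those degree bounds satisfies $T[z^n]=p\,n(n-1)z^{n-2}+q\,nz^{n-1}+rz^n\in\cP_n$, hence preserves every $\cP_n$ and therefore the standard flag. With that substitution the contradiction with exceptionality is immediate, and your proof stands as a more transparent and self-contained alternative to the paper's external citation.
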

\begin{proof} See the proof of Lemma 3.1 in \cite{GKM12}.
\end{proof}
\noindent
Thus, an exceptional operator has poles, but it also has an infinite number
of polynomial eigenfunctions.  When classifying exceptional flags by
increasing codimension, each flag will give rise to new operators not
considered at lower codimension, which justifies the definition above.
Here are some examples to illustrate these definitions
\begin{example}
  \label{ex:x1unstable}
  The flag with basis $\{ 1,z^2,z^3,\ldots,\}$ is exceptional because
  the operator
  \[T[y] = y''-\frac{2y'}{z}\] preserves the flag. The degree sequence is
  $\{0,2,3,\dots\}$ and the codimension sequence is $\{0,1,1,\dots  \}$ so the
  flag has non-stable codimension 1.
\end{example}

\begin{example}
  \label{ex:x1stable}
  By contrast, the flag spanned by $\{z+1, z^2, z^3,\ldots\}$ has a
  stable codimension $m=1$.  This flag is exceptional because it is
  preserved by the operator
  \[ T[y] =y''-2\left(1+\frac{1}{z}\right)y'+\left(\frac{2}{z}\right) y.\]
\end{example}
\begin{example}
  Let $H_k(z)$ denote the degree $k$ Hermite polynomial.  The
  codimension 1 flag spanned by $\{H_1, H_2, H_3,\ldots\}$ is not
  exceptional. The flag is preserved by the operator $T[y] = y''-z
  y'$. However, this operator also preserves the standard flag, which violates
the maximality assumption.
\end{example}
\begin{example}
  The codimension 1 flag spanned by $z,z^2,z^3, \ldots$ is
  preserved by the operator
  \[ \tilde{T}[y] = y'' - \frac{2y'}{z}+ \frac{2y}{z^2}.\]  This is not an
  exceptional flag because it is imprimitive as $z$ is a non-trivial
common factor. In fact, the operator $\tilde T$ is gauge equivalent 
$\tilde{T} = z T z^{-1}$ to the operator $T[y] = y''$  that
  preserves the standard flag.
\end{example}

\begin{example} 
  \label{ex:X2unstable}
  Let
  \begin{equation}
    \label{eq:codim2example}
    \left\{\begin{array}{l} y_{2k-1} = z^{2k-1} - (2k-1)z,\\
        y_{2k} = z^{2k} - k z^2,
      \end{array}\right.
    \qquad k=2,3,4,\ldots
  \end{equation}
  Consider the flag spanned by $\{1,y_3, y_4, y_5,\ldots\}$. The degree sequence
  of the flag is $0,3,4,5,\ldots$ so
  it is a non-stable codimension 2 flag. The flag is preserved by the following
  operators :
  \begin{align}
    T_3[y] &= (z^2-1)y''-2z y',\\
    T_2[y] &= zy''-2\left(1+\frac{2}{z^2-1}\right)y',\\
    T_1[y] &= y'' +z\left(1-\frac{4}{z^2-1}\right)y'.
  \end{align}
  The flag is exceptional, because $T_1$ and $T_2$ do not preserve the
  standard flag.  Since $T_2, T_1$ have 2 distinct
  poles, they do not preserve a codimension 1 flag (see Lemma
  \ref{lem:1stordcond}).
\end{example}

\section{Classification of exceptional codimension 2 polynomial
  flags}\label{sect:Classif}
In this Section we perform a classification of all $\rX_2$-flags up to
affine transformations of the independent variable $z$. We exhibit
degree-regular bases for each of them, and we determine the
$\rX_2$-operators that preserve them.  We begin by introducing the
following flags: {\small
  \begin{align}
    \label{eq:EX1def}
    \cE^{(1)}(a;b) &:= \{ p\in \cP : p'(b) = a p(b) \} \\
    \label{eq:E1def}
    \cE^{(11)}(a_0,a_1;b_0,b_1) &:= \cE^{(1)}(a_0;b_0)\cap \cE^{(1)}(a_1;b_1) \\
    \cE^{(2)}(a_{01}, a_{03}, a_{23};b) &:= \{ p \in \cP : p'(b) =
    a_{01}
    p(b),\; p'''(b) = 3 a_{23} p''(b) + 6a_{03} p(b) \} \label{eq:E2def}
  \end{align}} 
\noindent
The first flag has codimension one and its associated
$\rX_1$-operator will have a simple pole at $z=b$. The second flag has
codimension two and its associated $\rX_2$ operator will have two
simple poles at $b_0$ and $b_1$. The third flag has codimension two
and its associated $\rX_2$ operator will have a simple pole at $b$. The
notation in the superindices is connected to the the order of the
poles of the weight for the exceptional orthogonal polynomial system
based on the flag. This will become clear in Section \ref{sec:XOPs}.
In any case, the sum of superindices must always coincide with the codimension
of the flag.

Some, but not all of the parameters in the above flags can be
normalized by means of an affine transformation.  Thus, unlike the
codimension one case, the $\rX_2$ flags contain free continuous
parameters, which shall be refereed to as \textit{flag moduli}. 
As explained before, the parameters $b, b_0$ and $b_1$ will be the positions of
the poles of the operators. If there is one pole we will set $b=0$ and if
there are two poles we will normalize them as $b_0= 0,\; b_1 = 1$.
Note that any two poles in the complex plane can be transformed into $0$ and
$1$ by a complex affine transformation, so there is no loss of generality
involved in the above normalization.

Below, we describe each of the above flags in terms of a basis. {\small
  \begin{align}
    \label{eq:E1span}
    \cE^{(1)}(a;0) &= \lspan \{ 1+az, z^2, z^3,  z^4,\ldots \} \\
    \label{eq:E11span}
    \cE^{(11)}(a_0,a_1;0,1) &= \lspan\{ z^2((a_1-2)(z-1)+1),
    (z-1)^2((a_0+2)z+1)\}\cup\\ \nonumber
    &\qquad \qquad\qquad\{ z^2(z-1)^2 z^j\}_{j=0}^\infty,\\
    \label{eq:E2span}
    \cE^{(2)}(a_{01},a_{03},a_{23};0) &= \lspan \{1+a_{01} z + a_{03}
    z^3 , z^2+ a_{23} z^3 
    ,z^4,z^5,\ldots\}
  \end{align}} 

Let us first recall the main result of the classification of $\rX_1$-flags
first proved in \cite{GKM09a} (see \cite{GKM12} for a more recent and
streamlined proof).
\begin{thm}
  \label{thm:X1flag}
  Every stable $\rX_1$ polynomial flag is affine-equivalent to
  \[\cE^{(1)}(1;0) = \lspan \{ 1+z,z^2,z^3,z^4,\ldots \}.\]
  Every unstable $\rX_1$ polynomial flag is
  affine-equivalent to the monomial flag 
  \[ \cE^{(1)}(0;0) = \lspan \{ 1,z^2,z^3,z^4,\ldots \} .\]
\end{thm}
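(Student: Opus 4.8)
The plan is to classify all codimension-one exceptional flags by exploiting the defining constraint that an $\rX_1$-operator has exactly one pole, which forces a tight relationship between the flag and a single codimension-one linear condition on $\cP$. First I would set up the basic combinatorics: a stable $\rX_1$ flag has degree sequence $\{1,2,3,\ldots\}$ (one gap, at degree $0$), while an unstable one has a degree sequence $\{0,2,3,\ldots\}$ whose codimension only reaches $1$ from the second step onward. Since $\cU$ has codimension $1$ in $\cP$, the total space is the kernel of a single linear functional $\ell:\cP\to\Rset$. The heart of the argument is to show that this functional, after an affine change of variable $z$, must be of the form $p\mapsto p'(0)-a\,p(0)$, i.e.\ that $\cU=\cE^{(1)}(a;0)$ for some $a$, and then to determine which values of $a$ are realizable and how they reduce under affine maps.

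The key structural input is Proposition \ref{prop:polyop} together with the pole analysis: an exceptional operator is non-polynomial, so $T=py''+qy'+ry$ has at least one genuine pole. I would argue that a codimension-one flag can support an exceptional $T$ with at most a single simple pole, placed at $z=b$, and normalize $b=0$ by an affine transformation. The next step is to feed the low-degree members of a degree-regular basis into the eigenvalue equations $T[y_k]=\lambda_k y_k$ and use Cramer's rule (Proposition \ref{prop:cramer}) to read off the local behavior of $p,q,r$ at the pole. Requiring that $T$ preserve the flag while having a pole at $0$ pins down the single linear condition cutting out $\cU$: writing the lowest-degree element as $1+az+(\text{higher})$, the pole structure forces the higher terms to vanish, yielding the basis $\{1+az,z^2,z^3,\ldots\}$ exactly as in \eqref{eq:E1span}. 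This identifies every codimension-one flag with some $\cE^{(1)}(a;0)$.

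Once the flag is known to be $\cE^{(1)}(a;0)$, the stable versus unstable dichotomy is governed entirely by whether $a=0$ or $a\neq 0$: when $a\neq 0$ the lowest element $1+az$ has degree $1$, giving degree sequence $\{1,2,3,\ldots\}$ and stable codimension; when $a=0$ it degenerates to the constant $1$, giving degree sequence $\{0,2,3,\ldots\}$ and unstable codimension. To finish the stable case I would show that any nonzero $a$ can be rescaled to $a=1$ by the affine substitution $z\mapsto a z$ (which sends $1+az$ to $1+z$ up to scaling and preserves the monomials as a flag), establishing affine-equivalence to $\cE^{(1)}(1;0)$. The unstable case is already in the normal form $\cE^{(1)}(0;0)$, the monomial flag, and here I would double-check that it genuinely supports an exceptional (non-polynomial) operator, which is precisely Example \ref{ex:x1unstable}.

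The main obstacle I anticipate is the rigorous justification that a codimension-one exceptional flag cannot hide a more complicated pole structure — e.g.\ a higher-order pole, or a pole together with a subtler linear constraint that is not simply $p'(b)=ap(b)$. In other words, the delicate part is proving that the single functional defining $\cU$ must be the \emph{first-order} condition $\cE^{(1)}(a;b)$ rather than some other codimension-one subspace of $\cP$, and that exceptionality (non-preservation of the standard flag) excludes the imprimitive degenerate possibilities. I expect this to require a careful local expansion of the operator coefficients at the putative pole, combined with the flag-invariance condition at the two or three lowest degrees, to rule out every alternative; the referenced streamlined proof in \cite{GKM12} presumably organizes exactly this elimination, and I would follow that route, using the explicit bases in \eqref{eq:E1span} to confirm the final normal forms.
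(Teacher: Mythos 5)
Your outline has the right architecture — one pole, normalized to $z=0$; a single first-order condition cutting out the total space; affine rescaling of the flag modulus $a$ — and this is indeed how the classification goes (the paper itself does not reprove Theorem \ref{thm:X1flag}, citing \cite{GKM09a,GKM12}, but its Section \ref{sect:Classif} machinery specializes to exactly this argument). However, as written your proposal leaves the pivotal step unproved, and you say so yourself: you never establish that the unique linear functional cutting out $\cU$ is the \emph{first-order} condition $p\mapsto p'(b)-ap(b)$ at the pole $b$ of $T$, as opposed to some other codimension-one subspace of $\cP$ (a condition involving $p''(b)$, a condition at a non-singular point, a constraint on leading coefficients, etc.). Deferring this to ``a careful local expansion \ldots following \cite{GKM12}'' is a genuine gap, because this elimination \emph{is} the theorem; everything else in your sketch is bookkeeping. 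The appeal to Cramer's rule (Proposition \ref{prop:cramer}) does not do this work: it only certifies that $p,q,r$ are rational, which is already part of the hypothesis.

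The gap is filled by the two lemmas the paper proves for the $\rX_2$ case, which apply verbatim here. Lemma \ref{lem:ordergaps} is the local expansion you anticipate: writing $T=\sum_i T_i$ in degree-homogeneous Laurent components at the pole, it shows the leading term is exactly $T_{-2}$ with $r_{-2}=0$ and that the order set is $I_0(\cU)=\Nset\setminus\{1,3,\ldots,2\alpha-1\}$; codimension one forces $\alpha=1$, so $I_0(\cU)=\{0,2,3,4,\ldots\}$. Lemma \ref{lem:1stordcond} then converts this into the statement that every $y\in\cU$ satisfies $y'(0)=ay(0)$ for a fixed $a$, i.e.\ $\cU\subseteq\cE^{(1)}(a;0)$, and equality follows from codimension counting (or from the maximality built into the definition of an exceptional flag). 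The same Lemma \ref{lem:1stordcond} is also what justifies your unproved assertion that $T$ has at most one pole: two distinct poles impose two independent first-order conditions, forcing codimension at least two (this is precisely how the paper argues in Example \ref{ex:X2unstable}). Lemma \ref{lem:Tansatz} supplies the simplicity of the pole. With those three lemmas inserted where you wrote ``I would argue'' and ``I expect this to require,'' your normalization steps ($z\mapsto z/a$ for $a\neq0$; the monomial flag for $a=0$, which Example \ref{ex:x1unstable} confirms is genuinely exceptional) complete the proof. One further small point: your opening claim that an unstable $\rX_1$ flag necessarily has degree sequence $\{0,2,3,\ldots\}$ is not ``basic combinatorics'' — a priori a codimension-one flag could have its single degree gap anywhere — and is itself only a consequence of the identification $\cU=\cE^{(1)}(a;0)$.
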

\noindent
Note that, as mentioned before, the most general $\rX_1$ flag up to
affine transformations contains no flag moduli. The main result of
this section is the following theorem that describes the situation for
$\rX_2$ flags.
\begin{thm}
  \label{thm:X2flag}
  Up to an affine transformation every $\rX_2$ flag is equivalent to one
  of the following two flags:
\begin{enumerate}
 \item  $\cE^{(11)}(a_0,a_1;0,1)$ 
\item   $\cE^{(2)}(a_{01},a_{03}, a_{23};0)$  subject to the constraint
  \begin{equation}
    \label{eq:E2constraint}
    a_{03} (a_{01} - a_{23}) (6 a_{03} + a_{01} a_{23} (a_{01} +
    a_{23})) = 0
  \end{equation}
\end{enumerate}
\end{thm}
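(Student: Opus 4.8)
The plan is to begin with an $\rX_2$ flag $\cU:U_1\subset U_2\subset\cdots$ together with a preserving second-order operator $T[y]=py''+qy'+ry$; by Proposition \ref{prop:polyop} this $T$ is non-polynomial, so it has at least one pole, and the whole argument is driven by its pole structure, since the poles are precisely where the flag polynomials must satisfy linear conditions for $T[y]$ to stay polynomial. The first step is a dichotomy: the poles of $T$ are simple and occur at either one or two points. The key input is a pole-counting lemma of the kind referenced after Example \ref{ex:X2unstable} (Lemma \ref{lem:1stordcond}), bounding the number of distinct poles by the codimension; since $\codim\cU=2$ at most two poles occur, and a local expansion at each pole rules out higher order. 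An affine change of variable then normalizes the poles to $\{0,1\}$ in the two-pole case and to $\{0\}$ in the one-pole case.

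In the two-pole case the analysis reduces to applying the codimension-one theorem at each pole. The vanishing of the residue of $T[y]$ at a simple pole $b_i$ forces the single condition $y'(b_i)=a_i y(b_i)$, exactly as in the derivation of Theorem \ref{thm:X1flag}. Hence $\cU\subseteq\cE^{(1)}(a_0;0)\cap\cE^{(1)}(a_1;1)=\cE^{(11)}(a_0,a_1;0,1)$, and since both sides have codimension two this inclusion is an equality, for arbitrary moduli $a_0,a_1$. No relation among the moduli appears here, precisely because the two residue conditions are imposed at distinct points and are therefore automatically independent.

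The one-pole case carries the real content. The single residue gives only $y'(0)=a_{01}y(0)$, that is $\cU\subseteq\cE^{(1)}(a_{01};0)$, a codimension-one flag; since $\cU$ has codimension two it is a proper subflag, and the extra missing degree must be forced by flag invariance rather than by pole cancellation. Writing $p=p_2z^2+p_1z+p_0$ (pole-free, as the analysis forces) and $q,r$ with simple poles at $0$, I would impose $T[U_k]\subseteq U_k$ on the lowest part of the flag from \eqref{eq:E2span}: $T[g]=\lambda g$ on the degree-two generator $g$, then $T[z^2+a_{23}z^3]\in U_2$ and $T[z^k]\in U_{k-1}$ for $k=4,5$ (larger $k$ add nothing new). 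These finitely many conditions reproduce the second defining relation $y'''(0)=3a_{23}y''(0)+6a_{03}y(0)$ and so identify $\cU$ with $\cE^{(2)}(a_{01},a_{03},a_{23};0)$.

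The hard part, and the origin of the constraint \eqref{eq:E2constraint}, is to decide for which moduli a genuine (non-scalar, hence exceptional) second-order operator actually exists. After eliminating the eigenvalue $\lambda$ and discarding the decoupled scalar direction $r_0\cdot\mathrm{id}$, the invariance conditions reduce to a homogeneous linear system of seven equations in the seven coefficients $p_2,p_1,p_0,q_1,q_0,q_{-1},r_{-1}$, with entries polynomial in $a_{01},a_{03},a_{23}$; since the relations $q_{-1}=-4p_0$ and $r_{-1}=-a_{01}q_{-1}$ hold and the only pole-free solution is trivial, an exceptional operator exists exactly when the $7\times7$ determinant vanishes. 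The main obstacle is to compute this determinant and to show that it equals, up to a nonzero constant,
\[
a_{03}\,(a_{01}-a_{23})\,\bigl(6a_{03}+a_{01}a_{23}(a_{01}+a_{23})\bigr),
\]
which is exactly \eqref{eq:E2constraint}. I expect the three factors to single out three distinct subfamilies---the unstable degeneration $a_{03}=0$, a coincidence locus $a_{01}=a_{23}$, and a genuinely new branch cut out by the cubic---and a final check on each branch, that the operator does not also preserve a codimension-one flag, is required to confirm that $\cE^{(2)}$ is maximal, so that the flag is $\rX_2$ rather than merely $\rX_1$.
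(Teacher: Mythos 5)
Your overall strategy --- classify by the number of poles of a preserving operator, bound that number by the codimension via the residue conditions $y'(b_i)=a_i y(b_i)$, dispose of the two-pole case by intersecting two codimension-one conditions, and obtain \eqref{eq:E2constraint} in the one-pole case as a determinantal compatibility condition for the existence of a nontrivial second-order operator --- is exactly the paper's (Lemmas \ref{lem:ordergaps}, \ref{lem:Tansatz}, \ref{lem:1stordcond}, \ref{lem:1poleflag} together with Propositions \ref{prop:E11D2}--\ref{prop:E2X2}). The two-pole half of your argument is essentially complete and matches the paper.

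The gap is in the one-pole case, at the step where you ``impose $T[U_k]\subseteq U_k$ on the lowest part of the flag from \eqref{eq:E2span}'': there you are presupposing that the flag is cut out by the residue condition $y'(0)=a_{01}y(0)$ together with a \emph{third-order} condition at the same point, i.e.\ that its order set at $0$ is $\{0,2,4,5,\ldots\}$. That is the hardest structural fact in this half of the theorem, and it does not follow from ``flag invariance'' without an argument: a priori the second unit of codimension could be carried by a linear condition not localized at the pole (say $y'(b)=cy(b)$ at a regular point $b$), or by a gap at a different order. The paper's Lemma \ref{lem:ordergaps} supplies the missing step: expanding $T=\sum_i T_i$ into degree-homogeneous pieces, the leading piece $T_{-d}$ must preserve the monomial set $\{z^i : i\in I_0(\cU)\}$ and, being determined by a quadratic in the exponent, can annihilate at most two monomials; primitivity forces $0\in I_0(\cU)$ and $T_{-d}[1]=0$, so the gaps of $I_0(\cU)$ are confined to the progression $1,1+d,\ldots$ and one deduces $d=2$; combined with maximality and $\codim\cU=2$ this pins down $I_0(\cU)=\Nset\setminus\{1,3\}$, which is precisely the pair of conditions defining $\cE^{(2)}$. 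You allude to a local expansion ``ruling out higher order,'' but you use it only to bound the pole order of the coefficients, not to locate the gaps of the flag at the pole. Secondarily, the factorization of your compatibility determinant into the three factors of \eqref{eq:E2constraint} is asserted rather than carried out; in the paper this is the $3\times 3$ determinant \eqref{eq:E2ccmatrix}, obtained after the leading-order equations fix $p_{-1},q_{-1}$, and that computation is what the whole of case (2) rests on.
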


Before we can address the proof of the above theorem, we need to introduce 
more concepts and establish some key intermediate results.

For a polynomial $y(z)$ and a constant $b\in \Cset$, we define $\ord_b
y\geq 0$ to be the order of $b$ as a zero of $y(z)$.  Let $\cU\subset
\cP$ be a polynomial subspace. For $b\in \Cset$ define
\begin{equation}
  \label{eq:IbUdef}
  I_b(\cU)  = \{ \ord_b y : y\in \cU \}.
\end{equation}
\begin{lem}
  \label{lem:ordergaps}
  Let $T$ be a rational operator that preserves a primitive
  polynomial subspace   $\cU\subset \cP$.
  Let
  \[ T = \sum_{i=-d}^\infty T_i ,\]
  where
  \[T_i[y] = z^i \left(p_i z^2 y'' + q_i z y' + r_i y\right)\] for some
  constants $p_i, q_i, r_i$ be the degree-homogeneous representation
  of $T$ in terms of Laurent series.  If $T$ has a pole at $z=0$,
  then $d=2$, $r_{-2} = 0$, and there exists a positive integer
  $\alpha \geq 1$ such that
  \begin{equation}
    \label{eq:Ibform}
    I_0(\cU) =   \Nset / \{ 1,3,\ldots, 2\alpha-1 \}
  \end{equation}
\end{lem}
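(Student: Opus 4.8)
The plan is to extract everything from the homogeneous decomposition together with the single fact that $T$ sends polynomials to polynomials. Each piece acts on monomials by $T_i[z^n]=\pi_i(n)\,z^{n+i}$, where $\pi_i(n)=p_i n(n-1)+q_i n+r_i$ has degree $\le 2$; thus $T_{-d}\neq 0$ is equivalent to $\pi_{-d}\not\equiv 0$, so $\pi_{-d}$ has at most two roots, and the presence of a pole forces $d\ge 1$. The crucial elementary remark is that for $y\in\cU$ with $k:=\ord_0 y$ and leading coefficient $c_k\neq 0$, the coefficient of $z^{k-d}$ in $T[y]$ is exactly $c_k\,\pi_{-d}(k)$, and $z^{k-d}$ is the lowest power that can occur (the pair $(k,-d)$ being the only contributor). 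Since $T[y]\in\cU\subset\cP$ is again a polynomial, this yields two complementary facts: (A) if $k<d$ then $\pi_{-d}(k)=0$, and (B) if $k\ge d$ and $\pi_{-d}(k)\neq 0$ then $\ord_0 T[y]=k-d$, so $k-d\in I_0(\cU)$ (a ``descent by $d$''). Primitivity gives $0\in I_0(\cU)$, whence (A) at $k=0$ forces $\pi_{-d}(0)=r_{-d}=0$. Finally, since $\cU$ has finite codimension, $I_0(\cU)$ is cofinite: the classes $z^g+\cU$ indexed by the order-gaps $g$ are linearly independent in $\cP/\cU$, so there are at most $\codim_{\cP}\cU$ of them; in particular $I_0(\cU)$ contains arbitrarily large integers in every residue class.

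The crux, and the step I expect to be the main obstacle, is forcing $d=2$; for this I would count the roots of $\pi_{-d}$ by residues modulo $d$. Fix $c\in\{1,\dots,d-1\}$; starting from an arbitrarily large element of $I_0(\cU)$ congruent to $c$ and iterating (B) inside this class, the descent either halts at a non-negative integer root of $\pi_{-d}$ lying in the class, or reaches the representative $c<d$, where (A) forces $\pi_{-d}(c)=0$. Either way $\pi_{-d}$ has a root in each class $c=1,\dots,d-1$; since it also has the root $0$ and at most two roots in all, this gives $d-1\le 1$, i.e.\ $d\le 2$. The value $d=1$ is ruled out because (A) already gave $r_{-1}=0$, so that $T_{-1}[y]=p_{-1}zy''+q_{-1}y'$ is regular at $0$ and, there being no further singular pieces, $T$ would have no pole --- contrary to hypothesis. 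Hence $d=2$ and $r_{-2}=0$, and the second root $\rho$ of $\pi_{-2}$ must lie in the residue class $1$ modulo $2$; that is, $\rho$ is an odd integer.

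It then remains to read off $I_0(\cU)$. First $\rho\ge 3$: if $\rho=1$ then $\pi_{-2}(n)=p_{-2}n(n-1)$, so $q_{-2}=r_{-2}=0$ and $T_{-2}[y]=p_{-2}y''$ is regular, again killing the pole; hence $\alpha:=(\rho-1)/2\ge 1$. With the two roots of $\pi_{-2}$ equal to $0$ and the odd integer $\rho$, the descent pins down $I_0(\cU)$ completely. Descending by $2$ from a large even integer passes only through even values, none of which (apart from the endpoint $0$) is a root, so every even integer lies in $I_0(\cU)$; descending from a large odd integer halts precisely at $\rho$, so every odd integer $\ge\rho$ lies in $I_0(\cU)$. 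Conversely, an odd $m$ with $1\le m<\rho$ would, by (B), descend to the value $1$ (if not already equal to it), where (A) forces $\pi_{-2}(1)=0$, contradicting $\rho\ge 3$; thus $1,3,\dots,\rho-2=2\alpha-1$ are all absent, while the one-sided descent above shows there are no even gaps. Collecting these facts gives exactly $I_0(\cU)=\Nset\setminus\{1,3,\dots,2\alpha-1\}$, as claimed.
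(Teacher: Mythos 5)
Your proof follows the same basic strategy as the paper's: the degree-homogeneous (Laurent) decomposition, the action of each piece on monomials via the quadratic symbol $\pi_i(n)$, the observation that polynomiality of $T[y]$ at the lowest order gives the dichotomy (A)/(B), and a root count for $\pi_{-d}$ to force $d\le 2$. Your organization of that last step --- producing one root of $\pi_{-d}$ in each nonzero residue class modulo $d$ by descending within that class --- is a clean variant of the paper's argument, which instead first shows that $1\notin I_0$ and that the gaps are $1,1+d,\dots$, and only then uses $2\in I_0$ to rule out $d>2$. Your justification of the cofiniteness of $I_0(\cU)$ (linear independence of the classes $z^g+\cU$ in $\cP/\cU$) is also a nice explicit version of what the paper merely asserts.

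The one step that does not hold as written is the exclusion of $\rho=1$. You argue that $\rho=1$ forces $q_{-2}=r_{-2}=0$, so $T_{-2}[y]=p_{-2}y''$ is regular, ``again killing the pole.'' But regularity of $T_{-2}$ does not make $T$ regular: with $d=2$ the zeroth-order coefficient of $T$ is $r(z)=\sum_{i\ge -2}r_iz^i$, and even with $r_{-2}=q_{-2}=0$ the term $r_{-1}z^{-1}$ can still carry a genuine pole. (Contrast with your $d=1$ case, where $T_{-1}$ really is the only possibly singular piece.) The fix is one more application of your basic mechanism, now at order $z^{-1}$: take $y\in\cU$ with $y(0)=c_0\neq 0$, which exists by primitivity; the coefficient of $z^{-1}$ in $T[y]$ is $c_0\pi_{-1}(0)+c_1\pi_{-2}(1)=c_0r_{-1}+c_1q_{-2}$, and since $q_{-2}=0$ when $\rho=1$, polynomiality of $T[y]$ forces $r_{-1}=0$, so $T$ would have no pole after all --- contradiction. (To be fair, the paper's own proof that $1\notin I_0$ contains a parallel implicit step: its ``third annihilated monomial'' presupposes that $I_0$ already has a gap, which ultimately rests on the same kind of computation.) With that patch, your argument is complete and correct.
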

\begin{proof}
  Observe that $T_i$ is  degree-homogeneous, meaning that
  \[ T_i[z^j] = (p_i j(j-1) + q_i j + r_i)z^{i+j}.\] So either $T_i$ 
  annihilates a given monomial $z^j$, or it shifts its degree by $i$.
  A non-zero $T_i$ can annihilate at most two distinct
  monomials, whose exponents $j$ satisfy the quadratic constraint
  \[ p_i j(j-1) + q_i j + r_i = 0 .\] By definition, $i\in I_0$ if and
  only if the flag contains a polynomial of the form $z^i +$ higher
  degree terms.  Since $T$ preserves $\cU$ and since $T_{-d}$ is the
  leading term of the operator, it follows that $T_{-d}$ preserves the
  monomial subspace $\{ z^i : i \in I_0\}$.
  
  For $T$ to have a pole at $z=0$ we must have $d>0$. Since $\cU$ is
  primitive, $0\in I_0$ and therefore $T_{-d}$ must annihilate $z^0=1$. Observe
that the leading order $d$ must also be $d\geq 2$, since $d=1$ would
require that $T_{-1}[1]=0 \Rightarrow r_{-1} = 0$, so operator $T$ would be
polynomial, contrary to the hypothesis. To conclude the proof, we will establish
that $d$ has to be precisely $2$.  Since the flag $\cU$ has finite
codimension, there are only a finite number of gaps (missing integers) in the
set $I_0$. Let $i\notin I_0$ be one such gap, then
  either $i+d\notin I_0$, or $T_{-d}$ annihilates $z^{i+d}$.  Hence,
  $1\notin I_0$ must be a gap.  Otherwise, since $d\geq 2$, $T_{-d}$ would need
to annihilate three monomials: $z^0,z^1$ and at least one higher degree
monomial, which is impossible.
  Thus, for some integer $\alpha \geq 1$, the gaps in the $I_0$ sequence are
  $1,1+d,1+2d,\ldots, 1+d(\alpha-1)\notin I_0$, with $T_{-d}[z^{d\alpha+1}]=0$. 
Note that $T_{-d}$ annihilates $1$ and  $z^{d\alpha+1}$ so it cannot annihilate
any other monomial and therefore the above gaps are the only possible gaps in
$I_0$.
It follows that $2\in I_0$ is \emph{not} a gap. If the leading order was $d>2$
  then $T_{-d}$ would be required to annihilate also $z^2$, which is impossible.
We conclude then that $d=2$ and since $T_{-2}[1]=0$  we must
have $r_{-2} = 0$. The assertions of the lemma are proved.
\end{proof}

The following lemma shows how to decompose a rational second order operator
that preserves a primitive polynomial flag.

\begin{lem}
  \label{lem:Tansatz}
  Let $T$ be a second order rational operator with poles $b_1, \ldots,
  b_N\in \Cset$.  If $T$ preserves a primitive polynomial flag of
  finite codimension, then necessarily it has the form
  \[ T[y]= p_{-2} y'' + (p_{-1} z y'' + q_{-1} y') + (p_0 z^2 y''+ q_0
  y' + r_0 y) + \sum_{i=1}^N c_i \,\frac{y'- a_i y}{z-b_i},\] where 
  $p_i, q_i, r_i\in\mathbb R$ and $a_i, c_i\in\mathbb C$ are constants.
\end{lem}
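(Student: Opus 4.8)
The plan is to reconstruct $T$ from its principal parts, treating the finitely many finite poles $b_1,\dots,b_N$ separately from the behaviour at infinity. Writing $T[y]=py''+qy'+ry$, the asserted form encodes exactly three facts: (i) $p$ is a polynomial of degree at most two and $q,r$ have only simple poles, all located among the $b_i$; (ii) at each $b_i$ the principal parts of $q$ and $r$ are proportional, so that they assemble into a single term $c_i(y'-a_iy)/(z-b_i)$; and (iii) the pole-free part of $T$ is a polynomial operator whose coefficients of $y'',y',y$ have degrees at most $2,1,0$, that is, the three degree-homogeneous groups displayed in the statement. I would establish these in turn, using Lemma~\ref{lem:ordergaps} for the local analysis at the poles, primitivity to couple the residues, and a growth estimate driven by finite codimension for the part at infinity.

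For the singular structure I fix a pole and translate it to the origin. Lemma~\ref{lem:ordergaps} then applies to the translated operator and gives $d=2$ and $r_{-2}=0$; reading off the coefficients of $y'',y',y$ in the leading Laurent block, this is precisely the statement that $p$ is regular at that point while $q$ and $r$ each have at most a simple pole there. Since this holds at every pole and $p,q,r$ are rational, $p$ is a polynomial and $q,r$ have only simple poles, all among the $b_i$. To couple the residues I use that the flag is primitive and of finite codimension: there must exist a flag element $y$ with $y(b_i)\neq 0$, for otherwise $(z-b_i)$ would divide every element of $\cU$ and the flag would be imprimitive. Since $T[y]$ is a polynomial, the residue of $py''+qy'+ry$ at $b_i$, namely $(\operatorname{Res}_{b_i}q)\,y'(b_i)+(\operatorname{Res}_{b_i}r)\,y(b_i)$, must vanish. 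As this holds for all such $y$, it forces $\operatorname{Res}_{b_i}q\neq 0$ (so no pole of $r$ is left unmatched by a pole of $q$) and defines the constant $a_i:=-\operatorname{Res}_{b_i}r/\operatorname{Res}_{b_i}q$; with $c_i:=\operatorname{Res}_{b_i}q$ the two principal parts combine into $c_i(y'-a_iy)/(z-b_i)$, which is (ii).

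For the part at infinity I use that $T$ preserves a flag of finite codimension: for every large $n$ the flag contains an element $v_n$ of degree exactly $n$, and $T[v_n]$ must again have degree at most $n$. Decomposing the pole-free part of $T$ into degree-homogeneous pieces and noting that the simple-pole terms only lower the degree, the coefficient of $z^{n+j}$ in $T[v_n]$ for a degree-raising shift $j\ge 1$ equals a fixed quadratic in $n$ times the leading coefficient of $v_n$; since it vanishes for all large $n$ it vanishes identically, forcing the $y'',y',y$ coefficients of the pole-free part to have degrees at most $2,1,0$. Collecting these three degree-homogeneous groups together with the pole terms from (ii) yields the stated decomposition. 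I expect the main obstacle to be the residue-coupling step at a pole that by itself accounts for the full codimension (the $\cE^{(2)}$-type singularity): there one must check, using the finer gap structure of Lemma~\ref{lem:ordergaps} with $\alpha>1$, that the additional local constraints imposed on the flag neither promote the pole to higher order nor conflict with the single simple-pole residue relation derived above.
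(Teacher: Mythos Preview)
Your proposal is correct and follows essentially the same strategy as the paper: decompose $T$ into a polynomial part and principal parts at the finite poles, invoke Lemma~\ref{lem:ordergaps} at each pole to force $d=2$ and $r_{-2}=0$ (so $p$ is regular and $q,r$ have only simple poles there), and use a degree argument to bound the polynomial part. The paper phrases the degree bound as ``the $T^{(i)}$ all lower degree, hence $T^{(0)}$ cannot raise,'' which is the same content as your quadratic-in-$n$ vanishing argument.

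One small remark: your explicit residue-coupling step (primitivity gives $y$ with $y(b_i)\neq 0$, and vanishing of the residue of $T[y]$ forces $\operatorname{Res}_{b_i}q\neq 0$ and $a_i=-\operatorname{Res}_{b_i}r/\operatorname{Res}_{b_i}q$) is a point the paper leaves implicit, relying on the $\alpha\geq 1$ conclusion of Lemma~\ref{lem:ordergaps} to rule out $q_{-2}=0$. Your concern at the end about the $\cE^{(2)}$-type pole is unnecessary: your residue argument only needs some flag element nonvanishing at $b_i$, which primitivity already supplies regardless of how many gaps that pole contributes to $I_{b_i}(\cU)$; nothing in the higher-$\alpha$ case promotes the pole order or interferes with the single residue relation.
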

We see therefore that an exceptional operator must have rational coefficients
that can only contain simple poles.

\begin{proof}
  We decompose the given operator as
  \[ T = \sum_{i=0}^N T^{(i)} \] where $T^{(0)}$ is a polynomial
  operator and where
  \[ T^{(i)}[y] = \frac{r^{(i)}_{-1} y}{z-b_i} + \frac{q^{(i)}_{-2}
    (z-b_i)y' + r^{(i)}_{-2}y}{(z-b_i)^2} + \sum_{j=3}^{d_i}
  \frac{p_{ij} (z-b_i)^2 y'' + q_{ij} (z-b_i) y' + r_{ij}
    y}{(z-b_i)^j} \] for some positive integer $d_i\geq 1$ and
  constants $p_{ij}, q_{ij}, r_{ij}$.

  Let $\cU$ be the total space of the primitive flag preserved by $T$.
  Since $T(\cU)\subset \cP$, it follows that $T^{(i)}(\cU) \subset
  \cP$ for every $i=0,1,\ldots, N$.  By construction, the operators
  $T^{(1)}, \ldots, T^{(N)}$ all lower degrees.  Since $T$ preserves
  an infinite flag, it cannot have a degree raising part.  Therefore,
  $T^{(0)}$ has the form
  \[ T^{(0)}[y]= p_{-2} y'' + (p_{-1} z y'' + q_{-1} y') + (p_0 z^2
  y''+ q_0 y' + r_0 y) \]
  
  Expanding the operator coefficients as Laurent series in $z-b_i$, we
  apply Lemma \ref{lem:ordergaps} to conclude that $d_i=2,
  r^{(i)}_{-2} = 0$ for all $i=1\ldots, N$.  The desired conclusion
  has been established.
\end{proof}
\noindent
Note that if $b_i$ is a real pole, then the constants $a_i$ and $c_i$ must
also be real since the flag is real too.
The next lemma shows that for every pole $b_i$ of an exceptional differential
operator, the elements of its invariant flag must satisfy a first order
differential constraint at that pole.
\begin{lem}
  \label{lem:1stordcond}
  Let $T[y]$ be a second order rational operator with poles $b_1,
  \ldots, b_N$ that preserves a primitive flag $\cU$ of finite codimension.
  Then, there exist constants $a_1,\ldots, a_N$ such that the elements
  of $y\in\cU$ obey 1st order differential constraints of the form
  \[ y'(b_i) = a_i y(b_i),\quad i=1,\ldots, N. \]
\end{lem}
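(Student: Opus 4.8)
The plan is to exploit the explicit form of $T$ furnished by Lemma~\ref{lem:Tansatz} together with the defining property that $T$ preserves the flag, namely that $T[y]$ is a genuine polynomial for every $y\in\cU$. Since $T[y]\in\cP$, it can have no pole at any of the points $b_1,\ldots,b_N$, so I will extract the asserted constraints by examining the principal part of the Laurent expansion of $T[y]$ at each pole $b_i$ and forcing it to vanish.

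First I would invoke Lemma~\ref{lem:Tansatz} to write
\[
T[y]= p_{-2} y'' + (p_{-1} z y'' + q_{-1} y') + (p_0 z^2 y''+ q_0 y' + r_0 y) + \sum_{i=1}^N c_i \,\frac{y'- a_i y}{z-b_i},
\]
where the constants $a_i, c_i$ are precisely those that will appear in the statement. The key observation is that, for a fixed index $i$, every summand other than the $i$-th pole term is regular at $z=b_i$: the polynomial part is entire, and each term $c_j(y'-a_j y)/(z-b_j)$ with $j\neq i$ has its only singularity at $b_j\neq b_i$. Hence the entire principal part of $T[y]$ at $b_i$ is contributed by the single term $c_i\,(y'-a_i y)/(z-b_i)$.

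Next I would compute that principal part. Writing $g(z)=y'(z)-a_i y(z)$, which is itself a polynomial, its Taylor expansion at $b_i$ gives
\[
c_i\,\frac{g(z)}{z-b_i}=\frac{c_i\,g(b_i)}{z-b_i}+c_i\,g'(b_i)+\cdots,
\]
so the coefficient of $(z-b_i)^{-1}$ in $T[y]$ equals $c_i\,g(b_i)=c_i\bigl(y'(b_i)-a_i y(b_i)\bigr)$. Because $T[y]$ is a polynomial, and therefore has vanishing principal part at $b_i$, this residue must be zero. Finally, since $b_i$ is by hypothesis an actual pole of $T$, the coefficient $c_i$ is nonzero, and dividing through yields $y'(b_i)=a_i y(b_i)$. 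As $y\in\cU$ was arbitrary, and $i\in\{1,\ldots,N\}$ was arbitrary, this establishes the claimed first-order constraints simultaneously at all the poles.

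I do not anticipate a genuine obstacle: the substantive work is already packaged in Lemma~\ref{lem:Tansatz}, and the present lemma reduces to a residue computation. The only point deserving a word of care is the nonvanishing of $c_i$, which is guaranteed precisely because $b_i$ is listed as a pole of $T$ (were $c_i=0$, the point $b_i$ would be a removable singularity rather than a pole). One should also note, as the remark following Lemma~\ref{lem:Tansatz} records, that when $b_i$ is a real pole the constant $a_i$ is automatically real, so the resulting constraints are genuine real conditions on the real flag $\cU$.
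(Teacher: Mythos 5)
Your proof is correct, but it takes a different route from the paper's. The paper derives the constraint purely from the order-gap structure of the flag: by Lemma \ref{lem:ordergaps}, the set $I_{b_i}(\cU)$ contains $0$ but not $1$, so $\cU$ contains an element of the form $1+a_i(z-b_i)+O\bigl((z-b_i)^2\bigr)$ but no element of exact order $1$ at $b_i$; consequently every $y\in\cU$ either matches that normalized element to first order (up to scale) or vanishes to order $\geq 2$ at $b_i$, and in either case satisfies $y'(b_i)=a_i y(b_i)$. You instead take the explicit decomposition of Lemma \ref{lem:Tansatz} as given and read the constraint off from the vanishing of the residue of $T[y]$ at $b_i$, using $c_i\neq 0$ (which is indeed forced by $b_i$ being an actual pole). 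Both arguments are sound and non-circular, since Lemma \ref{lem:Tansatz} precedes this lemma and does not rely on it. The trade-off is that your argument leans on the full strength of the decomposition lemma (whose proof itself invokes Lemma \ref{lem:ordergaps}), and it exhibits the constants $a_i$ as data attached to the operator; the paper's argument uses only Lemma \ref{lem:ordergaps} and shows that the $a_i$ are intrinsic to the flag itself, independent of which second-order operator happens to preserve it. Your residue computation is arguably the more transparent of the two once Lemma \ref{lem:Tansatz} is in hand.
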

\begin{proof}
  By Lemma \ref{lem:ordergaps}, for each $i=1,\ldots, N$ the total
  space $\cU$ contains a polynomial of the form
  \[ y^{(i)}_0(z) = 1 + a_i (z-b_i) + O\big((z-b_i)^2\big) ,\]
  but does not contain an element of the form
  \[ (z-b_i) + O\big((z-b_i)^2\big).\]
  Therefore, every $y\in\cU$ either starts as $y^{(i)}_0(z)$ or at degree $2$
in $(z-b_i)$ so in any case it obeys the constraint $y'(b_i) = a_i
  y(b_i)$.
\end{proof}

At this point, it becomes necessary to describe and analyze certain
degenerate subclasses of the $\cE^{(11)}$ and $\cE^{(2)}$ flags defined in
\eqref{eq:E1def}-\eqref{eq:E2def}.
The distinguishing property of these subclasses is the first two elements
of the degree sequence of the  flag.  Thus, when we write $\cE_{ij}$
where $0\leq i<j\leq 3$ we are referring to a codimension two flag whose degree
sequence is $\{i,j,4,5,6,\dots\}$. 
The generic case is the stable codimension two flag $\cE_{23}$, which starts
at degree $2$ and has polynomials of all degrees $k\geq 2$. We analyze each
of the above 3 families in more detail, and then give a proof of Theorem
\ref{thm:X2flag}.

\begin{prop}
  \label{prop:E11flags}
  The $\cE^{(11)}$ flags are classified into the following subclasses,
  according to their degree sequence:
\begin{subequations}\label{eq:E11cases}
  \begin{align}
    \label{eq:E1123dgbasis}
    \cE^{(11)}_{23} &= \cE^{(11)}(a_0, a_1;0,1),\quad\text{with} \quad
a_1a_0+a_1-a_0 \neq
    0\\ \nonumber
    &= \lspan\{ (a_0 a_1 + a_1 - a_0) z^2 + (2 -a_1) (a_0z +1),
     (z-1)^2(1+(2+a_0)z)\}\, \cup \,  \\ \nonumber
    &\qquad\qquad\lspan\{ z^2(z-1)^2 z^j\}_{j=0}^\infty\\
    \label{eq:E1113dgbasis}
    \cE^{(11)}_{13} &= \cE^{(11)}\left(a_0,
\frac{a_0}{1+a_0};0,1\right),\text{ with} \quad a_0\neq -1,\text{ and }
(a_0, a_1) \notin \{(0,0), (-2,2)\}\\ \nonumber &= \lspan\{ a_0 z
    + 1, (z-1)^2(1+(2+a_0)z)\} \, \cup \, \lspan \{ z^2(z-1)^2
    z^j\}_{j=0}^\infty\\
    \label{eq:E1103dgbasis}
    \cE^{(11)}_{03} &= \cE^{(11)}(0,0;0,1)= \lspan\{ 1,
    (z-1)^2(1+2z)\} \,\cup \, \lspan\{ z^2(z-1)^2 z^j\}_{j=0}^\infty\\ 
    \label{eq:E1112dgbasis}
    \cE^{(11)}_{12} &= \cE^{(11)}(-2,2;0,1)= \lspan\{ 2z-1,z^2\} \,\cup \,
\lspan\{
    z^2(z-1)^2 z^j\}_{j=0}^\infty
  \end{align}
\end{subequations}
\end{prop}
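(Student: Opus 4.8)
The plan is to reduce the classification to an analysis of the low-degree part of the flag, since the high-degree part is universal. I would first observe that every monomial multiple $z^2(z-1)^2 z^j$ has a double zero at both $z=0$ and $z=1$, and hence automatically satisfies $p'(0)=a_0 p(0)$ and $p'(1)=a_1 p(1)$ for \emph{every} choice of $a_0,a_1$. Thus the entire family $\{z^2(z-1)^2 z^j\}_{j=0}^\infty$ lies in $\cE^{(11)}(a_0,a_1;0,1)$ and supplies all degrees $\geq 4$. Consequently the degree sequence is determined solely by the low-degree subspace $V:=\cE^{(11)}(a_0,a_1;0,1)\cap\cP_3$, and the problem becomes: for which $(a_0,a_1)$ does $V$ realize each admissible pair of low degrees.

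Next I would pin down $\dim V$. The defining conditions are the kernels of the two functionals $\ell_0(p)=p'(0)-a_0 p(0)$ and $\ell_1(p)=p'(1)-a_1 p(1)$, and a short check against the monomials $1,z,z^2,z^3$ (using $\ell_0(z^2)=\ell_0(z^3)=0$ together with $\ell_1(z^2)=2-a_1$, $\ell_1(z^3)=3-a_1$) shows that no nontrivial combination $\lambda_0\ell_0+\lambda_1\ell_1$ annihilates all of $\cP_3$, for any $(a_0,a_1)$. Hence $\ell_0,\ell_1$ are independent on $\cP_3$ and $\dim V = 4-2 = 2$ identically. Since a $k$-dimensional polynomial subspace admits a degree-regular basis and therefore realizes exactly $k$ distinct degrees, $V$ realizes precisely two degrees $i<j$ in $\{0,1,2,3\}$; the degree sequence is then $\{i,j,4,5,\dots\}$, confirming codimension two, and it remains to compute $(i,j)$ as a function of $(a_0,a_1)$.

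For the computation I would take the explicit generators of (\ref{eq:E11span}), $f_1=z^2((a_1-2)(z-1)+1)$ and $f_2=(z-1)^2((a_0+2)z+1)$, verify directly that they lie in $V$ (each has a double root at one endpoint and the correct first-order data at the other), and note that they form a basis of $V$ since $f_1(1)=f_2(0)=1$ while $f_1(0)=f_2(1)=0$. Their leading terms are governed by the coefficients $a_1-2$ and $a_0+2$: generically $\deg f_1=\deg f_2=3$, while $f_1=z^2$ when $a_1=2$ and $f_2=(z-1)^2$ when $a_0=-2$. The decisive object is the reduced combination $g=(a_0+2)f_1-(a_1-2)f_2$, whose $z^3$ term cancels and whose $z^2$-coefficient is $\kappa:=a_0 a_1+a_1-a_0$; this $g$ is exactly the degree-$\le2$ generator displayed in (\ref{eq:E1123dgbasis}). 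Reading off degrees immediately gives: if $\kappa\neq0$ then $\{i,j\}=\{2,3\}$, i.e.\ $\cE^{(11)}_{23}$, and if $\kappa=0$ then $g$ drops in degree and one enters the degenerate stratum.

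The last paragraph handles the locus $\kappa=0$, where the care is needed. There $g=a_0(2-a_1)z+(2-a_1)$, and within this locus $a_1=2$ forces $a_0=-2$ and conversely, so away from the single point $(a_0,a_1)=(-2,2)$ one has $2-a_1\neq0$ and $g\neq0$: it is linear exactly when $a_0\neq0$, yielding $\{i,j\}=\{1,3\}$ and the relation $a_1=a_0/(1+a_0)$ of $\cE^{(11)}_{13}$, and constant when $a_0=0$ (forcing $a_1=0$), yielding $\{i,j\}=\{0,3\}$, i.e.\ $\cE^{(11)}_{03}$. At the remaining point $(a_0,a_1)=(-2,2)$ the reduction $g$ vanishes identically; here $f_1=z^2$ and $f_2=(z-1)^2$ are both quadratic and their difference is $2z-1$, so $\{i,j\}=\{1,2\}$ and the flag is $\cE^{(11)}_{12}$. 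The same argument rules out the sequences $\{0,1\}$ and $\{0,2\}$: a nonzero constant lies in $V$ only if $a_0=a_1=0$, which annihilates the first-order data at both endpoints and leaves degree $3$ as the sole other available degree, forcing $\cE^{(11)}_{03}$. I expect the main obstacle to be precisely this bookkeeping — ensuring the four strata are exhaustive and mutually exclusive and correctly peeling the two exceptional points $(0,0)$ and $(-2,2)$ off the generic $\kappa=0$ line — rather than any single hard estimate.
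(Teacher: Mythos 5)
Your proof is correct and follows essentially the same route as the paper, which simply asserts the result ``by direct inspection of \eqref{eq:E11span}'': you inspect the two low-degree generators from that span, form the reduction $g=(a_0+2)f_1-(a_1-2)f_2$ whose $z^2$-coefficient is $a_0a_1+a_1-a_0$, and read off the degree sequence in each stratum. The added dimension count via the functionals $\ell_0,\ell_1$ and the explicit exclusion of the sequences $\{0,1\}$ and $\{0,2\}$ are welcome details that the paper leaves implicit.
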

\begin{proof}
  This follows by direct inspection of \eqref{eq:E11span}.
\end{proof}
\noindent 
Also note that $\cE^{(11)}_{12}$ can be obtained as a limit of
$\cE^{(11)}_{23}$ by setting $a_0=t-2$, $a_1=t+2$ and then sending
$t= 0$. The flags in Proposition \ref{prop:E11flags} are all $\rX_2$-flags
whose operators have two simple poles at $0$ and $1$. In the following
Proposition we provide a basis for the $\cD_2$-spaces of operators that
preserve them.

\begin{prop}
  \label{prop:E11D2}
  The generic flag $\cE^{(11)}_{23}$ has a 2-dimensional $\cD_2$ space.  The
  non-stable flag $\cE^{(11)}_{13}$ has a
  3-dimensional $\cD_2$, while $\cE^{(11)}_{03}$ and $\cE^{(11)}_{12}$ both have
a 4-dimensional $\cD_2$.  The most general second order operator that preserves
each of these flags is shown below (and therefore a basis of their
$\cD_2$-space). The symbols $a_0, a_1$ denote the flag
  moduli while the symbols $c, c_0, c_1, q_0, \lambda$ denote free constants
appearing in the operator
 {\small
\begin{subequations}
    \begin{align}
      \label{eq:T1123def}
      T^{(11)}_{23}[y] &= c\left(-\frac{1}{2} z^2 (a_0-a_1) (a_0-a_1+4)-z (a_0
        a_1-a_0-a_1^2+3 a_1)-\frac{a_1^2}{2}+a_1\right) y''+\\ \nonumber
      &\qquad +c\left(z ((a_0-a_1)(a_0 a_1 -2 a_0 +2)+2 a_0^2)+(a_0-1)
        a_1^2-(a_0-3) a_1+a_0 (a_0+1)\right) y'+ \\ \nonumber &\qquad
      +\frac{ca_0 (a_0+2)}{z-1}(y'-a_1 y)+\frac{c(a_1-2)
        a_1}{z}(y'-a_0 y) + \lambda y\\
      \label{eq:T1113def}
      T^{(11)}_{13}[y] &= \left( -(c_0+c_1)\frac{z^2}{2}+
        c_0\left(z-\frac{1}{2}\right)\right) y''+\left(\left(a_1c_1- a_0
          c_0\right) z+(a_0-1) c_0+c_1\right) y'+\\ \nonumber &\qquad
      \frac{c_0}{z}(y'-a_0 y)+\frac{c_1}{z-1}(y'-a_1 y) + \lambda
      y,\quad
      a_1 = \frac{a_0}{a_0+1} 
    \end{align}
    \begin{align}
      \label{eq:T1103def}
      T^{(11)}_{03}[y] &=\left( -\left(q_0+c_0 +c_1\right)\frac{z^2}{2}+
        \frac{q_0z}{2}+c_0\left(z-\frac{1}{2}\right)\right)
      y''+\left(q_0 \left(z-\frac{1}{2}\right)-c_0+c_1\right) y'+\\
      \nonumber &\qquad +\left(\frac{c_0}{z}+\frac{c_1}{z-1}\right) y' +
      \lambda
      y\\ \nonumber
      \label{eq:T1112def}
      T^{(11)}_{12}[y] &= \left( (c_0+c_1-q_0)\frac{z^2}{2}+
        \left(\frac{q_0}{2}-c_1\right)z-\frac{c_0}{2}\right)
      y''+\left(q_0 \left(z-\frac{1}{2}\right)-2
        c_0+2 c_1\right) y'+\\
      &\qquad+\frac{c_0}{z}(y'+2y)+\frac{c_1}{z-1}(y'-2y) + \lambda y
    \end{align}
\end{subequations}
  } 
\end{prop}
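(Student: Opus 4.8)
The plan is to reduce the determination of $\cD_2$ for each $\cE^{(11)}$ flag to a single finite homogeneous linear system in the coefficients of the operator, and then to read off its dimension by computing the rank of that system as the flag moduli vary. First I would invoke Lemma \ref{lem:Tansatz} and Lemma \ref{lem:1stordcond}: since these flags have exactly two poles, located at $z=0$ and $z=1$, every candidate operator has the eight-parameter form
\[ T[y] = (p_0 z^2 + p_{-1}z + p_{-2})\,y'' + (q_{-1}+q_0)\,y' + r_0\,y + c_0\,\frac{y'-a_0 y}{z} + c_1\,\frac{y'-a_1 y}{z-1}, \]
where the residue directions are forced to coincide with the flag moduli $a_0,a_1$. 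This ansatz is already degree-non-increasing, and each simple-pole term is regular on $\cU = \cE^{(11)}(a_0,a_1;0,1)$ because its numerator vanishes at the pole by the defining constraint $y'(b)=a_b y(b)$; hence $T[\cU]\subset\cP$ automatically, and the only remaining requirement is $T[\cU]\subseteq\cU$.

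Second, I would convert $T[\cU]\subseteq\cU$ into finitely many scalar equations. Writing $\cU=\ker L_0\cap\ker L_1$ with $L_b(p)=p'(b)-a_b p(b)$, and using that $T$ does not raise degree, together with $\cU\cap\cP_{n_k}=U_k$, flag preservation is equivalent to $L_0(T[p])=L_1(T[p])=0$ for every $p\in\cU$. Because $T$ is second order and $L_b$ is first order, $L_b(T[p])$ depends only on the $3$-jet of $p$ at $b$, and modulo the constraint $p'(b)=a_b p(b)$ it is a linear form in $\big(p(b),p''(b),p'''(b)\big)$. The $3$-jet map $\cU\to\Rset^3$ is surjective for each flag (checked directly on the explicit bases of Proposition \ref{prop:E11flags}, e.g.\ from $z^2(z-1)^2$, $z^3(z-1)^2$ and a low-degree generator at $b=0$), so each pole contributes exactly three conditions, obtained by setting the coefficients of $p(b)$, $p''(b)$ and $p'''(b)$ to zero. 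This yields six linear equations in the eight parameters $p_{-2},p_{-1},q_{-1},p_0,q_0,r_0,c_0,c_1$; note that $r_0$ drops out of all of them, reflecting the fact that $\lambda y$ is always a free direction.

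The crux --- and the step I expect to be the main obstacle --- is the rank computation. I would assemble the $6\times 8$ coefficient matrix and determine $\dim\cD_2 = 8 - \operatorname{rank}$ as a function of $(a_0,a_1)$. For generic moduli the six equations are independent, giving $\dim\cD_2=2$ for the stable flag $\cE^{(11)}_{23}$. The rank then drops precisely on the special loci where the flag changes type according to Proposition \ref{prop:E11flags}: I expect the relation $a_1=a_0/(a_0+1)$ (equivalently $a_0 a_1+a_1-a_0=0$), which defines $\cE^{(11)}_{13}$, to produce a single dependency and hence $\dim\cD_2=3$, and the two isolated values $(a_0,a_1)=(0,0)$ and $(a_0,a_1)=(-2,2)$, defining $\cE^{(11)}_{03}$ and $\cE^{(11)}_{12}$, to each produce two dependencies and hence $\dim\cD_2=4$. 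The delicate part is that these dependencies are genuine linear relations among the six forms rather than the mere vanishing of individual equations (only for $(0,0)$ does an equation become outright vacuous), so the argument requires careful tracking of how the system collapses at each locus; matching the degeneracy loci exactly with the four subclasses is the technical heart.

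Finally, I would solve the system on each stratum and re-parametrize the free constants to match the stated operators. It then suffices to verify by direct substitution that each displayed operator \eqref{eq:T1123def}--\eqref{eq:T1112def} preserves the corresponding flag and carries the correct number of free constants ($c,\lambda$; $c_0,c_1,\lambda$; and $q_0,c_0,c_1,\lambda$ respectively), whence by the dimension count they span the respective $\cD_2$-spaces, completing the proof.
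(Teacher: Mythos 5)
Your proposal follows essentially the same route as the paper: invoke Lemma \ref{lem:Tansatz} to reduce to a finite-parameter ansatz with simple poles at $0$ and $1$, impose the two first-order conditions $T[y]'(b)-a_bT[y](b)=0$ on the freely varying jet data $\big(y(b),y''(b),y'''(b)\big)$ to obtain six homogeneous linear equations, and read off $\dim\cD_2$ from the rank of the resulting $6\times 7$ (in the paper) constraint matrix, whose rank drops from $6$ to $5$ on the locus $a_0a_1+a_1-a_0=0$ and to $4$ at $(0,0)$ and $(-2,2)$. The only differences are cosmetic bookkeeping (you carry $r_0$ as an eighth parameter and note it decouples, where the paper omits it and appends $\lambda$ at the end) and a slightly more explicit justification that total-space preservation plus degree-non-raising yields flag preservation.
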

Before we turn to the proof of this last Proposition, observe the duality
between flag moduli and free parameters in the operator. In the general case
$\cE^{(11)}_{23}$ the flag has two moduli $(a_0,a_1)$ and the $\cD_2$-space has
dimension two. In the case
$\cE^{(11)}_{13}$ the flag has one modulus $a_0$ and the operator has three
free parameters, since $\dim \cD_2\left(\cE^{(11)}_{13}\right)=3$. In the last
two cases $\cE^{(11)}_{03}$ and $\cE^{(11)}_{12}$ the flag is completely
specified (no flag moduli) but the operator contains four free parameters.

\begin{proof}
  By Lemma \ref{lem:Tansatz}, we must consider an operator of the form
  \[ T[y] := p_{-2} y'' + (p_{-1} z y'' + q_{-1} y') + (p_0 z^2 y''+
  q_0 y') + \frac{ c_0(y'-a_0 y)}{z} + \frac{ c_1 (y'-a_1 y)}{z-1},\]
  where $ p_{-2},p_{-1},q_{-1}, p_0, q_{0}, c_1, c_0$ are undetermined
  coefficients that need to be constrained so that $T$ preserves the
  flag in question.  Applying the relation
  \[ y'(0) = a_0 y(0), \quad y\in \cU \] to the constraint
  \begin{equation}
    \label{eq:Ty'0}
    T[y]'(0) - a_0 T[y](0)=0,\quad y\in U
  \end{equation}
  yields:
  \begin{align*}
    &\left(c_0/2+ p_{-2}\right) y'''(0) + \left(3 a_0 p_{-2}-(3 a_0/2)
      c_0-
      c_1+p_{-1}+q_{-1}\right)y''(0)+\\
    &\qquad + \left( a_0^3 c_0+ (a_0^2 - a_0 + a_1) c_1- a_0^2 q_{-1}+
      a_0 q_0\right) y(0)=0
  \end{align*} 
  Since $y'''(0), y''(0), y(0)$ vary freely for $y\in \cU$ the
  coefficients of all 3 terms must vanish in order for \eqref{eq:Ty'0}
  to hold.  An analogous constraint holds for
  \[ T[y]'(1) - a_1 T[y](1)=0.\] Since there are 7 parameters and only
  6 linear, homogeneous constraints, there exists at least one
  non-trivial operator that preserves $\cE^{(1)}$.  The desired
  solution vector \[ [ p_{-2} , p_{-1} , q_{-1} ,p_0 ,q_0, c_1, c_0]^t\]
  belongs to the null-space of the following matrix
  \begin{equation}
    \label{eq:E1ccmatrix}
    \begin{pmatrix}
      1 & 0 & 0 & 0 & 0 & 0 & 1/2 \\
      -a_0 & 1 & 1 & 0 & 0 & -1 & -3 a_0/2 \\
      0 & 0 & -a_0^2 & 0 & a_0 & a_0^2-a_0+a_1 & a_0^3 \\
      1 & 1 & 0 & 1 & 0 & 1/2 & 0 \\
      -a_1 & 1-a_1 & 1 & 2-a_1 & 1 & -3 a_1/2 & 1 \\
      0 & 0 & -a_1^2 & 0 & -(a_1-1) a_1 & a_1^3 & a_0-a_1 (a_1+1) \\
    \end{pmatrix}
  \end{equation}
  A direct calculation shows that all 6 minors of the above $6\times
  7$ have $a_0 a_1 + a_1 - a_0$ as a factor, and that it is not
  possible for all the minors to vanish if $a_0 a_1 + a_1 - a_0 \neq
  0$.  Hence, generically the above constraint matrix has rank 6, and
   there exists a unique, up to a scalar factor, solution, which
  after some calculation provides the operator $T^{(11)}_{23}$.

  Setting $a_1 = a_0/(a_0+1)$ in the above matrix drops the rank of
  the matrix to $5$, provided, $a_0\notin \{ 0,-2\}$.  Now the
  nullspace is 2-dimensional; this gives the form of $T^{(11)}_{13}$.
  Setting $a_0 = a_1 = 0$ in the constraint matrix gives a matrix of
  rank $4$.  The nullspace corresponds to the operator $T^{(11)}_{03}$.
  Similarly, $a_0=-2, a_1=2$ also gives a rank 4 matrix, whose
  nullspace corresponds to the operator $T^{(11)}_{12}$.
\end{proof}

\begin{prop}
  \label{prop:E11X2}
  The flag $\cE^{(11)}_{23}$ is an $\rX_2$ flag, provided $a_0\notin \{
  0,-2\}$ and $a_1\notin \{ 0,2\}$.  The non-stable flags
  $\cE^{(11)}_{13},\cE^{(11)}_{03},\cE^{(11)}_{12}$ are all $\rX_2$ flags.
\end{prop}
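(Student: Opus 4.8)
The plan is to verify, for each of the four flags, the four properties that together make it an $\rX_2$ flag: that it has codimension two, that it is primitive, that it is preserved by an exceptional (i.e.\ non-polynomial) second-order operator, and finally that it is the \emph{maximal} primitive flag preserved by such an operator. The codimension is immediate from Proposition \ref{prop:E11flags}: the degree sequences $\{2,3,4,\dots\}$, $\{1,3,4,\dots\}$, $\{0,3,4,\dots\}$ and $\{1,2,4,\dots\}$ each have exactly two gaps. The existence of a preserving operator is supplied by Proposition \ref{prop:E11D2}, and since every operator displayed there carries genuine simple poles at $z=0$ and $z=1$, it is non-polynomial; by Bochner's Theorem \ref{thm:bochner} any operator preserving the standard flag is polynomial, so a non-polynomial operator is not classical and is therefore exceptional. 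The two remaining points, primitivity and maximality, are where the hypotheses on $a_0,a_1$ enter, and maximality is the crux.

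For primitivity I would argue that any common factor of a flag must divide the gcd of the tail $\{z^2(z-1)^2 z^j\}$, namely $z^2(z-1)^2$, and hence be a product of powers of $z$ and $z-1$. For each flag one then inspects the lowest-degree generator: in $\cE^{(11)}_{23}$ the quadratic generator is nonzero at $z=0$ (so $z$ is never a common factor) and takes the value $a_0+2$ at $z=1$, so $(z-1)$ is a common factor precisely when $a_0=-2$; this already forces the exclusion $a_0\neq-2$. For the three non-stable flags the lowest generator is $a_0 z+1$, $1$, or $2z-1$, none of which is divisible by $z$ or $z-1$ under the respective defining constraints (e.g.\ $a_0\neq-1$ for $\cE^{(11)}_{13}$), so these flags are automatically primitive.

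Maximality is the main obstacle and the source of the remaining exclusions. The strategy is to invoke Lemma \ref{lem:1stordcond}: a primitive flag preserved by an operator with poles at $0$ and $1$ must lie inside the intersection $\cE^{(1)}(a_0;0)\cap\cE^{(1)}(a_1;1)$ cut out by the two first-order constraints, which is exactly the total space of $\cE^{(11)}(a_0,a_1;0,1)$. Thus, \emph{provided the operator genuinely has both poles}, the flag equals this intersection and is therefore the maximal primitive flag preserved, so no strictly larger primitive invariant flag can exist. The difficulty is precisely to guarantee two poles. Reading off the residues of $T^{(11)}_{23}$, the pole at $z=0$ has coefficient proportional to $a_1(a_1-2)$ and the pole at $z=1$ coefficient proportional to $a_0(a_0+2)$; these vanish exactly when $a_1\in\{0,2\}$ or $a_0\in\{0,-2\}$. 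For such parameter values the essentially unique operator in the two-dimensional space $\cD_2(\cE^{(11)}_{23})$ (up to the trivial $\lambda y$ term) degenerates to a single-pole, codimension-one operator, and the flag is then merely a sub-flag of a larger $\rX_1$ flag rather than a maximal one. Excluding $a_0\in\{0,-2\}$ and $a_1\in\{0,2\}$ keeps both residues nonzero and secures maximality, which is exactly what the hypotheses in the statement record.

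For the three non-stable flags the same maximality argument goes through without any exclusions, and this is the structural difference: by Proposition \ref{prop:E11D2} their $\cD_2$-spaces are three- or four-dimensional, with the two residues appearing as independent free constants $c_0$ and $c_1$. Choosing $c_0,c_1$ both nonzero produces an operator with two genuine poles, so Lemma \ref{lem:1stordcond} again forces membership in the defining intersection and delivers maximality. Combining the four verified properties—codimension two, primitivity, preservation by an exceptional operator, and maximality—yields in each case the conclusion that the flag is an $\rX_2$ flag.
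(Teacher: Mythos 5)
Your argument follows essentially the same route as the paper's: the preserving non-polynomial operators come from Proposition \ref{prop:E11D2}, and maximality is obtained from Lemma \ref{lem:1stordcond} via the observation that an operator with two genuine poles imposes two independent first-order conditions, hence cannot preserve anything of codimension $\leq 1$ (with $c_0,c_1\neq 0$ chosen for the non-stable flags, exactly as in the paper). Your way of locating the excluded parameter values --- the vanishing of the residues $c\,a_0(a_0+2)$ at $z=1$ and $c\,a_1(a_1-2)$ at $z=0$ in \eqref{eq:T1123def} --- is a clean repackaging of the paper's version, which instead exhibits the extra eigenpolynomials directly ($T^{(11)}_{23}[1]=0$ when $a_0=0$, and $2z-1$ an eigenpolynomial when $a_0=-2$); note that only the positive direction is required by the statement, and your two-pole argument delivers it.

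There is one genuine error, although it does not invalidate the proof of the proposition as stated: the claim that $(z-1)$ is a common factor of $\cE^{(11)}_{23}$ precisely when $a_0=-2$, so that primitivity ``already forces'' that exclusion. This is false. At $a_0=-2$ the degree-regular basis \eqref{eq:E1123dgbasis} degenerates (both displayed low-degree generators become multiples of $(z-1)^2$), so evaluating its quadratic member at $z=1$ proves nothing there; using \eqref{eq:E11span} instead, the generator $z^2\bigl((a_1-2)(z-1)+1\bigr)$ takes the value $1$ at $z=1$. More fundamentally, $\cE^{(11)}(a_0,a_1;0,1)$ is cut out by first-order conditions at the two distinct points $0$ and $1$, and Hermite interpolation produces elements of it that are nonzero at $0$ and elements nonzero at $1$, so this flag is primitive for \emph{all} parameter values; the exclusion $a_0\neq -2$ is forced solely by the failure of maximality (your residue observation, or the paper's eigenpolynomial $2z-1$), not by imprimitivity. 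Under the stated hypotheses your primitivity check is nevertheless valid, so the proposition is still proved.
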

\begin{proof}
  It is clear that all the operators preserve codimension two flags and since
they have poles they do not preserve the standard flag. It only remains to
prove the maximality assumption, i.e. that these operators do not preserve a
flag of codimension one.
  By Lemma \ref{lem:1stordcond}, an operator with two poles
  cannot preserve a codimension 1 flag.  By inspection, if
  $a_0, a_1$ satisfy the conditions given above, the operator $T^{(11)}_{23}$
  cannot preserve a codimension 1 flag.  On the contrary, if $a_0 = 0$,
  a direct calculation shows that $T^{(11)}_{23}[1] = 0$ and hence
  $\cE^{(1)}(0,a_1)$ is not the maximal flag preserved by $T^{(11)}_{23}$.
  Similarly, if $a_0 = -2$ then $2z-1$ is again an eigenpolynomial.
  Similar remarks hold for the cases $a_1=0$ and $a_1=2$.

  For the degenerate, non-stable flags, by taking $c_0, c_1\neq 0$ we
  obtain operators that preserve these flags, but have 2 distinct
  poles.  Therefore, by the same argument, these operators cannot
  preserve a flag of smaller codimension.
\end{proof}

We now turn to an analysis of the one-pole $\rX_2$ flag
$\cE^{(2)}$.  In the language of Lemma \ref{lem:ordergaps}, this
flag is the most general codimension $2$ flag with the order sequence
$I_0 = \{ 0,2,4,5,6,\ldots\}$.  The Lemma below derives the constraint
\eqref{eq:E2constraint} as the necessary and sufficient condition for
such a flag to have a non-trivial $\cD_2$.
\begin{lem}
  \label{lem:1poleflag}
  Every $\rX_2$ flag that is preserved by an operator with a unique pole
  is translation-equivalent, to $\cE^{(2)}(a_{01}, a_{03}, a_{23};0)$
  where the parameters satisfy \eqref{eq:E2constraint}. Up to a multiplicative
constant, a second order operator that preserves such a flag has the form
  \begin{equation}
    \label{eq:T3def}
    T^{(2)}[y]= y'' + (p_{-1} z y'' + q_{-1} y') + (p_0 z^2 y''+ q_0
    y') -4 \frac{(y'- a_{01} y)}{z}+\lambda y
  \end{equation}
  where
  \begin{align}
    \label{eq:p-1def}
    p_{-1} & = 2 a_{01} - 2a_{23} \\
    \label{eq:q-1def}
    q_{-1} &= -7 a_{01} + 5 a_{23} 
  \end{align}
  and where $p_0, q_0$ satisfy:
  \begin{equation}
    \label{eq:E2ccmatrix}
    \begin{pmatrix}
      0 & a_{01} & 3 a_{01}^3 - 6 a_{03} - 5 a_{01} a_{23} \\
      2 a_{03} &  a_{03} &  a_{03} (a_{01} -a_{23})(a_{01} + a_{23})\\
      4a_{23} & a_{23} & 6 a_{03} + 5 a_{01} a_{23}^2 - 3 a_{23}^3
    \end{pmatrix}
    \begin{pmatrix}
      p_0 \\ q_0 \\ 1
    \end{pmatrix}
    =
    \begin{pmatrix}
      0\\0\\0
    \end{pmatrix}
  \end{equation}
\end{lem}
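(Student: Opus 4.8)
The plan is to start from the general form of a second-order rational operator with a single pole, as dictated by Lemma \ref{lem:Tansatz}, and impose the first-order differential constraint from Lemma \ref{lem:1stordcond} together with flag-invariance at higher orders. By the translation freedom we place the unique pole at $z=0$. Lemma \ref{lem:ordergaps} then tells us that the order sequence at $0$ must be $I_0=\{0,2,4,5,6,\ldots\}$, which is exactly the data encoded in the flag $\cE^{(2)}(a_{01},a_{03},a_{23};0)$ with basis $\{1+a_{01}z+a_{03}z^3,\ z^2+a_{23}z^3,\ z^4,z^5,\ldots\}$ as in \eqref{eq:E2span}. By Lemma \ref{lem:Tansatz} the operator has the shape
\[
T[y]=p_{-2}y''+(p_{-1}zy''+q_{-1}y')+(p_0z^2y''+q_0y')+\frac{c(y'-a_{01}y)}{z}+\lambda y,
\]
and since we may normalize by a multiplicative constant I would fix the overall scale; the computation below will force $p_{-2}=1$ and $c=-4$, matching \eqref{eq:T3def}.

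First I would impose that $T$ preserve the lowest two flag elements $y_0=1+a_{01}z+a_{03}z^3$ and $y_1=z^2+a_{23}z^3$, demanding $T[y_0]=\lambda y_0$ and $T[y_1]=\mu y_1$ for eigenvalues $\lambda,\mu$. Expanding these as polynomial identities and matching coefficients of each power of $z$ gives a system of linear equations in the unknowns $p_{-2},p_{-1},q_{-1},p_0,q_0,c$ (and the eigenvalues), with $a_{01},a_{03},a_{23}$ as parameters. The coefficients of the negative and lowest powers fix $p_{-2}$ and $c$ and produce the relations $p_{-1}=2a_{01}-2a_{23}$ and $q_{-1}=-7a_{01}+5a_{23}$ of \eqref{eq:p-1def}-\eqref{eq:q-1def}; the remaining coefficient-matching equations assemble into the $3\times 3$ homogeneous system \eqref{eq:E2ccmatrix} for the vector $(p_0,q_0,1)^t$. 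One must also check that invariance on the tail $z^4,z^5,\ldots$ imposes no new constraints beyond those already used — this follows because each $z^k$ with $k\ge 4$ is a degree-homogeneous eigenvector of the leading part of the operator once the pole data is fixed.

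The crux of the argument, and the main obstacle, is the consistency of the $3\times 3$ system \eqref{eq:E2ccmatrix}: because its third column is prescribed (the bottom entry of the solution vector is forced to be $1$), a genuine solution $(p_0,q_0)$ exists if and only if the $3\times 3$ determinant of the coefficient matrix vanishes. I would therefore compute this determinant symbolically in $a_{01},a_{03},a_{23}$ and verify that it factors as a nonzero constant times
\[
a_{03}(a_{01}-a_{23})\bigl(6a_{03}+a_{01}a_{23}(a_{01}+a_{23})\bigr),
\]
which is precisely the left-hand side of the constraint \eqref{eq:E2constraint}. The appearance of this factorization is the delicate point: the three factors correspond to the three ways the single-pole flag can actually admit a nontrivial $\cD_2$, and getting the determinant to collapse to exactly this product is where the computation must be done carefully rather than routinely. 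Conversely, whenever \eqref{eq:E2constraint} holds the rank of the matrix drops, the homogeneous system has the required solution with last coordinate $1$, and one reads off $p_0,q_0$, completing both the necessity and sufficiency claims and hence the lemma.
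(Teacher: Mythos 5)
Your overall skeleton follows the paper: place the pole at $z=0$ by translation, invoke Lemma \ref{lem:Tansatz} for the shape of $T$ and Lemma \ref{lem:ordergaps} for the order data $I_0=\{0,2,4,5,\ldots\}$, and identify \eqref{eq:E2constraint} as the vanishing of the determinant of the $3\times 3$ compatibility system \eqref{eq:E2ccmatrix}. But the mechanism you propose for imposing invariance is too strong and would not produce the operator \eqref{eq:T3def}. Flag invariance means $T[U_2]\subset U_2$, i.e.\ $T[y_1]=\mu y_1+\nu y_0$ with $\nu$ unconstrained; it does \emph{not} mean that the particular basis vector $y_1=z^2+a_{23}z^3$ is an eigenvector. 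Indeed, for the actual operator
\[ T^{(2)}[z^2+a_{23}z^3](0)=2p_{-2}+2c=2-8=-6\neq 0, \]
so $T^{(2)}[y_1]$ has a genuine $y_0$-component. Matching constant terms in your equation $T[y_1]=\mu y_1$ forces $c=-p_{-2}$, which is incompatible with the correct normalization $c=-4p_{-2}$; your linear system would therefore either select a wrong operator or force $p_{-2}=c=0$ and kill the pole altogether.

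The second problem is your claim that the tail $z^4,z^5,\ldots$ imposes no new constraints. It is precisely the tail that fixes $c=-4p_{-2}$: one has $T[z^5]=(20p_{-2}+5c)z^3+O(z^4)$, and the third-order condition $y'''(0)=3a_{23}y''(0)+6a_{03}y(0)$ applied to $T[z^5]$ gives $6(20p_{-2}+5c)=0$; equivalently, this is the requirement $T_{-2}[z^5]=0$ that the paper extracts from Lemma \ref{lem:ordergaps} before doing anything else. The paper avoids both pitfalls by imposing invariance in the form ``$T[y]$ satisfies the defining differential constraints \eqref{eq:E3-1stordcond} and \eqref{eq:E3-3rdordcond} whenever $y$ does,'' and then using the fact that $y(0)$, $y''(0)$, $y^{(4)}(0)$ vary freely over the flag to equate coefficients; that computation is what yields \eqref{eq:p-1def}, \eqref{eq:q-1def} and \eqref{eq:E2ccmatrix}. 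Your identification of the determinant of \eqref{eq:E2ccmatrix} with \eqref{eq:E2constraint} as the crux is correct, but the linear system feeding into it must be derived from the differential constraints on $T[y]$, not from eigenvector equations on a chosen basis.
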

\begin{proof}
  By Lemma \ref{lem:Tansatz} an operator with a unique pole at $z=0$
  that preserves a polynomial flag has the form
  \[ T[y]= p_{-2}y'' + (p_{-1} z y'' + q_{-1} y') + (p_0 z^2 y''+ q_0
  y') +c \frac{(y'- a_{01} y)}{z}+\lambda y\] where $p_{-2}, p_{-1},
  q_{-1}, p_0, q_0, \lambda, c$ are undetermined coefficients.  If we
  demand that the flag has codimension $2$, then the flag must be
  $\cE^{(2)}$. By Lemma
  \ref{lem:ordergaps}, it follows that we must also require $T_{-2}[z^5]=0$,
where
  \[ T_{-2}[y] = p_{-2} y'' + \frac{cy'}{z},\]
  This imposes the condition $c = -4 p_{-2}$ and since we require a non-trivial
$T_{-2}$, we   must have $p_{-2}\neq 0$.  Hence, without loss of generality, we
  impose 
  \[ c=-4, \quad p_{-2} = 1 \]
  from here on. The flag $\cE^{2}$ in \eqref{eq:E2def} is defined by the first
and third order conditions 
  \begin{align}
    \label{eq:E3-1stordcond}
    y'(0) &= a_{01} y(0) \\
    \label{eq:E3-3rdordcond}
    y'''(0) &= 6a_{03} y(0) +3 a_{23} y''(0)
  \end{align}
  Imposing these conditions on $T[y]$ yields
  \begin{align*}
    & \left(5 a_{01}-3 a_{23}+p_{-1}+q_{-1}\right)y''(0)+\left(-4
      a_{01}^3-6  a_{03}-a_{01}^2 q_{-1}+a_{01} q_0\right) y(0) =0\\
    &\left(a_{01}+a_{23}+3 p_{-1}+q_{-1}\right) y^{(4)}(0)+ \\
    &\quad +3 \left(-4 a_{01} a_{23}^2+6 a_{03}-6 a_{23}^2 p_{-1}-3
      a_{23}^2 q_{-1}+4 a_{23} p_0+a_{23} q_0\right)y''(0)+\\
    &\quad -6 a_{03} \left(4 a_{01}^2+4 a_{01} a_{23}+6 a_{23}
      p_{-1}+(3 a_{23}+a_{01}) q_{-1}-6 p_0-3 q_0\right)y(0)=0
  \end{align*}
  The values of $y^{(4)}(0), y^{(2)}(0), y(0)$ vary freely for $y\in
  \cE^{(2)}$, and hence, invariance holds if and only if the
  coefficient of each of these expressions vanish.  The conditions
  \eqref{eq:p-1def} \eqref{eq:q-1def} follow from the vanishing of the
  leading order coefficients.  Once these values of $p_{-1}, q_{-1}$
  are imposed, the overdetermined constraint \eqref{eq:E2ccmatrix}
  expresses the vanishing of all the remaining coefficients.  The
  vanishing of the determinant of the matrix in \eqref{eq:E2ccmatrix}
  is the compatibility condition for these constraints, and this is precisely
condition 
  \eqref{eq:E2constraint}.
\end{proof}
\noindent
As we did before for the two-poles $\rX_2$ flags, the one-pole $\rX_2$
flags can be classified according to their degree sequence.
\begin{prop}
  \label{prop:E3flags}
  Every one-pole $\rX_2$ flag is affine-equivalent to one of the
  following:
\begin{subequations}
  \begin{align}
    \label{eq:E2a13span}
    \cE^{(2a)}_{13}(a) &:= \cE^{(2)}(1,0,a;0) = \lspan\{ 1+z
    ,z^2+a z^3,z^4,z^5,\ldots\},\quad a \neq 0,\\
    \cE^{(2a)}_{03} &:= \cE^{(2)}(0,0,1;0) = \lspan\{ 1
    ,z^2+ z^3,z^4, z^5,\ldots\},\\
    \cE^{(2a)}_{12} &:= \cE^{(2)}(1,0,0;0) = \lspan\{ 1+ z
    ,z^2,z^4,z^5,z^6,\ldots\},\\
    \cE^{(2a)}_{02} &:= \cE^{(2)}(0,0,0;0) = \lspan\{
    1,z^2,z^4,z^5,z^6,\ldots\},\\
    \label{eq:E2bdef}
    \cE^{(2b)}_{23}(a) &:= \cE^{(2)}(a,a,a;0) =  \lspan\{
    1+az- z^2,z^2(1+az), z^4, z^5,\ldots\},\quad a\neq 0,\\  
    \cE^{(2c)}_{23}(a) &:=  \cE^{(2)}(a,-a(a+1)/6,1),\quad a\neq 0\\ \nonumber
    &= \lspan \{ 1+az + a(a+1)z^2/6, z^2+z^3, z^4, z^5,\ldots \}
  \end{align}
\end{subequations}
\end{prop}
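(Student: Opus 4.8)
The plan is to start from Lemma~\ref{lem:1poleflag}, which already reduces every one-pole $\rX_2$ flag, up to translation, to $\cE^{(2)}(a_{01},a_{03},a_{23};0)$ with the three moduli constrained by \eqref{eq:E2constraint}, and which tells us that a preserving operator exists precisely when the overdetermined system \eqref{eq:E2ccmatrix} for $(p_0,q_0)$ is solvable. The only affine freedom left once the pole is fixed at the origin is the scaling $z\mapsto\lambda z$; reading off the basis \eqref{eq:E2span} one checks that this acts on the flag moduli by $(a_{01},a_{03},a_{23})\mapsto(\lambda a_{01},\lambda^3 a_{03},\lambda a_{23})$. The whole task then reduces to factoring the constraint, normalizing each branch by a suitable $\lambda$, and recording the resulting degree sequence.

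First I would treat the branch $a_{03}=0$. Here the two leading basis elements are $1+a_{01}z$ and $z^2+a_{23}z^3$, so the degree sequence (hence the subscript $ij$) is governed solely by whether $a_{01}$ and $a_{23}$ vanish; scaling the nonzero modulus to $1$ then produces the three rigid flags $\cE^{(2a)}_{02},\cE^{(2a)}_{03},\cE^{(2a)}_{12}$ together with the one-parameter family $\cE^{(2a)}_{13}(a)$, where $a=a_{23}/a_{01}$ is the scaling invariant. In each of these \eqref{eq:E2ccmatrix} is solvable: in the degenerate subcases rows of the matrix become trivial and one of $p_0,q_0$ remains free, so all four are genuine one-pole $\rX_2$ flags.

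Next come the two stable branches, both of which force the degree sequence $\{2,3,4,\dots\}$ because $a_{03}\neq0$. If $a_{01}=a_{23}=:a$ (otherwise we are back in the previous branch), I would solve \eqref{eq:E2ccmatrix} to see that $(p_0,q_0)$ exists exactly when $a\neq0$; since the condition $a_{01}=a_{23}$ is scale-invariant, choosing $\lambda^2=a/a_{03}$ collapses the orbit onto $\cE^{(2b)}_{23}(a')$ with $a'=\lambda a$. If instead the third factor vanishes, $a_{03}=-a_{01}a_{23}(a_{01}+a_{23})/6$, then (to stay out of the earlier branches) $a_{01}$, $a_{23}$ and $a_{01}+a_{23}$ are all nonzero, and scaling $a_{23}$ to $1$ via $\lambda=1/a_{23}$ yields the family $\cE^{(2c)}_{23}(a)$ with $a=a_{01}/a_{23}$, after rewriting the basis into the degree-regular form displayed in the statement. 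Because $a_{01}=a_{23}$ is a scaling invariant, $\cE^{(2b)}$ and $\cE^{(2c)}$ are genuinely inequivalent, which is precisely why the stable case splits into two families.

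The step I expect to be the main obstacle is confirming that the list is exhaustive yet free of spurious entries. The vanishing of the determinant in \eqref{eq:E2constraint} only guarantees that the three columns of the matrix in \eqref{eq:E2ccmatrix} are linearly dependent; solvability of the system for $(p_0,q_0)$ requires the sharper statement that the third column lie in the span of the first two. This sharper condition fails at exactly one point, $a_{01}=a_{23}=0$ with $a_{03}\neq0$: there the first two columns become proportional while the third does not lie on their line, so no preserving operator exists and $\cE^{(2)}(0,a_{03},0)$ must be discarded even though it satisfies \eqref{eq:E2constraint}. Isolating precisely this exception is the delicate bookkeeping that certifies the six families above as exactly the one-pole $\rX_2$ flags. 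A minor attendant subtlety, just as in the two-pole normalization, is that the scaling $\lambda$ used to collapse the stable branches may have to be complex.
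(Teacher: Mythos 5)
Your proposal is correct and follows essentially the same route as the paper: factor the constraint \eqref{eq:E2constraint} into the three branches $(2a)$, $(2b)$, $(2c)$, normalize each by the residual scaling $z\mapsto\lambda z$ acting as $(a_{01},a_{03},a_{23})\mapsto(\lambda a_{01},\lambda^3 a_{03},\lambda a_{23})$, read off the degree sequences, and use solvability of \eqref{eq:E2ccmatrix} to discard $a_{01}=a_{23}=0$, $a_{03}\neq 0$ --- a point the paper makes more tersely but identically. The only blemish is your closing claim that $\cE^{(2b)}$ and $\cE^{(2c)}$ are inequivalent because ``$a_{01}=a_{23}$ is a scaling invariant'': $\cE^{(2c)}_{23}(1)=\cE^{(2)}(1,-1/3,1)$ also has $a_{01}=a_{23}$, so that particular argument fails; but since the proposition asserts only exhaustiveness and not disjointness of the list, this does not affect the proof.
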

\begin{proof}
  The three types of flags labelled $(2a)$, $(2b)$ and $(2c)$ correspond to the
three different ways of satisfying the defining constraint
\eqref{eq:E2constraint} on the three flag moduli.
  The type $(2a)$ flags are obtained by applying the constraint
  $a_{03} = 0$ to a general type $\cE^{(2)}$ flag.  By
  \eqref{eq:E2span}, the resulting degree regular basis is
  \[ 1+a_{01} z, z^2 + a_{23} z^3, z^4, z^5, \ldots \] If $a_{01},
  a_{23} \neq 0$, then a scaling transformation can be used to send
  one (but not both) of the above parameters to $1$.  The various
  subclasses listed above arise if one or both of  $a_{01}, a_{23}=
  0$.

  The type $(2b)$ flag is obtained by applying the constraint $a_{23}
  = a_{01}$.  An examination of \eqref{eq:E2ccmatrix} shows that it is
  not possible for $a_{23} = a_{01} = 0, a_{03} \neq 0$.  Therefore,
  for the type $(2b)$ subcase, we must have $a_{01} \neq 0$.  Thus, in
  this case, transforming \eqref{eq:E2span} to a degree regular basis
  gives
  \[ 1+a_{01} z - \frac{a_{03}}{a_{01}} z^2, z^2 + a_{01} z^3, z^4,
  z^5,\ldots \] Now a scaling transformation can be used to set
  $a_{03}/a_{01}= 1$.

  The type $(2c)$ flags are obtained by imposing
  \[a_{03} = -a_{01} a_{23}(a_{01} + a_{23})/6.\] In this case, the
  degree regular basis is
  \[ 1+a_{01} z + a_{01}(a_{01} + a_{23}) z^2/6, z^2(1 + a_{23} z),
  z^4, z^5,\ldots \] No generality is lost if we assume that $a_{01},
  a_{23} \neq 0$, because otherwise we will obtain a flag of type
  $(2a)$.  Finally, a scaling transformation is used to set $a_{23}
  = 1$.
\end{proof}
\noindent Note: as above the flag subscript indicates the degree
sequence of the flag.
\begin{prop}
  \label{prop:E2D2}
  The flags $\cE^{(2a)}_{13}, \cE^{(2b)}_{23}, \cE^{(2c)}_{23}$ have a
  2-dimensional $\cD_2$.  The degenerate flags
  $\cE^{(2a)}_{03},\cE^{(2a)}_{12}$ have a 3-dimensional $\cD_2$,
  while $\cE^{(2a)}_{02}$ has a 4-dimensional $\cD_2$.  The most
  general second order operator that preserves each of these flags is
  shown below.  The symbol $a$ represents the flag modulus while the
  symbols $c,p_0, q_0, \lambda$ are free constants that appear in the
  operator. 

  {\small
  \begin{subequations}  
    \begin{align}
      \label{eq:T2a13def}
      T^{(2a)}_{13}[y] &= c\left( \left(1-3 a\right)
        \left(3-a\right)\frac{z^2}{4} +2 \left(1-a\right)z+1\right)
      y''+\\ \nonumber &\qquad + c\left( \left(5 a-3 \right) az+5
        a-7\right) y' +\frac{4 c(y-y')}{z}+\lambda y\\
      \label{eq:T2a03def}
      T^{(2a)}_{03}[y] &= \left(
        \left(3c-q_0\right)\frac{z^2}{4}+c(1-2 z)\right)y''+ 
      \left(5c+q_0 z\right)y'-\frac{4c y'}{z}+\lambda y\\
      \label{eq:T2a12def}
      T^{(2a)}_{12}[y] &=\left(p_0 z^2+c(2 z +1)\right)y''-c\left(3 z
        +7\right) y'+\frac{4 c( y-
        y')}{z}+\lambda y\\
      \label{eq:T2a02def}
      T^{(2a)}_{02}[y] &= \left(p_0 z^2+c\right) y''(z)+q_0 z
      y'(z)-\frac{4 cy'(z)}{z}+\lambda y \\
      \label{eq:T2b23def}
      T^{(2b)}_{23}[y] &= c\left(1-z^2
        \left(a^2+ 3\right)\right) y''+ c\left(2 z 
        \left(a^2+ 3\right)-2 a\right)
      y'(z) +\frac{4c (a y- y')}{z}+\lambda y\\
      \label{eq:T2c23def}
      T^{(3c)}_{23}[y] &= c(1+(a-1)z)^2
      y'' +c((a-1)(1-3a) z+5-7a) y'+\frac{4c (a y- y')}{z}+\lambda y
    \end{align}
  \end{subequations} }
\end{prop}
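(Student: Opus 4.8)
The plan is to prove Proposition~\ref{prop:E2D2} by a direct rank computation on the homogeneous linear system cutting out $\cD_2$, treating the generic and degenerate flags separately, exactly as was done for the two-pole flags in Proposition~\ref{prop:E11D2}. By Lemma~\ref{lem:Tansatz}, an operator preserving a one-pole flag with its pole at $z=0$ is determined by the seven constants $p_{-2},p_{-1},q_{-1},p_0,q_0,r_0,c$, the constant $a_{01}$ in the pole term being fixed by the flag through Lemma~\ref{lem:1stordcond}. The constant $r_0$ is always unconstrained: the identity operator preserves every flag, and if $y$ obeys the two linear conditions \eqref{eq:E3-1stordcond}--\eqref{eq:E3-3rdordcond} defining $\cE^{(2)}$ then so does $r_0y$; this supplies the $\lambda y$ term in every displayed operator. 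Imposing that $T[y]$ again obey these two conditions, together with the leading-order relation $c=-4p_{-2}$ coming from $T_{-2}[z^5]=0$ (Lemma~\ref{lem:ordergaps}), reduces the problem to finding the dimension of the solution space in $p_{-2},p_{-1},q_{-1},p_0,q_0$; then $\dim\cD_2$ is this dimension plus one.

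For the three generic flags $\cE^{(2a)}_{13},\cE^{(2b)}_{23},\cE^{(2c)}_{23}$ almost all of this work is already contained in Lemma~\ref{lem:1poleflag}: after normalizing the pole by $p_{-2}=1,\ c=-4$, the coefficients $p_{-1},q_{-1}$ are fixed by \eqref{eq:p-1def}--\eqref{eq:q-1def}, and $(p_0,q_0)$ is pinned down as the solution of the overdetermined system \eqref{eq:E2ccmatrix}, whose compatibility is precisely the constraint \eqref{eq:E2constraint} satisfied by each type. The only thing left to check is that the $3\times 2$ coefficient block of $(p_0,q_0)$ in \eqref{eq:E2ccmatrix} has rank $2$ for each of the parameter triples $(1,0,a)$, $(a,a,a)$ and $(a,-a(a+1)/6,1)$ with $a\neq0$, so that $(p_0,q_0)$ is unique; this is a finite verification. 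The operator is then determined up to the overall pole-scale $c$, and together with the free identity term this yields $\dim\cD_2=2$. Substituting each triple and simplifying produces the explicit forms \eqref{eq:T2a13def}, \eqref{eq:T2b23def} and \eqref{eq:T2c23def}.

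The degenerate flags demand a fresh rank count, and this is where the dimension jumps. Setting the flag moduli to the special values $(1,0,0)$, $(0,0,1)$ and $(0,0,0)$ drops the rank of the coefficient block in \eqref{eq:E2ccmatrix}, so that one of $p_0,q_0$ (respectively both, in the last case) survives as a free parameter; by \eqref{eq:p-1def}--\eqref{eq:q-1def} one also has $p_{-1}=q_{-1}=0$ in several of these cases. Concretely, for $\cE^{(2a)}_{12}$ the coefficient $p_0$ becomes unconstrained and multiplies the \emph{pole-free} operator $z^2y''$; for $\cE^{(2a)}_{03}$ a free $q_0$ survives, corresponding to the pole-free operator $zy'-\tfrac14 z^2y''$; and for the maximally degenerate $\cE^{(2a)}_{02}$ both directions survive. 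Adding the free identity term, the pole-scale, and these extra pole-free directions gives $\dim\cD_2=3$, $3$ and $4$ respectively, with general operators \eqref{eq:T2a12def}, \eqref{eq:T2a03def} and \eqref{eq:T2a02def}.

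I expect the degenerate cases to be the only real obstacle, and the crucial point is precisely that the new operators are pole-free: they have $p_{-2}=c=0$ and therefore lie outside the normalization $p_{-2}=1$ used in the generic reduction, so they would be \emph{missed} by naively specializing the generic operators \eqref{eq:T2a13def}--\eqref{eq:T2c23def} to the degenerate parameter values (for instance $\cE^{(2a)}_{12}$ is the $a\to0$ limit of $\cE^{(2a)}_{13}(a)$, yet acquires an extra operator in the limit). To avoid this trap I would recompute the rank of the full constraint system directly at each special parameter value rather than take limits, and confirm each claimed operator by checking that it maps the degree-regular basis of Proposition~\ref{prop:E3flags} into itself and that the constraint matrix has exactly the stated rank, so that no operator is overlooked.
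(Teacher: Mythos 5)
Your proposal is correct and follows essentially the same route as the paper: specialize the linear constraint system of Lemma \ref{lem:1poleflag} to each parameter triple (the three factors of \eqref{eq:E2constraint} giving the cases $(2a)$, $(2b)$, $(2c)$) and read off the rank of the resulting system. You are in fact more explicit than the paper's own terse proof about why the degenerate flags gain dimensions --- the extra directions are the pole-free operators such as $z^2y''$ and $zy'-\tfrac{1}{4}z^2y''$ that escape the normalization $p_{-2}=1$, which is exactly the right point to flag --- and your only slip is the parenthetical claim that $p_{-1}=q_{-1}=0$ in ``several'' of the degenerate cases, which by \eqref{eq:p-1def}--\eqref{eq:q-1def} holds only for $\cE^{(2a)}_{02}$.
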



\begin{proof}
  Each of the flags in question is a specialization of the $\cE^{(2)}$
  flag discsussed in Lemma \ref{lem:1poleflag}, imposed in such a way
  so that \eqref{eq:E2constraint} holds.  The 3 factors in
  \eqref{eq:E2constraint} give us the 3 possible cases: $\cE^{(2a)},
    \cE^{(2b)}, \cE^{(2c)}$.  Imposing the respective constraints
  \[ a_{03} = 0,\quad a_{23} = a_{01},\quad a_{03} = -a_{01} a_{23}
  (a_{01}+a_{23})/6 \] transforms \eqref{eq:E2ccmatrix} into a
  consistent, rank 2 system.  We can further eliminate one more
  parameter by means of an appropriate scaling transformation. The
  form of the operators shown above follows from \eqref{eq:p-1def}
  \eqref{eq:q-1def} and the solution of the corresponding
  \eqref{eq:E2ccmatrix}.
\end{proof}

\begin{prop}
  \label{prop:E2X2}
  The flags
  $\cE^{(2a)}_{13},\cE^{(2a)}_{03},\cE^{(2a)}_{12},\cE^{(2a)}_{02},\cE^{(2b)}_{23},
  \cE^{(2c)}_{23}$ are 
  all $\rX_2$ flags.   
\end{prop}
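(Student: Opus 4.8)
The plan is to check, for each of the six families, the four attributes that define an $\rX_2$ flag: primitivity, codimension two, invariance under a second-order operator that does not preserve the standard flag, and \emph{maximality} of the flag among primitive flags preserved by that operator. The first two attributes are immediate from the degree-regular bases in Proposition \ref{prop:E3flags}: each basis displays a degree sequence with exactly the two gaps recorded in the subscript, so the codimension is two. For primitivity, each flag contains a polynomial of degree at most two with nonzero constant term (such as $1$, $1+z$, or $1+az-z^2$) together with the monomial $z^4$; a common factor of the flag would divide the greatest common divisor of these two polynomials, which is $1$, so no nontrivial common factor exists.

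Next I would exhibit the exceptional operator. By Proposition \ref{prop:E2D2} each flag is preserved by the operator written there, and I select a representative with $c\neq 0$. Its coefficient of $y'$ contains the term $-4c/z$, so it has a genuine simple pole at $z=0$ and is non-polynomial; by Bochner's Theorem \ref{thm:bochner} a classical operator has polynomial coefficients, so an operator with a pole cannot preserve the standard flag. Together with the primitivity established above, this shows the operator is exceptional in the required sense.

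The decisive step --- and the one that genuinely differs from the two-pole families of Proposition \ref{prop:E11X2}, where maximality was immediate from Lemma \ref{lem:1stordcond} because two poles are incompatible with a codimension-one flag --- is to rule out invariance of any flag of codimension less than two for these \emph{single}-pole operators. The key observation is that, reading off the coefficients in Proposition \ref{prop:E2D2}, the most singular Laurent component at $z=0$ is, in all six cases, the same:
\[ T_{-2}[y] = c\left(y'' - \frac{4y'}{z}\right), \qquad T_{-2}[z^j] = c\,j(j-5)\,z^{j-2}, \]
which annihilates precisely the monomials $1$ and $z^5$. If such an operator preserved any primitive flag $\cV$, then, having a pole at $z=0$, it would by Lemma \ref{lem:ordergaps} force $I_0(\cV)=\Nset\setminus\{1,3,\ldots,2\alpha-1\}$ for some $\alpha\geq 1$ with $T_{-2}[z^{2\alpha+1}]=0$; the annihilation pattern then pins down $2\alpha+1=5$, i.e. $\alpha=2$, so $\cV$ has exactly the two finite order-gaps $\{1,3\}$ at $z=0$. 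Since the number of finite order-gaps at any point is at most the codimension (an echelon-form dimension count applied to the truncations $\cV\cap\cP_n$), $\cV$ has codimension at least two. Hence the operator preserves no codimension-one flag and, being non-polynomial, no codimension-zero flag, so the given codimension-two flag is the maximal primitive flag it preserves and is therefore an $\rX_2$ flag. Because $T_{-2}$ does not depend on the modulus $a$, the whole argument is uniform in $a$, and the main obstacle is precisely isolating this common leading term and feeding it into Lemma \ref{lem:ordergaps}.
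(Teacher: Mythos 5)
Your proof is correct and follows essentially the same route as the paper: both isolate the common leading Laurent component $T_{-2}[y]=y''-4y'/z$ and feed it into Lemma \ref{lem:ordergaps}, the paper by checking $T_{-2}[z^3]=-6z\neq 0$ (so $\alpha=1$ is impossible), you by noting $T_{-2}$ annihilates only $1$ and $z^5$ so that $\alpha=2$ is forced — two phrasings of the same argument. The extra verifications you supply (primitivity, codimension, non-preservation of the standard flag) are details the paper leaves implicit.
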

\begin{proof} 
  For each of the above flags, we have exhibited a singular
  operator that preserves it. It remains to show that these operators
  cannot preserve a flag of smaller codimension.  By Lemma
  \ref{lem:ordergaps}  an $\rX_1$ flag preserved by an operator with
  a pole at $z=0$, must have elements of order $0,2,3,4,\ldots$.
  Therefore, it suffices to check that $T_{-2}$ (see the Lemma for the
  explanation of the notation) does not annihilate $z^3$.  For each of
  the  operators shown in the preceding Proposition,
  \[ T_{-2}[y] = y''-\frac{4y'}{z}.\]
  Hence,
  \[ T_{-2}[z^3] = -6 z .\]
  Therefore,  none of these operators can preserve an $\rX_1$ flag.
\end{proof}

\begin{proof}[Proof of Theorem \ref{thm:X2flag}]
  By the above Lemmas, an $\rX_2$ operator  has either one or two
  poles.  In the last case, the corresponding $\rX_2$ flag
  satisfies two distinct first order condtions
  \[ y'(b_i) = a_i y(b_i), \quad i=1,2 \] Applying an affine
  transformation, no generality is lost if we assume that the poles
  are at $z=0$ and $z=1$. This gives us flags of type $\cE^{(11)}$.
  The corresponding $\rX_2$ operators are given in Proposition
  \ref{prop:E11D2}.  The $\rX_2$ assertion is verified in Proposition
  \ref{prop:E11X2}.

  In the case of one pole, without loss of generality the pole is at
  $z=0$.  In this case, the flag satisfies a first and a third order
  condition, which gives us a flag of type $\cE^{(2)}$.  As it was shown
  in Lemma \ref{lem:1poleflag}, the moduli of the general flag must
  satisfy the constraint \eqref{eq:E2constraint}.  This gives us the
  three cases: $\cE^{(2a)}, \cE^{(2b)}, \cE^{(2c)}$.  The
  corresponding operators for these flags are given in Proposition
  \ref{prop:E2D2} and the $\rX_2$ condition is verified in Proposition
  \ref{prop:E2X2}
\end{proof}

\section{Factorization of exceptional operators}
\label{sect:fact}
The results in this section are concerned with factorizations of the
differential operators that preserve $\rX_2$ flags and
their connection to the Darboux transformation. The usual Darboux
transformation involves Schr\"odinger operators and square-integrable
eigenfunctions but for our purposes it will be convenient to generalize it to
second order operators with rational coefficients.

\begin{definition}
  \label{def:ratfact}
 Let $T$ be a second order differential operator that preserves a
polynomial flag $\cU$. Let
  \begin{equation}
    \label{eq:TBA}
    T = B A+ \lambda_0
  \end{equation}
 be a factorization of $T$ where $A,B$ are first
  order operators with rational coefficients and $\lambda_0$ is a constant. If
the partner operator defined by
  \begin{equation}
    \label{eq:hTAB}
    \hT= A B  +\lambda_0.
  \end{equation}
 also preserves a polynomial flag $\hat \cU$ we will say that $T$ and $\hat T$
are related by an \textit{algebraic Darboux transformation}.
\end{definition}
\begin{definition}
  More generaly, we will say that two operators $T$ and $\hT$ are
\textit{Darboux-connected} if there exists a sequence of algebraic Darboux
transformations that connect them.
\end{definition}
%
%

The same notion can be defined for polynomial flags in the following manner:
\begin{definition} \label{def:darboux-connected}
 Two polynomial flags $\cU:U_1\subset
U_2\subset\dots$ and $\hat{\cU}:\hat U_1\subset \hat U_2\subset\dots$ are
  \emph{Darboux-connected}  if there exists two first order
  rational operators $A$ and $B$ such that one of the following three
  possibilities occur:
  \begin{align}
    \label{eq:flagiso}
    A[U_i]&\subset\hU_i,  &B[\hU_i]&\subset U_i,\quad i\geq 1;\\
    \label{eq:flagstatedel}
    A[U_{i+1}]&\subset\hU_i,  &B[\hU_i]&\subset U_{i+1},\quad i\geq 1,&
    A[U_1] = 0;\\
    \label{eq:flagstateadd}
    B[U_{i+1}]&\subset\hU_i,  &A[\hU_i]&\subset U_{i+1},\quad i\geq 1,& B[U_1]
    = 0.
  \end{align}
  In accordance with \cite{GKM10} we will refer to the above cases as
   formally isospectral, formally state-deleting and
  formally state-adding.
\end{definition}

Note that this implies that the second order operators $T=BA$ and $\hat T=AB$
preserve the flags $\cU$ and $\hat\cU$ respectively, so Darboux-connected polynomial flags are always invariant.
 It is common to refer to
the operators $A,B$ as \emph{intertwining operators}, or simply as
\emph{intertwiners}.
\begin{definition}
  We say that a polynomial flag $\cU$ is an $m$-step flag if there exists a
sequence of $m$ Darboux transformations that connect $\cU$ to the standard flag.
\end{definition}


Our main results in this section are summarized in the following two theorems:
\begin{thm}
  \label{thm:X12flagstep}
  Every $\rX_1$ flag is a 1-step flag.
  Every $\rX_2$ flag is either a 1-step or a 2-step flag.
\end{thm}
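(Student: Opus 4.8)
The plan is to leverage the complete classification of $\rX_1$ and $\rX_2$ flags from Theorems \ref{thm:X1flag} and \ref{thm:X2flag}, as refined by Propositions \ref{prop:E11flags} and \ref{prop:E3flags}, and to produce explicit intertwining operators realizing the Darboux connections demanded by Definition \ref{def:darboux-connected}. The guiding principle is that each algebraic Darboux transformation adds or removes a single singular point of the operator, so an operator with two simple poles ought to reduce to one with a single pole, and thence to a pole-free (classical) operator in two steps. Thus the whole theorem reduces to: (i) a base case showing every $\rX_1$ flag is one step from the standard flag $\cP$, and (ii) an inductive step exhibiting, for each $\rX_2$ flag, a first Darboux transformation landing on an $\rX_1$ flag.

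First I would dispose of the $\rX_1$ flags. By Theorem \ref{thm:X1flag} there are, up to affine equivalence, exactly two: the stable flag $\cE^{(1)}(1;0)$ and the monomial flag $\cE^{(1)}(0;0)$. For each I would write the preserving operator in the form $T = BA + \lambda_0$, with $A = \partial - w$ for a rational logarithmic derivative $w$ and $B$ first order, choosing the seed $\phi = \exp\!\int w$ so that the partner $\hat T = AB + \lambda_0$ preserves a flag gauge-equivalent to $\cP$. For the monomial flag this is immediate: taking $A = \partial$ and $B = \partial - 2/z$ gives $T[y] = y'' - 2y'/z = BA[y]$, while $\hat T = AB$ is the operator $y'' - 2y'/z + 2y/z^2$, which by Proposition \ref{prop:gaugeequiv} is gauge-equivalent to $y''$ and hence preserves (a gauge copy of) the standard flag; this exhibits $\cE^{(1)}(0;0)$ as a one-step flag, and a closely analogous factorization handles the stable flag $\cE^{(1)}(1;0)$.

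For the two-pole $\rX_2$ flags $\cE^{(11)}$ and their degenerate subclasses from Proposition \ref{prop:E11flags}, the central step is to factorize one of the operators of Proposition \ref{prop:E11D2} as $T = BA + \lambda_0$ using a seed adapted to the pole at $z = 1$, so that the partner $\hat T$ retains only the pole at $z = 0$. I would then verify that $A$ maps $\cE^{(11)}$ into a polynomial flag $\hat\cU$ obeying a single first-order constraint at $z = 0$, that is, an $\rX_1$ flag in the sense of Lemma \ref{lem:1stordcond} with order sequence $I_0 = \Nset\setminus\{1\}$ (the $\alpha = 1$ case of Lemma \ref{lem:ordergaps}). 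This first transformation connects $\cE^{(11)}$ to an $\rX_1$ flag, and the previous paragraph then supplies the second transformation down to $\cP$, so $\cE^{(11)}$ is a 2-step flag.

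For the one-pole $\rX_2$ flags $\cE^{(2)}$ constrained by \eqref{eq:E2constraint}, the single simple pole is deceptive: the accompanying third-order condition \eqref{eq:E3-3rdordcond} encodes a two-step structure localized entirely at $z = 0$. Here I would factorize the relevant operator of Proposition \ref{prop:E2D2} twice at the same point, the first step carrying the order sequence $I_0 = \Nset\setminus\{1,3\}$ to the intermediate sequence $\Nset\setminus\{1\}$ — again an $\rX_1$ flag — and then reduce as before. The main obstacle is precisely this bookkeeping: for each value of the flag modulus $a$ and in each degenerate subclass of Propositions \ref{prop:E11flags} and \ref{prop:E3flags}, one must construct the correct rational seed (an eigenfunction of the ambient operator built from the weight, not a polynomial), and then verify via Lemma \ref{lem:ordergaps} that the intertwined flag has the expected order sequence and first-order constraint at $z=0$, with no spurious new poles generated. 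Confirming that the degenerate cases $\cE^{(2a)}_{03}$, $\cE^{(2a)}_{12}$, $\cE^{(2a)}_{02}$ and the exceptional moduli values do not obstruct these factorizations is the most delicate part of the argument, and is what closes the induction.
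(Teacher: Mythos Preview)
Your overall plan---classify, then exhibit intertwiners case by case---matches the paper's, but the organizing heuristic you rely on is not correct, and your one explicit construction does not land where you claim.

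First, the $\rX_1$ example. With $A=\partial$ and $B=\partial-2/z$ you do get $T=BA$, but $A$ maps the monomial flag $\{1,z^2,z^3,\ldots\}$ to $\{0,2z,3z^2,\ldots\}$, i.e.\ to the imprimitive flag $z\cP$, not to $\cP$. Gauge-equivalence of the partner operator to $y''$ is not what Definition~\ref{def:darboux-connected} asks for; you need $A$ and $B$ to intertwine with the standard flag itself. The fix is easy (take $A[y]=(y'-ay)/z$ and $B[y]=zy'-(az+1)y$, as the paper does), but your argument as written does not establish that $\cE^{(1)}(0;0)$ is $1$-step.

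Second, and more seriously, your guiding principle that each algebraic Darboux step removes one pole is not what actually happens, and a strategy built on it will not close. In the paper's constructions: (i) the generic two-pole flag $\cE^{(11)}_{23}$ is connected to $\cP$ in a \emph{single} step by intertwiners that carry both poles simultaneously; (ii) for $\cE^{(11)}_{13}$ the state-deleting map $A[y]=a_1\,\cW[y,1+a_0z]/(z(1-z))$ lands on an $\rX_1$ flag with pole at the \emph{new} location $-1/a_0$, not at $z=0$; and (iii) for the one-pole flags $\cE^{(2a)}$ with $a_{01}\neq 0$, the first step maps to a \emph{two}-pole $\cE^{(11)}$ flag (a pole is created at $-1/a_{01}$), which is then shown to be $1$-step. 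So ``factorize with a seed adapted to $z=1$ to kill that pole'' and ``factorize twice at $z=0$'' are not the mechanisms that work here; the intermediate flag need not have fewer poles, nor poles at the original locations. Your sketch gives no construction of the required seeds, and the pole-counting heuristic gives no reason to expect one exists with the properties you want. The paper instead proceeds by writing down, for each subclass in Propositions~\ref{prop:E11flags} and~\ref{prop:E3flags}, explicit first-order $A$ (and where needed $B$), verifying the flag inclusions directly, and invoking the state-deleting Lemma~\ref{lem:statedel} to manufacture the reverse intertwiner from any second-order $T\in\cD_2(\cU)$.
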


\begin{thm}
  \label{thm:ExDarb}
  Every $\rX_1$ and $\rX_2$ operator is Darboux-connected to a
  classical operator.  Furthermore, the intertwining operators that
  connect the classical operator to the $\rX$-operator  also
  connect the standard flag to the exceptional flag.
\end{thm}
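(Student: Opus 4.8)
The plan is to prove Theorem \ref{thm:ExDarb} constructively, by exhibiting for each $\rX_1$ and $\rX_2$ operator an explicit factorization whose partner operator is either classical or one step closer to classical, and then iterating. The guiding principle is that the number of poles (and their orders in the associated weight, recorded by the superindex notation in $\cE^{(1)}, \cE^{(11)}, \cE^{(2)}$) measures the ``distance'' from the classical flag, and each Darboux transformation should reduce this distance by one. First I would handle the $\rX_1$ case: by Theorem \ref{thm:X1flag} every $\rX_1$ flag is affine-equivalent to $\cE^{(1)}(1;0)$ or $\cE^{(1)}(0;0)$, and for each I would write the preserving operator $T$ in the form $T = BA + \lambda_0$ with $A[y] = y' - w\, y$ a first-order intertwiner annihilating a suitable seed eigenfunction, checking by direct computation that the partner $\hT = AB + \lambda_0$ is a classical (Bochner) operator preserving the standard flag $\cP$. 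Here the seed is the eigenpolynomial of $T$ living in the ``missing'' degree slot, so that $A$ deletes exactly the gap and restores the standard degree sequence.

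For the $\rX_2$ case I would use the classification already established in Theorem \ref{thm:X2flag}: every such operator preserves a flag of type $\cE^{(11)}$ (two simple poles) or $\cE^{(2)}$ (one simple pole), with the explicit $\cD_2$-bases provided in Propositions \ref{prop:E11D2} and \ref{prop:E2D2}. For the two-pole $\cE^{(11)}$ operators I expect a factorization $T = BA + \lambda_0$ whose intermediate operator $\hT = AB + \lambda_0$ preserves an $\rX_1$ flag of type $\cE^{(1)}$ — removing one of the two poles — so that composing with the $\rX_1$ step already constructed yields a path of length two to the classical flag, establishing the $2$-step claim of Theorem \ref{thm:X12flagstep} simultaneously. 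The intertwiner $A$ at each stage is again determined by a seed eigenfunction of $T$, and verifying that $\hT$ has the reduced pole structure is a matter of carrying the Laurent decomposition of Lemma \ref{lem:Tansatz} through the conjugation $A T A^{-1}$. For the one-pole $\cE^{(2)}$ operators, where the order sequence is $I_0 = \{0,2,4,5,\ldots\}$, I anticipate that a single factorization already lands on a classical operator in the stable subcases, while the degenerate subclasses may require a two-step path through an intermediate $\rX_1$ flag.

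The second assertion — that the \emph{same} intertwiners connecting the operators also connect the flags — I would obtain as a by-product of the construction rather than as a separate argument. Since the seed function used to build $A$ lies in $\cU$ and $A$ lowers (or preserves) degree, one checks directly that $A$ and $B$ satisfy one of the three inclusion patterns of Definition \ref{def:darboux-connected} (formally isospectral, state-deleting, or state-adding), and the relation $T = BA + \lambda_0$, $\hT = AB + \lambda_0$ guarantees these are genuine algebraic Darboux transformations in the sense of Definition \ref{def:ratfact}. Degree tracking on the explicit degree-regular bases from Section \ref{sect:Classif} then confirms the flag-level intertwining at every step.

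The main obstacle, I expect, will be the $\cE^{(2)}$ one-pole case. Unlike the two-pole situation where each Darboux step transparently deletes one pole, here a single pole of the operator carries codimension two in the flag, so it is not a priori clear that one factorization can drop the codimension by two at once or whether an intermediate $\rX_1$ stage exists; the constraint \eqref{eq:E2constraint} separating the $(2a)$, $(2b)$, $(2c)$ subfamilies suggests each branch may behave differently, and one must check case by case that the seed eigenfunction producing the right partner actually exists and is a genuine polynomial (and that the weight positivity and endpoint conditions of Definition \ref{def:OPS} are not violated along the way). Confirming that the flag moduli $a$ in the $(2b)$ and $(2c)$ families do not obstruct the factorization — that the construction works uniformly across the continuous parameter rather than only at isolated values — is where I would concentrate the most care.
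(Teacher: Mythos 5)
Your overall architecture (classify the flags, exhibit explicit first-order intertwiners, track degrees on the degree-regular bases) matches the paper's, but two of your guiding assumptions are wrong in a way that would derail the construction. First, your principle that each Darboux step removes one pole, so that two-pole $\cE^{(11)}$ operators need two steps and one-pole $\cE^{(2)}$ operators need one, is exactly backwards in the generic cases: the paper shows that the two-pole flag $\cE^{(11)}_{23}$ is a \emph{1-step} flag (Lemma \ref{lem:1stepE1123}), connected directly to the standard flag by a single pair $A,B$ where $A$ has poles at both $0$ and $1$, while \emph{all} of the one-pole flags $\cE^{(2a)},\cE^{(2b)},\cE^{(2c)}$ are \emph{2-step} (Lemmas \ref{lem:2stepE2a}--\ref{lem:2stepE2c}), passing through an intermediate $\rX_1$ or $\cE^{(11)}$ flag. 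The number of poles of the operator does not measure the number of Darboux steps. Relatedly, your description of the seed as ``the eigenpolynomial of $T$ living in the missing degree slot'' cannot be right: the missing degrees contain no eigenpolynomials by definition. In the state-deleting cases the seed is the \emph{lowest} element $U_1$ of the flag, and in the isospectral cases (e.g.\ $\cE^{(11)}_{23}$, or the $\rX_1$ case where $\phi=e^{az}$) the factorization eigenfunction is quasi-rational and not a polynomial in the flag at all.

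Second, and more structurally, your proposal exhibits one factorization per flag, but the theorem asserts that \emph{every} operator preserving an $\rX_1$ or $\rX_2$ flag is Darboux-connected to a classical one, and $\cD_2(\cU)$ can be up to $4$-dimensional (Proposition \ref{prop:E11D2}). The paper closes this gap with two mechanisms you do not mention: dimensional exhaustion (Lemmas \ref{lem:X1Darboux} and \ref{lem:dim2Darboux} show that when $\dim\cD_2(\cU)$ equals the dimension of the family $\{cB_{\al}A_{\al'}+\lambda\}$, every $T\in\cD_2(\cU)$ is already of that factored form), and the explicit state-deleting factorization (Lemma \ref{lem:statedel}, which builds $B$ from the coefficients of an \emph{arbitrary} $T\in\cD_2(\cU)$ using $T=BA+\lambda$ with $\lambda$ the eigenvalue on $U_1$). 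Without one of these, connecting the flags does not yet connect all the operators. Finally, your concern about weight positivity and endpoint conditions is misplaced here: Theorem \ref{thm:ExDarb} is purely algebraic, and the Sturm--Liouville selection is deferred to Sections 5--6.
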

\noindent 
As we show in the next section, one consequence of Theorem
\ref{thm:ExDarb} is that all $\rX_2$ and $\rX_1$ orthogonal polynomials
can be expressed as certain Wronskians involving classical OPs.

Using the classification of $\rX_1$ and $\rX_2$ flags from the
preceding section, 
the proof of Theorem \ref{thm:X12flagstep} is broken up into 
a series of  Lemmas.
It turns out that Theorem \ref{thm:ExDarb} is a consequence of Theorem
\ref{thm:X12flagstep}.  Our proof strategy is to show that if two
polynomial flags are Darboux-connected, then so are the operators that
preserve them.  This fact is established by Lemmas
\ref{lem:X1Darboux}, \ref{lem:statedel}  and \ref{lem:dim2Darboux}.  We
complete the proof of Theorem \ref{thm:ExDarb} at the end of the
present section.

\begin{lem}
  \label{lem:1stepX1}
  Every $\rX_1$ polynomial flag is a 1-step flag.
\end{lem}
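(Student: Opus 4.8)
The plan is to prove Lemma \ref{lem:1stepX1} by exhibiting, for each $\rX_1$ flag in the classification of Theorem \ref{thm:X1flag}, an explicit rational factorization $T = BA + \lambda_0$ of an associated $\rX_1$-operator such that the intertwiners $A,B$ connect the flag to the standard flag $\cP$ (equivalently, to a classical flag, since any classical flag is affine-equivalent to $\cP$). By Theorem \ref{thm:X1flag} there are only two cases up to affine equivalence: the stable flag $\cE^{(1)}(1;0)$ and the unstable monomial flag $\cE^{(1)}(0;0)$, so it suffices to treat these two representatives. Since a first-order intertwiner $A$ of the form $A[y] = y' - w(z)\,y$ is what links $\cP$ to a codimension-one flag, the central task is to identify the correct logarithmic-derivative seed $w$ in each case.

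First I would treat the unstable flag $\cE^{(1)}(0;0) = \lspan\{1,z^2,z^3,\ldots\}$, whose preserving operator (Example \ref{ex:x1unstable}) is $T[y] = y'' - \tfrac{2y'}{z}$. I would factor this as $T = BA$ with $A[y] = y' - \tfrac{1}{z}y = z\,(y/z)'$ and $B[y]=y'+\tfrac{1}{z}y$ (taking $\lambda_0=0$), then check the formally state-deleting relation \eqref{eq:flagstatedel}: the operator $A$ kills the lowest element $1$ (indeed $A[1] = -1/z$, so one instead chooses the seed so that $A$ annihilates the appropriate constant and maps the remaining basis elements into monomials of $\cP$). Concretely, since $A$ sends $z^k \mapsto (k-1)z^{k-1}$, it carries $\{1,z^2,z^3,\ldots\}$ onto $\lspan\{z,z^2,\ldots\}$, which after a further trivial step is the standard flag; and $B$ maps back in the manner required by one of the three displays in Definition \ref{def:darboux-connected}. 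I would verify that the partner $\hT = AB$ preserves $\cP$, so that the transformation is a genuine algebraic Darboux transformation and the flag is exhibited as a 1-step flag.

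For the stable flag $\cE^{(1)}(1;0)=\lspan\{1+z,z^2,z^3,\ldots\}$, with preserving operator $T[y]=y''-2(1+\tfrac1z)y'+\tfrac2z y$ from Example \ref{ex:x1stable}, I would seek a seed $w$ with $A[y]=y'-wy$ annihilating the lowest eigenpolynomial $1+z$, forcing $w = 1/(1+z)$; this is precisely the logarithmic derivative of the factorizing eigenfunction. I would then compute $B$ from the factorization $T=BA+\lambda_0$ (matching second- and first-order terms determines $B[y]=y'+by$ and the constant $b$ and $\lambda_0$), and confirm that $A$ carries each $U_k$ of the flag into $\cP_{k-1}$ while $B$ carries $\cP_{k-1}$ back into $U_k$, i.e.\ the state-deleting pattern \eqref{eq:flagstatedel} with $A[U_1]=0$. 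Since $\cP$ is a classical (codimension-zero) flag, this displays $\cE^{(1)}(1;0)$ as Darboux-connected to the standard flag in one step.

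The main obstacle I anticipate is not the existence of the factorization — first-order rational factorizations of second-order operators are standard — but verifying the \emph{flag-level} intertwining relations of Definition \ref{def:darboux-connected} cleanly, namely that $A$ and $B$ respect the nesting and map the $\rX_1$ flag exactly onto the standard flag rather than merely mapping total spaces into $\cP$. The delicate point is the degree bookkeeping at the bottom of the flag, where the codimension-one gap lives: one must check that $A$ annihilates precisely the single lowest element (so that the degree sequence shifts correctly) and that no spurious pole is introduced in $B$ that would spoil rationality or the flag-preservation of $\hT$. Because the general principle ``Darboux-connected flags have Darboux-connected operators'' is deferred to Lemmas \ref{lem:X1Darboux}, \ref{lem:statedel} and \ref{lem:dim2Darboux}, here I would keep the argument concrete and case-by-case, relying only on Theorem \ref{thm:X1flag} to reduce to the two affine representatives.
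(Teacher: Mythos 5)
There is a genuine gap, and it is conceptual rather than computational: in both cases you choose the wrong factorization eigenfunction. For the stable flag $\cE^{(1)}(1;0)$ you take the seed $w=1/(1+z)$, i.e.\ you state-delete the lowest \emph{polynomial} $1+z$ of the exceptional flag. But deleting the bottom element of a codimension-one flag with degree sequence $1,2,3,\ldots$ produces a flag $\hcU$ with $\dim \hU_i=i$ and $\deg \hU_i=i$, hence \emph{another} codimension-one flag, not the standard flag. Concretely, with the gauge needed to make the image polynomial one gets $A[y]=((1+z)y'-y)/z$, so $A[z^k]=z^{k-2}\bigl(k+(k-1)z\bigr)$ and $A[U_2]=\lspan\{2+z\}\not\subset\cP_0$; the state-deleting relations \eqref{eq:flagstatedel} with $\hcU=\cP$ already fail at $i=1$. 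Likewise the $B$ determined by $T=BA+\lambda_0$ satisfies $B[1]=-(1+2z)/(1+z)\notin\cP$, so your claimed containment $B[\cP_{k-1}]\subset U_k$ is false (and without the gauge factor, $A[z^2]=z(2+z)/(1+z)$ is not even a polynomial). The correct seed is the quasi-rational eigenfunction $e^{az}$, with constant logarithmic derivative $w=a$, combined with the factorization gauge $b=1/z$: the paper takes $A[y]=(y'-ay)/z$ and $B[y]=zy'-(az+1)y$, checks $A[U_i]\subset\cP_{i-1}$ and $B[\cP_{i-1}]\subset U_i$ directly (the isospectral pattern \eqref{eq:flagiso}, not \eqref{eq:flagstatedel}), and handles both affine normal forms of Theorem \ref{thm:X1flag} uniformly in the parameter $a$.

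The unstable case in your proposal has the same defect plus arithmetic errors. With $A[y]=y'-y/z$ and $B[y]=y'+y/z$ one computes $BA[y]=y''\neq T[y]=y''-2y'/z$, so the asserted factorization does not hold; $A[1]=-1/z$ is not a polynomial, so $A$ does not map the flag into $\cP$; and even after repairing the seed to $w=0$ the image $\lspan\{z,z^2,\ldots\}$ is the imprimitive flag $z\,\cP$, so the ``further trivial step'' you invoke is an extra transformation that would spoil the 1-step count. All of these issues are cured simultaneously by the gauge factor $1/z$: the paper's intertwiner for $a=0$ is $A[y]=y'/z$, which annihilates $1$, sends $z^k\mapsto kz^{k-2}$, and lands exactly on the standard flag $\cP$. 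Your reduction to the two affine representatives is fine and coincides with the paper's starting point, but the explicit intertwiners carrying the whole argument are incorrect in both cases.
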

\begin{proof}
  Let $\cU=\cE(a;b)$ be an $\rX_1$ flag as per Theorem
  \ref{thm:X1flag}. Without loss of generality, $b=0$.  Define the 1st
  order operators
  \begin{equation}
    \label{eq:X1ABdef}
    A[y]:= \frac{y'-ay}{z},\quad
    B[y] := z y'-(az+1) y
  \end{equation}
  By inspection, 
  \[A[U_i] \subset \cP_{i-1},\quad i=1,2,\ldots\]
  Also, 
  \[ B[y]'(0)  - a B[y](0) = y'(0)- ay(0)-y'(0)+ a y(0)= 0 \]
  Hence,
  \[ B[\cP_{i-1}] \subset U_{i},\quad i=1,2,\ldots. \] Therefore, $AB$
  preserves the standard flag, while $BA$ leaves invariant $U_i$ for
  every $i=1,2,\ldots$ .
\end{proof}
\begin{lem}
  \label{lem:X1Darboux}
  Every $\rX_1$ operator is Darboux-connected to a classical operator.
\end{lem}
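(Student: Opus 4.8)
The plan is to upgrade the flag-level Darboux connection of Lemma~\ref{lem:1stepX1} to the level of operators. Let $T$ be an $\rX_1$ operator, so that it preserves an $\rX_1$ flag which, by Theorem~\ref{thm:X1flag}, is affine-equivalent to $\cE^{(1)}(1;0)$ or $\cE^{(1)}(0;0)$; after an affine change the single pole sits at $z=0$ and, by Lemma~\ref{lem:1stordcond}, every $y$ in the flag obeys $y'(0)=a\,y(0)$ with $a\in\{0,1\}$. Recall from the proof of Lemma~\ref{lem:1stepX1} the first order operators $A[y]=(y'-ay)/z$ and $B[y]=zy'-(az+1)y$, which satisfy $A[U_i]\subset\cP_{i-1}$ and $B[\cP_{i-1}]\subset U_i$. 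My first step would be to seek a factorization $T=BA+\lambda_0$ for a suitable constant $\lambda_0$, after rescaling $A$ and $B$ if necessary. Granting such a factorization, the partner $\hat T:=AB+\lambda_0$ preserves the standard flag, since $AB$ carries $\cP_{i-1}\to U_i\to\cP_{i-1}$, and is therefore classical; moreover associativity gives $A\,T=A(BA+\lambda_0)=(AB+\lambda_0)A=\hat T\,A$, so $T$ and $\hat T$ are joined by a single algebraic Darboux transformation whose intertwiners are exactly those relating the $\rX_1$ flag to the standard flag, as required by Theorem~\ref{thm:ExDarb}.

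The substance of the argument is thus the existence of the factorization $T=BA+\lambda_0$, which I would obtain through the rational factorization formalism. Choosing $\lambda_0$ together with a seed eigenfunction $\phi$ satisfying $T\phi=\lambda_0\phi$, the natural right factor is $A[y]=\frac1z\big(y'-(\phi'/\phi)y\big)$; the demand that $A$ map the flag into the polynomials, rather than introduce new poles, forces $\phi$ to be the solution compatible with the first order condition at the pole, i.e.\ $\phi'(0)/\phi(0)=a$ with $\phi$ free of finite zeros, so that $\phi'/\phi$ is a polynomial. One then recovers $B$ from $BA=T-\lambda_0$ and verifies that its coefficients are polynomial, the gauge factor $1/z$ being precisely what absorbs the simple pole of $T$ and leaves $\hat T$ in the Bochner form of Theorem~\ref{thm:bochner}. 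For the canonical operators $BA$ attached to the two normal forms this is a short direct computation: for $\cE^{(1)}(0;0)$ the seed is $\phi=1$ and one finds $\hat T=y''$, while for $\cE^{(1)}(1;0)$ the seed is $\phi=e^{z}$ and one finds $\hat T=y''-2y'+y$, both manifestly classical.

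The step I expect to be the main obstacle is the passage from the canonical representative $BA$ to an arbitrary $\rX_1$ operator preserving the same flag. The space $\cD_2(\cU)$ is larger than the line through $BA$: it also contains second order operators that simultaneously preserve the standard flag, and adjoining such a summand to $T$ can spoil the existence of an admissible seed $\phi$ with polynomial logarithmic derivative, hence of a pole-free partner. I would control this by feeding in the explicit coefficient structure coming from the classification behind Theorem~\ref{thm:X1flag}, reducing $T$ modulo an additive constant, an overall scale and a gauge to one of finitely many normal forms, and then confirming the pole cancellation in $\hat T$ directly for each. It is in this last verification---identifying the correct seed for each arising operator and checking that the simple pole is absorbed so that $\hat T$ is genuinely of Bochner type---that the argument is most delicate.
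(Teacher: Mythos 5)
Your outline isolates the right difficulty --- passing from the canonical product $BA$ to an \emph{arbitrary} operator preserving the flag --- but the step you propose to close it does not work. The space $\cD_2(\cU)$ for an $\rX_1$ flag is $4$-dimensional (this is Proposition 4.10 of \cite{GKM6}, which the paper invokes), whereas $\{cBA+\lambda\}$ with $A,B$ as in \eqref{eq:X1ABdef} is only a $2$-parameter family. Quotienting a $4$-parameter family by an additive constant and an overall scale still leaves a $2$-parameter continuum of genuinely distinct operators; and the remaining reductions you invoke are not available: a gauge transformation $\mu T\mu^{-1}$ with nonconstant $\mu$ moves you to an \emph{imprimitive} flag $\mu\cU$, and once the flag is normalized to $\cE^{(1)}(1;0)$ the residual affine symmetries are discrete. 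So there is no reduction to ``finitely many normal forms,'' and your concluding verification --- finding an admissible quasi-rational seed $\phi$ and checking pole cancellation in $\hat T$ ``for each'' --- is left to be carried out over a two-dimensional family, which is exactly the unproven content of the lemma. Your explicit computations for the two canonical operators ($\phi=1$, $\phi=e^z$) are correct but only treat a single line in $\cD_2(\cU)$.

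The paper closes this gap differently, and you may want to compare: it deforms the intertwiners to $A_{\al_1}[y]=A[y]+\al_1 y'$ and $B_{\al_2}[y]=B[y]+\al_2 z^2 y'$, checks that these still satisfy $A_{\al_1}[U_i]\subset\cP_{i-1}$ and $B_{\al_2}[\cP_{i-1}]\subset U_i$, and observes that the family $cB_{\al_2}A_{\al_1}+\lambda$ spans a $4$-dimensional space of operators preserving the flag. Since $\dim\cD_2(\cU)=4$, dimensional exhaustion shows that \emph{every} $T\in\cD_2(\cU)$ is of the form $cB_{\al_2}A_{\al_1}+\lambda$, and the partner $cA_{\al_1}B_{\al_2}+\lambda$ is classical by the mapping properties of the deformed intertwiners. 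If you want to salvage your seed-function route instead, you would need to prove that every $T$ in the full $4$-dimensional $\cD_2(\cU)$ admits a quasi-rational eigenfunction $\phi$ for which the induced factorization has a classical partner; that existence statement is precisely what the dimension count replaces, and it is not automatic from the factorization formalism alone.
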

\begin{proof}
  Let $\cU=\cE(a;b)$ be an $\rX_1$ flag as per Theorem
  \ref{thm:X1flag}. Without loss of generality, $b=0$.  Let
  \[ A_{\al_1}[y] = A[y] + \al_1 y',\quad B_{\al_2}[y] = B[y] + \al_2 z^2 y',\]
  where $A,B$ are
  the operators defined in \eqref{eq:X1ABdef}.  Observe that
  \begin{equation}
    \label{eq:X1Darboux2}
    A_{\al_1}[U_i] \subset \cP_{i-1},\qquad 
    B_{\al_2}[\cP_{i-1}] \subset U_{i},\quad i=1,2,\ldots.
  \end{equation}
  and that
  \begin{equation}
    \label{eq:X1Darboux1}
    \dim \{ c B_{\al_2} A_{\al_1} + \lambda : \al_1, \al_2, c,\lambda \in \Rset
    \} = 4.
  \end{equation}
  It follows that every operator in the vector space in
  \eqref{eq:X1Darboux1} preserves the $\rX_1$ flag.  In
  \cite[Proposition 4.10]{GKM6} it was shown that $\dim \cD_2(\cU) = 4$.
  Therefore, by dimensional exhaustion, every operator $T\in\cD_2(\cU)$
  admits a rational factorization of the form $T = c B_{\al_2} A_{\al_1} +
  \lambda$.  To conclude, we observe that, by \eqref{eq:X1Darboux2}, the
  partner operator $\hT=cA_{\al_1}B_{\al_2}+\lambda$ preserves the standard
  polynomial flag.
\end{proof}

\begin{lem}
  \label{lem:dim2D2}
  Let $\cU$ be a polynomial flag. If $\dim\cD_2(\cU)\geq 2$ then there
  exists a second order operator $T\in \cD_2(\cU)$.  If $\dim
  \cD_2(\cU) = 2$, exactly, then $\{1,T\}$ is a basis of $\cD_2(\cU)$.
\end{lem}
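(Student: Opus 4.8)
The plan is to exploit the fact that the identity operator always preserves the flag, so that $1\in\cD_2(\cU)$, and then to manufacture a genuine second-order operator by composition. First I would record a degenerate-case remark: if $S[y]=q\,y'+r\,y$ is an order-$\leq 1$ operator preserving $\cU$ with $q\equiv 0$, then $S[y]=r\,y$, and applying $S$ to a generator $p_1$ of the one-dimensional space $U_1$ forces $r\,p_1=c\,p_1$ for some constant $c$; hence $r$ is constant and $S=c\cdot 1$ is a scalar multiple of the identity. Consequently, any element of $\cD_2(\cU)$ that is \emph{not} a scalar multiple of the identity is either genuinely second order (nonzero coefficient of $y''$) or a genuine first-order operator $S[y]=q\,y'+r\,y$ with $q\not\equiv 0$.

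For the first assertion, since $\dim\cD_2(\cU)\geq 2$ and $1\in\cD_2(\cU)$, I may choose $S\in\cD_2(\cU)$ linearly independent from $1$. If $S$ is already second order, I take $T=S$ and am done. Otherwise, by the remark above, $S[y]=q\,y'+r\,y$ with $q\not\equiv 0$. The key step is to pass to the composition $T:=S\circ S$. Because $S$ maps each $U_k$ into itself, so does $T$, whence $T$ preserves the flag $\cU$; and expanding the composition gives $T[y]=q^2\,y''+(q q'+2qr)\,y'+(q r'+r^2)\,y$, so $T$ is a rational operator of order exactly two, its leading coefficient $q^2$ being a nonzero rational function. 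Thus $T\in\cD_2(\cU)$ is the desired genuine second-order operator.

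The second assertion then follows immediately. Let $T$ be the second-order operator just produced. Since $T$ has a nonzero coefficient of $y''$ while $1$ does not, the pair $\{1,T\}$ is linearly independent. If $\dim\cD_2(\cU)=2$, two linearly independent elements span the whole space, so $\{1,T\}$ is a basis of $\cD_2(\cU)$.

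I expect the only genuine subtlety to be the verification that the composition $S^2$ stays inside $\cD_2(\cU)$ — that is, that it has order at most two and still preserves the flag — together with ruling out the degenerate possibility that a non-identity element of $\cD_2(\cU)$ could be order zero; both are disposed of by the elementary remarks above, and everything else is linear algebra. It is worth noting that this argument uses nothing about $\cU$ beyond its being a flag (in particular, neither primitivity nor finite codimension), so the statement holds in full generality.
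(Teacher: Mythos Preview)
Your proof is correct and follows essentially the same route as the paper: note that $1\in\cD_2(\cU)$, observe that any order-zero operator must be a scalar multiple of the identity, and if the remaining non-scalar element $S$ happens to be first order, pass to $S^2$ to obtain a genuine second-order operator. Your treatment is in fact slightly more explicit than the paper's, since you spell out why a zeroth-order operator preserving $U_1$ must be a constant multiple of the identity, whereas the paper leaves this implicit.
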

\begin{proof} 
  It is clear that $1\in\cD_2(\cU)$. If there exists a first order
  operator $S\in\cD_2(\cU)$, then $S^2\in\cD_2(\cU)$ is a second order
  operator, as was to be shown. It also follows that, if $\cD_2(\cU)$
  contains an operator of the 1st order, then $\dim\cD_2(\cU)\geq 3$.
  Hence, if $\dim \cD_2(\cU) = 2$, exactly, then every
  $T\in\cD_2(\cU)$ is either a constant multiplication operator or an
  operator of the second order.
\end{proof}

\begin{lem}
  \label{lem:statedel}
  Let $\cU\subset \cP$ be a polynomial flag and $A[y]$ a 1st order
  operator such that $\hcU:= A[\cU] \subset \cP$ is also a polynomial
  flag.  Furthermore, suppose that $A[U_1] = \{ 0\}$ and that $\dim
  \cD_2(\cU) \geq 2$.  Then, $\cU, \hcU$ are Darboux
  connected. Furthermore, every operator in $\cD_2(\cU)$ is
  Darboux-connected to an operator in $\cD_2(\hcU)$.
\end{lem}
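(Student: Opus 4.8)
The plan is to factor a second-order flag-preserving operator through the given first-order operator $A$, and then read off the partner intertwiner $B$ and the partner operators. First I would pin down the kernel of $A$: since $A$ is first order with $A[U_1]=\{0\}$ and $\dim U_1 = 1$, the kernel of $A$ in $\cP$ is exactly $U_1 = \lspan\{u\}$ for a polynomial $u$, so $A$ has the form $A[y] = \phi(z)\bigl(y' - (u'/u)\,y\bigr)$ with $\phi$ rational. Because $\ker A = U_1$, the restriction of $A$ to $U_{i+1}$ has $i$-dimensional image; comparing dimensions with the canonical flag of the total space $\hcU = A[\cU]$ gives $\hU_i = A[U_{i+1}]$. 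Thus $A[U_{i+1}]\subset \hU_i$ and $A[U_1]=\{0\}$ already match two of the three conditions of the state-deleting case in Definition \ref{def:darboux-connected}.

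Next, using $\dim \cD_2(\cU) \geq 2$ together with Lemma \ref{lem:dim2D2}, I would select a genuine second-order operator $T \in \cD_2(\cU)$. Since $T$ preserves the flag, $T[U_1]\subset U_1$, so $U_1$ is an eigenspace: $T u = \lambda_0 u$ for a constant $\lambda_0$, and $T - \lambda_0$ annihilates $u$. The key algebraic step is the right-division of $T - \lambda_0$ by the monic first-order factor $y\mapsto y' - (u'/u)y$ that also kills $u$: in the noncommutative ring of differential operators with rational coefficients this division leaves a zeroth-order remainder $\rho$, and evaluating the resulting identity on $u$ forces $\rho = 0$. Hence $T - \lambda_0 = B A$ with $B$ a genuinely first-order rational operator. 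This is exactly where second-orderness of $T$ is used, as it is what makes $B$ first order, as the definition of a Darboux-connected flag requires.

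With this factorization in hand the flag connection is immediate: for each $i$ we have $B[\hU_i] = B\,A[U_{i+1}] = (T-\lambda_0)[U_{i+1}] \subset U_{i+1}$, since $T$ preserves $U_{i+1}$ and $\lambda_0$ is constant. Combined with $A[U_{i+1}]\subset \hU_i$ and $A[U_1]=\{0\}$, this establishes the state-deleting case of Definition \ref{def:darboux-connected}, so $\cU$ and $\hcU$ are Darboux-connected. The partner operator $\hT = A B + \lambda_0$ then preserves $\hcU$ by the symmetric computation $A B[\hU_i] \subset A[U_{i+1}] = \hU_i$.

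Finally, for the operator statement I would repeat the mechanism for an arbitrary $S \in \cD_2(\cU)$: flag invariance gives $S u = \mu_0 u$, so $S-\mu_0$ annihilates $u$ and right-divides as $S - \mu_0 = B_S A$ with $B_S$ rational of order at most one. The partner $\hat S := A B_S + \mu_0$ has order at most two and preserves $\hcU$, because $B_S[\hU_i] = (S-\mu_0)[U_{i+1}] \subset U_{i+1}$ and $A[U_{i+1}] = \hU_i$; hence $\hat S \in \cD_2(\hcU)$ and $S$ is Darboux-connected to it through $S = B_S A + \mu_0$, $\hat S = A B_S + \mu_0$. I expect the main obstacle to be the rational-coefficient factorization step, namely checking that right-division by the factor killing $u$ terminates with vanishing remainder and yields rational (not merely meromorphic) coefficients, together with the bookkeeping identifying $A[U_{i+1}]$ with the canonical $i$-dimensional subspace $\hU_i$; the only genuinely delicate point beyond this is the degenerate subcase in which $S$ is first order (so $B_S$ is a multiplication operator), which still delivers $\hat S\in\cD_2(\hcU)$ and is harmless for the conclusion.
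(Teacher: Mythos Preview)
Your proposal is correct and follows essentially the same approach as the paper's proof: both factor $T-\lambda_0$ through $A$ using that $A$ annihilates a generator $\phi$ (your $u$) of $U_1$, identify $\hat U_i = A[U_{i+1}]$ via the rank--nullity count, and then verify $B[\hat U_i]\subset U_{i+1}$ from $T[U_{i+1}]\subset U_{i+1}$. The only cosmetic difference is that the paper writes down explicit formulas $\hat b = p/b$, $\hat w = -w - q/p + b'/b$ for $B$, whereas you phrase the same step as right division by the monic first-order factor $y\mapsto y'-(u'/u)y$ in the ring of rational differential operators; both arguments hinge on the same observation that the remainder must vanish because it kills $u$ (and is zeroth order), and both invoke Lemma~\ref{lem:dim2D2} to guarantee a genuinely second-order $T$ so that $B$ is first order.
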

\begin{proof}
  Choose a non-zero $\phi\in U_1$.  Let $T\in \cD_2(\cU)$ be given.
  Since $\phi$ spans $U_1$ and since $T[U_1]\subset U_1$ we must have
  \[ (T-\lambda)[\phi] = 0 \] for some $\lambda\in \Rset$.  Write
  \begin{align*}
    T[y] &=py''+q y'+ry\\
    A[y] &= b(y'-wy)
  \end{align*}
  where $p(z),q(z),r(z),b(z)$ are rational functions and where
  $w(z)=\phi'(z)/\phi(z)$, because $A[\phi]=0$, as per the above
  assumption.  Next, set
  \[    B[y] = \hb(y'-\hw y) \]
  where
  \[    \hw = -w-q/p+b'/b,\quad    \hb= p/b \]
  A direct calculation then shows that
  \[ T = BA + \lambda.\] 
  Since the kernel of $A|U_{i+1}$ is 1-dimensional we actually
  have
  \[ \hU_i = A[U_{i+1}],\quad i=1,2,\ldots \]
  Since
  \[ T[U_i] \subset U_i,\quad i=1,2,\ldots\] it follows that
  \[ B[\hU_i] \subset U_{i+1},\quad i=1,2,\ldots\] Therefore $AB\in
  \cD_2(\cU)$ and $BA\in \cD_2(\hcU)$.  By Lemma \ref{lem:dim2D2},
  there exists a $T\in \cD_2(\cU)$ such that $p(z)\neq 0$.  This
  proves that $\cU$ and $\hcU$ are Darboux connected.
\end{proof}

\begin{lem}
  \label{lem:dim2Darboux}
  Let $\cU,\hcU$ be Darboux-connected polynomial flags.  If
  $\dim\cD_2(\cU)=2$, then every operator in $\cD_2(\cU)$ is
  Darboux-connected to an operator in $\cD_2(\hcU)$.
\end{lem}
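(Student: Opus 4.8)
The plan is to combine the hypothesis $\dim\cD_2(\cU)=2$ with the intertwiners witnessing the Darboux-connection, so that a single explicit factorization, after rescaling, works for every operator in $\cD_2(\cU)$ at once. First I would unpack Definition \ref{def:darboux-connected}: since $\cU$ and $\hcU$ are Darboux-connected, there are first order rational operators $A,B$ realizing one of the three inclusion patterns. In each pattern one of the composites $BA$, $AB$ sends $U_i$ into $U_i$ for all $i$, while the reversed composite sends $\hU_i$ into $\hU_i$ for all $i$. Concretely, in the formally isospectral and state-deleting cases $BA$ preserves $\cU$ and $AB$ preserves $\hcU$ (in the state-deleting case the relation $A[U_1]=0$ is needed to cover $i=1$), whereas in the state-adding case the roles are swapped and $AB$ preserves $\cU$ while $BA$ preserves $\hcU$. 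Write $S$ for whichever composite preserves $\cU$; then $S\in\cD_2(\cU)$, and because $A$ and $B$ are genuine first order operators their composite has nonzero $y''$-coefficient, so $S$ is of order exactly two.

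Next I would invoke the two-dimensionality. Since $S\in\cD_2(\cU)$ is second order while the constant operator $1$ is not, the pair $\{1,S\}$ is linearly independent, hence a basis of $\cD_2(\cU)$ by Lemma \ref{lem:dim2D2}; equivalently, a two-dimensional $\cD_2(\cU)$ can contain no first order operator, so every element is a scalar combination of $1$ and the fixed second order operator $S$. Therefore every $T\in\cD_2(\cU)$ has the form $T=cS+\lambda$ for scalars $c,\lambda$.

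Finally I would produce the Darboux transformation. Taking $S=BA$ (the state-adding case is identical after interchanging $A$ and $B$), I rescale the left factor and write $T=cS+\lambda=(cB)A+\lambda$, which is a rational factorization in the sense of Definition \ref{def:ratfact}. Its partner operator is then $A(cB)+\lambda=c\,AB+\lambda$. Since $AB$ preserves $\hcU$ and $\cD_2(\hcU)$ is a vector space containing the constants, this partner lies in $\cD_2(\hcU)$. Thus each $T\in\cD_2(\cU)$ is related by one algebraic Darboux transformation to an operator in $\cD_2(\hcU)$, which is exactly the assertion.

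The step I expect to demand the most care is the case analysis over the three inclusion patterns of Definition \ref{def:darboux-connected}: for each pattern I must check both that the chosen composite preserves $\cU$ (invoking $A[U_1]=0$ or $B[U_1]=0$ where relevant) and that the reversed composite preserves $\hcU$, so that the partner genuinely lands in $\cD_2(\hcU)$. The one point that is substantive rather than bookkeeping is the claim that the preserving composite $S$ has order exactly two, since this is precisely what makes $\{1,S\}$ a basis and forces the form $T=cS+\lambda$; this is immediate once one observes that composing two first order rational operators yields a nonvanishing leading second order coefficient.
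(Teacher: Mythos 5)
Your proposal is correct and follows essentially the same route as the paper: observe that $c\,BA+\lambda$ preserves $\cU$, conclude by dimensional exhaustion that every element of the two-dimensional $\cD_2(\cU)$ has this form, and then reverse the factors to land in $\cD_2(\hcU)$. Your additional care over the three inclusion patterns of Definition \ref{def:darboux-connected} and the explicit check that the composite of two genuine first-order operators is of order exactly two (so that $\{1,BA\}$ really spans $\cD_2(\cU)$) merely fills in details the paper leaves implicit.
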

\begin{proof}
  Let $A[y]$ and $B[y]$ be 1st order operators that connect the two
  flags. It is clear that $T=c BA + \lambda$ preserves $\cU$ for all
  $c,\lambda\in \Rset$. By exhaustion every operator in $\cD_2(\cU)$
  has this form. By assumption, the partner operator $\hT=c
  AB+\lambda$ preserves the partner flag $\hcU$.
\end{proof}

\begin{lem}
  \label{lem:1stepE1123}
  The flag $\cE^{(11)}_{23}$ is a 1-step flag.
\end{lem}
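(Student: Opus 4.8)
The plan is to connect $\cE^{(11)}_{23}$ to the standard flag by a single \emph{isospectral} Darboux transformation of the type \eqref{eq:flagiso}, by exhibiting a first order intertwiner $A$ that lowers every degree by two. The $\cE^{(11)}_{23}$-operators have simple poles at $z=0$ and $z=1$, and by Lemma \ref{lem:1stordcond} every $y\in\cE^{(11)}_{23}$ obeys $y'(0)=a_0y(0)$ and $y'(1)=a_1y(1)$. These are exactly the two conditions needed to make the numerator of
\[
  A[y] = \frac{\bigl(a_1+(a_0-a_1)z\bigr)y' - a_0a_1\,y}{z(z-1)}
\]
vanish at both poles: at $z=0$ the numerator equals $a_1\bigl(y'(0)-a_0y(0)\bigr)=0$, and at $z=1$ it equals $a_0\bigl(y'(1)-a_1y(1)\bigr)=0$. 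Hence $z(z-1)$ divides it and $A[y]\in\cP$ for every $y\in\cE^{(11)}_{23}$.

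The crux of the construction is that the coefficient of $y'$ is \emph{linear} while the coefficient of $y$ is \emph{constant}: for $y$ of degree $n$ both terms of the numerator have degree $n$, so after dividing by the quadratic $z(z-1)$ the operator $A$ lowers the degree by exactly two. Thus $A$ sends the degree sequence $\{2,3,4,\dots\}$ of $\cE^{(11)}_{23}$ onto $\{0,1,2,\dots\}$, and since $A$ is injective on the flag it identifies $A[U_i]=\cP_{i-1}$; in other words $\hat\cU=A[\cE^{(11)}_{23}]$ is precisely the standard flag, which establishes the first half $A[U_i]\subset\hat U_i$ of \eqref{eq:flagiso}.

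For the partner intertwiner I would factor the known operator. By Proposition \ref{prop:E11D2} the space $\cD_2(\cE^{(11)}_{23})$ is two dimensional with second order generator $T^{(11)}_{23}$; I would verify, via a Riccati computation for the (generically non-polynomial) seed $\phi$ solving $A[\phi]=0$, that $\phi$ is an eigenfunction of $T^{(11)}_{23}$, which produces a rational factorization $T^{(11)}_{23}-\lambda_0=BA$ with $B$ of first order. Since $A[U_i]=\cP_{i-1}$ and $T^{(11)}_{23}$ preserves the flag, the partner then satisfies automatically
\[
  B[\cP_{i-1}] = B\bigl(A[U_i]\bigr) = (T^{(11)}_{23}-\lambda_0)[U_i]\subseteq U_i,
\]
which is the second half of \eqref{eq:flagiso}. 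Therefore $\cE^{(11)}_{23}$ and the standard flag are Darboux-connected, so $\cE^{(11)}_{23}$ is a $1$-step flag.

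I expect the main obstacle to be the control of $B$, that is, showing that $T^{(11)}_{23}$ genuinely factors through this particular $A$ (equivalently, that $\ker A$ is a $T^{(11)}_{23}$-eigenfunction). A second, more delicate point is confirming that the degree really drops by exactly two for every element: the leading coefficient of the numerator of $A[y]$ is $\bigl((a_0-a_1)n-a_0a_1\bigr)$ times that of $y$, which degenerates precisely when $a_0a_1/(a_0-a_1)$ is a positive integer. At such exceptional moduli the seed $\phi$ becomes a polynomial lying in the flag, $A$ acquires a kernel inside $\cE^{(11)}_{23}$, and the transformation turns state-deleting rather than isospectral; I would handle these values separately, invoking the state-deleting construction of Lemma \ref{lem:statedel} to reach the same conclusion.
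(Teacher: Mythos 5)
Your intertwiner $A$ is, up to sign, exactly the one the paper uses (clearing denominators in $a_1\frac{y'-a_0y}{z}-a_0\frac{y'-a_1y}{z-1}$ gives your numerator), and the first half of the argument is sound — indeed it is sound for \emph{all} admissible moduli, because the containment $A[U_i]\subset\cP_{i-1}$ only needs the degree to drop by \emph{at least} two (numerator of degree $\le n$ divided by $z(z-1)$), not by exactly two. Your stronger claims of exact degree drop, injectivity and surjectivity $A[U_i]=\cP_{i-1}$ are both unnecessary for this half and, as you yourself note, false when $a_0a_1/(a_0-a_1)$ is an integer $n\ge 2$, since then the seed $(a_1+(a_0-a_1)z)^n$ satisfies both point conditions and lies in the flag.

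The genuine gap is the second intertwiner. The paper does not obtain $B$ by factoring $T^{(11)}_{23}$ through $A$; it writes $B$ down explicitly, $B[y]=z(z-1)(2-a_1+(a_1-a_0-4)z)\,y'+((a_0a_1+a_1-a_0)z^2+(2-a_1)a_0z+2-a_1)\,y$, and verifies directly that $B[y]'(0)=a_0B[y](0)$, $B[y]'(1)=a_1B[y](1)$ and that $B$ raises degree by at most two, whence $B[\cP_{j-1}]\subset U_j$ with no reference to $A$, to eigenfunctions, or to surjectivity. Your route leaves two things unestablished: (a) that the seed of $A$ is actually an eigenfunction of $T^{(11)}_{23}$ (you defer this Riccati check, and it is precisely the content of the lemma — without it there is no rational first-order $B$ at all); and (b) the identity $B[\cP_{i-1}]=(T^{(11)}_{23}-\lambda_0)[U_i]$ hinges on the exact equality $A[U_i]=\cP_{i-1}$, which fails at the exceptional moduli. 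Your proposed repair there does not work either: for $n\ge 3$ the polynomial seed lies in $U_{n-1}$ but not in $U_1$, so Lemma \ref{lem:statedel} (which requires $A[U_1]=\{0\}$) is inapplicable and the image $A[\cU]$ is not even a flag; and for $n=2$, where the seed does span $U_1$, state-deletion sends $U_{i+1}$ (of degree $i+2$) to an $i$-dimensional subspace of $\cP_i$, i.e.\ onto a codimension-one flag rather than the standard flag, which would only show that $\cE^{(11)}_{23}$ is a 2-step flag — not the claimed 1-step statement. To close the argument you must exhibit $B$ (or equivalently perform the deferred eigenfunction verification) and then argue via the containments of \eqref{eq:flagiso} alone, as the paper does.
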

\begin{proof}
  Recall that $\cE^{(11)}_{23} = \cE^{(1)}(a_0,a_1;0,1)$ where $a_0
  a_1+a_1-a_0\neq 0$.
  Consider the 1st order operators
  \begin{align*}
    A[y] &= a_1 \frac{y' - a_0 y}{z} -  a_0 \frac{y' -  a_1 y}{z-1}\\
    B[y] &= z(z-1)(2-a_1+(a_1-a_0-4)z)\,y' + \\
    &\qquad + ((a_0 a_1+a_1-a_0)
    z^2+(2-a_1) a_0 z + 2-a_1)\,y 
  \end{align*}
  Let $U_1\subset U_2 \subset \ldots $ be the flag corresponding to
  the total space $\cE^{(11)}_{23}$; see \eqref{eq:E1123dgbasis} for
  a degree regular basis.  A direct calculation shows that
  \[ B[y]'(0) = a_0 B[y](0),\quad B[y]'(1) = a_1 B[y][1] \]
  Since $B$ raises degree by $2$, it follows that 
  \[ B[\cP_{j-1}] \subset U_j,\quad j=1,2,\ldots \] From the
  definition \eqref{eq:E1def}, we see that $A[\cE^{(11)}_{23}] \subset
  \cP$.  Furthermore,
  \begin{align*}
    A[z^j] &=
    \frac{((a_1-a_0) j+a_0 a_1) z^{j-1} - z^{j-2} j a_1}{z-1} \\
    &= \frac{(a_1-j) a_0}{z-1}+ (a_0 a_1 + j(a_1-a_0)) z^{j-2} +
    \text{lower deg. terms}
  \end{align*}
  Since $\deg U_{j} = j+1$, it follows that
  \[ A[U_j] \subset \cP_{j-1},\quad j=1,2,\ldots \]
  as was to be shown.
\end{proof}
\begin{lem}
  \label{lem:1stepE113}
  The flag $\cE^{(11)}_{13}$ is a 2-step flag.
\end{lem}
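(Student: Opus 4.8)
The plan is to realize $\cE^{(11)}_{13}$ as a single (state-deleting) algebraic Darboux transform of an $\rX_1$ flag, and then to invoke Lemma \ref{lem:1stepX1}, which says every $\rX_1$ flag is a 1-step flag; concatenating the two transformations exhibits $\cE^{(11)}_{13}$ as a 2-step flag. Write $\cE^{(11)}_{13} = \cE^{(11)}(a_0,a_1;0,1)$ with $a_1 = a_0/(1+a_0)$, and recall from \eqref{eq:E1113dgbasis} that its lowest element is the linear polynomial $\phi := a_0 z + 1$, which satisfies \emph{both} first-order conditions $\phi'(0) = a_0\phi(0)$ and $\phi'(1) = a_1\phi(1)$ precisely because $a_1 = a_0/(1+a_0)$.

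First I would introduce the first-order rational operator
\[ A[y] := \frac{(a_0 z + 1)y' - a_0 y}{z(z-1)} = \frac{a_0 z + 1}{z(z-1)}\Bigl(y' - \frac{\phi'}{\phi}\,y\Bigr), \]
which annihilates $\phi$ and hence kills $U_1$. The key point is that $A$ maps the whole total space of $\cE^{(11)}_{13}$ into $\cP$: the numerator $(a_0 z+1)y' - a_0 y$ vanishes at $z=0$ by the condition $y'(0)=a_0 y(0)$, and it vanishes at $z=1$ since $(a_0+1)y'(1) - a_0 y(1) = 0$ once $y'(1)=a_1 y(1)$ and $a_1 = a_0/(1+a_0)$ are inserted. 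Because $A$ lowers degrees by exactly two (its leading behaviour on $z^d$ is $a_0(d-1)z^{d-2}$) while annihilating the degree-one element $\phi$, its kernel on each $U_{i+1}$ is the line $U_1$, and the image $\hcU := A[\cE^{(11)}_{13}]$ is a genuine polynomial flag with $\hat U_i = A[U_{i+1}]$ and degree sequence $\{1,2,3,\ldots\}$; that is, a stable codimension-one flag.

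Next I would identify $\hcU$ as a bona fide $\rX_1$ flag. A short computation on the first image elements shows that every $\hat y\in\hcU$ obeys the single first-order constraint $\hat y'(-1/a_0) = \hat a\,\hat y(-1/a_0)$ with $\hat a = a_0(a_0+2)/(a_0+1)$, so that $\hcU = \cE^{(1)}\bigl(a_0(a_0+2)/(a_0+1);\,-1/a_0\bigr)$. The standing hypotheses $a_0\notin\{0,-1,-2\}$ (equivalent to $a_0\neq -1$ together with $(a_0,a_1)\notin\{(0,0),(-2,2)\}$) guarantee that $\hat a\neq 0$ and that $-1/a_0$ is finite, so $\hcU$ is a stable $\rX_1$ flag in the sense of Theorem \ref{thm:X1flag}. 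Since $A[U_1]=0$, since $\hcU = A[\cE^{(11)}_{13}]$ is a polynomial flag, and since $\dim\cD_2(\cE^{(11)}_{13}) = 3 \geq 2$ by Proposition \ref{prop:E11D2}, Lemma \ref{lem:statedel} applies and shows that $\cE^{(11)}_{13}$ and $\hcU$ are Darboux-connected by one transformation. As $\hcU$ is an $\rX_1$ flag it is a 1-step flag by Lemma \ref{lem:1stepX1}, and the concatenation displays $\cE^{(11)}_{13}$ as a 2-step flag.

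The step I expect to be the main obstacle is confirming that $\hcU$ is genuinely an $\rX_1$ flag rather than merely a codimension-one flag, i.e. that the first-order condition at $z=-1/a_0$ holds for \emph{all} image elements and not only the low-degree ones. This can be handled either by a direct verification from the degree-regular basis, or structurally: Lemma \ref{lem:statedel} also transfers the three-dimensional $\cD_2(\cE^{(11)}_{13})$ to second-order operators preserving $\hcU$, so $\hcU$ is a primitive codimension-one flag admitting a non-polynomial operator (it inherits a pole at $-1/a_0$), whence Proposition \ref{prop:polyop} together with the codimension-one classification of Theorem \ref{thm:X1flag} forces it to be an $\rX_1$ flag.
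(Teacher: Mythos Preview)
Your proposal is correct and follows essentially the same approach as the paper: you use the same state-deleting operator $A$ (up to a harmless nonzero scalar), show it annihilates $U_1 = \lspan\{1+a_0 z\}$ and maps $\cE^{(11)}_{13}$ onto the $\rX_1$ flag $\cE^{(1)}\bigl(a_0(a_0+2)/(a_0+1);\,-1/a_0\bigr)$, and then invoke Lemma~\ref{lem:statedel} together with Lemma~\ref{lem:1stepX1}. Your write-up is in fact more careful than the paper's in verifying the first-order condition at $-1/a_0$ and in checking the hypothesis $\dim\cD_2 \geq 2$.
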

\begin{proof}
  The degree regular basis is shown in \eqref{eq:E1113dgbasis}.  In
  particular, 
  \[ U_1 = \lspan \{ 1+a_0 z\} .\] Define
  \[ A[y]:=  \frac{a_1 \cW[y,1+a_0 z]}{z(1-z)} =
  a_1 \frac{y' - a_0 y}{z} -  a_0 \frac{y' -  a_1 y}{z-1},\quad a_1
  = \frac{a_0}{1+a_0}  \]
  A direct calculation shows that
  \[ A[y]'	\left(-1/a_0\right) - a_1 (2+a_0) A[y]\left(-1/a_0\right) = 0,\quad a_1 =
  \frac{a_0}{1+a_0} \]
  Hence 
  \[ A[\cE^{(2)}_{13}] = \cE^{(1)}\left(-a_1(2+a_0); -1/a_0\right) \] The latter
  is an $\rX_1$ flag, and $\rX_1$ flags are 1-step.  Therefore, the
  desired conclusion follows by Lemma \ref{lem:statedel}.
\end{proof}

\begin{lem}
  \label{lem:1stepE103}
  The flag $\cE^{(11)}_{03}$ is a 1-step flag.
\end{lem}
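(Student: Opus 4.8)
The plan is to connect $\cE^{(11)}_{03}$ to the standard flag $\cP$ by a \emph{single} state-deleting Darboux transformation and then invoke Lemma \ref{lem:statedel}. The key structural observation is that $\cE^{(11)}_{03}=\cE^{(11)}(0,0;0,1)$, so by Lemma \ref{lem:1stordcond} (equivalently, directly from the definition \eqref{eq:E1def}) every $y\in\cE^{(11)}_{03}$ satisfies the two first-order constraints $y'(0)=0$ and $y'(1)=0$. Thus $y'$ vanishes at both $z=0$ and $z=1$, so $z(z-1)$ divides $y'$ for every element of the flag. This is precisely what is needed to build an intertwiner that removes both poles simultaneously, which is the feature that makes a single step suffice.

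Concretely, I would take the intertwiner built from the ground state $U_1=\lspan\{1\}$, namely
\[ A[y]:=\frac{\cW[y,1]}{z(z-1)}=\frac{-y'}{z(z-1)}, \]
in direct analogy with the Wronskian intertwiner used in Lemma \ref{lem:1stepE113}. This $A$ is a first-order rational operator with $A[U_1]=\{0\}$, and by the divisibility remark above it maps $\cE^{(11)}_{03}$ into $\cP$. Using the degree-regular basis in \eqref{eq:E1103dgbasis}, a short computation of $A$ on $(z-1)^2(1+2z)$ and on the generic elements $z^{j+2}(z-1)^2$ shows that the images run through exactly one polynomial of each degree $0,1,2,\dots$; that is, $A[U_{i+1}]=\cP_{i-1}$ for all $i\geq 1$, so $\hcU:=A[\cE^{(11)}_{03}]=\cP$ is precisely the standard flag. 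This realizes the state-deleting pattern \eqref{eq:flagstatedel}.

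Finally, since $\dim\cD_2(\cE^{(11)}_{03})=4\geq 2$ by Proposition \ref{prop:E11D2}, the hypotheses of Lemma \ref{lem:statedel} are satisfied, and that lemma supplies the partner intertwiner $B$ with $B[\cP_{i-1}]\subset U_{i+1}$ together with the factorization $T=BA+\lambda$, completing a single Darboux transformation connecting $\cE^{(11)}_{03}$ to $\cP$. Hence $\cE^{(11)}_{03}$ is a $1$-step flag. The only point requiring real care — and the feature that distinguishes this case from $\cE^{(11)}_{13}$, where one step lands only on an $\rX_1$ flag — is verifying that $A[\cE^{(11)}_{03}]$ is genuinely the full \emph{primitive} codimension-zero flag: one must check both that no common factor survives the division by $z(z-1)$ and that the degrees of the images fill in every non-negative integer with no gap. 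Both facts follow from the double first-order vanishing $y'(0)=y'(1)=0$ forced by $a_0=a_1=0$.
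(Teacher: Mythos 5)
Your proposal is correct and follows essentially the same route as the paper: the paper's proof defines the same intertwiner $A[y]=y'/(z(z-1))$ (yours differs only by a sign), verifies by direct calculation on the basis \eqref{eq:E1103dgbasis} that $A[\cE^{(11)}_{03}]=\cP$ as flags, and concludes via Lemma \ref{lem:statedel}. Your additional remarks on divisibility of $y'$ by $z(z-1)$ and on the degree count simply make explicit the "direct calculation" the paper leaves to the reader.
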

\begin{proof}
  Define 
  \[ A[y] :=  \frac{y'}{z(z-1)} \]
  Using \eqref{eq:E1103dgbasis}, a direct calculation shows that 
  \[ A[\cE^{(11)}_{03}] = \cP \]
where the last equality should be understood as an equality between polynomial
flags.  The desired conclusion follows by Lemma \ref{lem:statedel}
\end{proof}

\begin{lem}
  \label{lem:1stepE112}
  The flag $\cE^{(11)}_{12}$ is a 2-step flag.
\end{lem}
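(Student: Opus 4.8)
The plan is to mirror the argument for $\cE^{(11)}_{13}$ in Lemma \ref{lem:1stepE113}: I would exhibit a first-order intertwiner that annihilates the bottom subspace $U_1$, show it carries $\cE^{(11)}_{12}$ onto an $\rX_1$ flag, and then invoke the state-deleting mechanism of Lemma \ref{lem:statedel} together with the fact that $\rX_1$ flags are 1-step (Lemma \ref{lem:1stepX1}).

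From the degree-regular basis \eqref{eq:E1112dgbasis} we have $U_1 = \lspan\{2z-1\}$, so the state-deleting hypothesis of Lemma \ref{lem:statedel} forces the intertwiner to annihilate $2z-1$. I would therefore take
\[ A[y] := \frac{\cW[y,\,2z-1]}{z(z-1)} = \frac{(2z-1)y' - 2y}{z(z-1)}, \]
which arises as the degenerate case $a_0 = -2$, $a_1 = 2$ of the operator in Lemma \ref{lem:1stepE113} and which satisfies $A[U_1] = \{0\}$ by construction. A direct calculation on the basis then gives $A[z^2] = 2$ and, for $j \geq 0$,
\[ A\bigl[z^{j+2}(z-1)^2\bigr] = z^j\bigl(2(j+3)z^2 - 3(j+2)z + (j+2)\bigr), \]
so the images have degree sequence $\{0,2,3,4,\ldots\}$, confirming codimension one.

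Next I would verify that every image polynomial $p$ obeys the single first-order condition $p'(1/2) = 0$; this is immediate for the constant $A[z^2]=2$ and follows for the remaining basis images by differentiating and evaluating at $z = 1/2$. Hence $A[\cE^{(11)}_{12}] = \cE^{(1)}(0;1/2)$, which by Theorem \ref{thm:X1flag} is affine-equivalent to the unstable monomial $\rX_1$ flag. Since this image is an $\rX_1$ flag and $\dim\cD_2(\cE^{(11)}_{12}) = 4 \geq 2$ by Proposition \ref{prop:E11D2}, Lemma \ref{lem:statedel} yields a single state-deleting Darboux step from $\cE^{(11)}_{12}$ to $\cE^{(1)}(0;1/2)$. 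Composing this with the single step from the $\rX_1$ flag to the standard flag (Lemma \ref{lem:1stepX1}) produces the required sequence of two Darboux transformations.

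The delicate point will be the choice of the rational prefactor in $A$. Annihilating $U_1$ only pins $A$ down up to such a prefactor, and the work lies in selecting it (namely $1/(z(z-1))$) so that the image is not merely polynomial but is \emph{exactly} a codimension-one $\rX_1$ flag, with the uniform condition $p'(1/2)=0$ emerging from the computation. I would also note that this is genuinely the $\cE^{(11)}_{12}$ case rather than $\cE^{(11)}_{13}$: at $a_0 = -2$ the degree-three generator of $\cE^{(11)}_{13}$ collapses to degree two, producing the degree sequence $\{1,2,4,5,\ldots\}$ that characterizes $\cE^{(11)}_{12}$.
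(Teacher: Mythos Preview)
Your proposal is correct and follows essentially the same route as the paper's proof: define $A$ as the Wronskian with $2z-1$ divided by $z(z-1)$ (the paper uses $z(1-z)$ with an irrelevant scalar factor $a_1$), show the image is the $\rX_1$ flag $\cE^{(1)}(0;1/2)$, and invoke Lemma~\ref{lem:statedel} together with Lemma~\ref{lem:1stepX1}. Your basis-by-basis verification is more explicit than the paper's; in fact the identity $A[y]'(1/2)=0$ holds for \emph{every} polynomial $y$ (as one sees by expanding the quotient rule at $z=1/2$, where both numerator terms pick up a factor $2z-1$), which is what the paper tacitly uses.
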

\begin{proof}
  The degree regular basis is shown in \eqref{eq:E1112dgbasis}.  In
  particular, note that \[U_1 = \lspan \{ 2 z-1\}.\] Define
  \[ A[y]:=  \frac{a_1 \cW[y,2 z-1]}{z(1-z)} \]
  A direct calculation shows that
  \[ A[y]'(1/2) = 0.\]
  Hence 
  \[ A[\cE^{(2)}_{12}] = \cE^{(0)}(0, 1/2) \]
  The latter is an $\rX_1$ flag, and $\rX_1$ flags are
  1-step.  Therefore, the desired conclusion follows by \ref{lem:statedel}.
\end{proof}

\begin{lem}
  \label{lem:2stepE2b}
  The flag $\cE^{(2b)}_{23}(a)$ is a 2-step flag.
\end{lem}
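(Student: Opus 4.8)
The plan is to follow the same template as Lemmas~\ref{lem:1stepE113} and~\ref{lem:1stepE112}: exhibit a single algebraic Darboux step carrying $\cE^{(2b)}_{23}(a)$ onto an $\rX_1$ flag $\cV$, and then invoke the fact that every $\rX_1$ flag is a $1$-step flag (Lemma~\ref{lem:1stepX1}). The chain ``$\cE^{(2b)}_{23}(a)\to\cV\to\cP$'' then displays $\cE^{(2b)}_{23}(a)$ as a $2$-step flag. Since $\dim\cD_2\!\left(\cE^{(2b)}_{23}(a)\right)=2$ by Proposition~\ref{prop:E2D2}, Lemma~\ref{lem:dim2D2} shows that $\{1,T^{(2b)}_{23}\}$ is a basis of $\cD_2$, so once the flags are Darboux-connected, Lemma~\ref{lem:dim2Darboux} propagates the connection to the whole $\cD_2$-space, which is exactly what Theorem~\ref{thm:ExDarb} will eventually require.

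The essential difference from $\cE^{(11)}_{13}$ and $\cE^{(11)}_{12}$ is that here the bottom subspace $U_1=\lspan\{1+az-z^2\}$ has degree $2$ rather than degree $1$; see \eqref{eq:E2bdef}. Consequently the naive state-deleting recipe $A[y]=\cW[y,\phi]/z$ with $\phi=1+az-z^2$ does \emph{not} work. Indeed, the first-order condition $y'(0)=a\,y(0)$ obeyed by every $y\in\cE^{(2b)}_{23}(a)$ forces only $z\mid\cW[y,\phi]$ (one checks that the next Taylor coefficient of $\cW[y,\phi]$ at $z=0$ equals $y''(0)+2y(0)$, which does not vanish in general), so such an $A$ would preserve degrees and send the degree sequence $\{2,3,4,\dots\}$ to $\{3,4,5,\dots\}$, i.e. to a codimension-three flag. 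What is needed instead is a \emph{degree-lowering}, formally isospectral step of the type \eqref{eq:flagiso}: a first-order intertwiner $A=b(z)\big(\partial_z-w(z)\big)$ built from a seed eigenfunction $\phi_0$ of $T^{(2b)}_{23}$ with $w=\phi_0'/\phi_0$, where $\phi_0$ does \emph{not} lie in the flag, chosen so that $A$ lowers every degree by one and carries $\cE^{(2b)}_{23}(a)$ onto a codimension-one flag.

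Concretely, I would factor $T^{(2b)}_{23}=BA+\lambda_0$ as in the proof of Lemma~\ref{lem:statedel}, with $\hw=-w-q/p+b'/b$ and $\hb=p/b$, and fix the seed $\phi_0$ together with the gauge factor $b$ by demanding that the partner $\hT=AB+\lambda_0$ carry a \emph{single} simple pole, so that $\cV$ is an $\rX_1$ flag and not again an $\rX_2$ flag. The book-keeping is controlled entirely by the two relations defining $\cE^{(2)}(a,a,a;0)$, namely the first-order condition \eqref{eq:E3-1stordcond}, $y'(0)=a\,y(0)$, and the third-order condition \eqref{eq:E3-3rdordcond}, $y'''(0)=6a\,y(0)+3a\,y''(0)$: I expect one of these to be consumed in collapsing the order-two singularity of $T^{(2b)}_{23}$ at $z=0$ into a single Darboux factor, while the surviving linear relation becomes the unique first-order constraint $y'(\tilde b)=\tilde a\,y(\tilde b)$ identifying $\cV$ with a flag $\cE^{(1)}(\tilde a;\tilde b)$ of Theorem~\ref{thm:X1flag}. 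Having located $\cV$, I would verify $A[U_i]\subset\cV_i$ and $B[\cV_i]\subset U_i$ by direct computation on the degree-regular bases, exactly as in Lemmas~\ref{lem:1stepE113} and~\ref{lem:1stepE112}.

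The main obstacle is precisely this second point: pinning down the seed $\phi_0$ and proving that the resulting partner flag is genuinely an $\rX_1$ flag (one simple pole) rather than a second $\rX_2$ flag. Equivalently, one must show that the order-two pole of $T^{(2b)}_{23}$ at $z=0$ can be resolved into a \emph{single} first-order Darboux factor; this is where the constraint \eqref{eq:E2constraint}, specialized to the $(2b)$ branch $a_{23}=a_{01}=a$, should do the work, guaranteeing that after extracting one factor the remaining operator retains only a simple pole. Once this is established, Lemma~\ref{lem:1stepX1} completes the argument, since $\rX_1$ flags are $1$-step, and the reduction of the pole order by one at each step is consistent with the interpretation of the superindex in $\cE^{(2)}$ as the order of the pole of the weight.
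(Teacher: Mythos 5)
Your overall strategy is the right one and is in fact the paper's: the naive state-deleting step with seed $1+az-z^2$ fails for exactly the reason you give, and what is needed is a formally isospectral, degree-lowering intertwiner of type \eqref{eq:flagiso} built from a quasi-rational seed \emph{outside} the flag, landing on an $\rX_1$ flag to which Lemma \ref{lem:1stepX1} applies. However, your proposal stops precisely at the point where the lemma has actual content: you never produce the seed, the intertwiners, or the target flag, and you explicitly defer the verification that the image is a codimension-one flag as ``the main obstacle.'' That verification is not a formality that follows from the constraint \eqref{eq:E2constraint} on the $(2b)$ branch; it requires exhibiting concrete operators and checking the first- and third-order conditions at $z=0$ directly. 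As written, the argument is a plan rather than a proof.

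For comparison, the paper's choice is $A[y] := (y'-ay)/z + Ky'$ with $K=\sqrt{a^2\pm 3}$, i.e.\ gauge $b(z)=(1+Kz)/z$ and seed $\phi_0=(1+Kz)^{a/K}$. One checks that for $y$ satisfying $y'(0)=ay(0)$ and $y'''(0)=3ay''(0)\pm 6ay(0)$ one has $A[y]'(0)=(a+K)\,A[y](0)$, and that $A$ lowers degree by one, so $A[\cE^{(2b)}]\subset \cE^{(1)}(a+K;0)$ --- note that the surviving first-order constraint sits at the \emph{same} point $z=0$ with modified slope $a+K$, and it is the third-order condition that gets consumed, not quite the picture you sketch. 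The return map is $B[y]:= z(1-Kz)y'-(3+(a-2K)z)y$, which raises degree by one and satisfies $B[y]'(0)=aB[y](0)$ and the third-order condition whenever $y'(0)=(a+K)y(0)$, so $B[\cE^{(1)}(a+K;0)]\subset\cE^{(2b)}$. Since $A$ lowers and $B$ raises degree by one, $BA$ and $AB$ preserve their respective flags, and Lemma \ref{lem:1stepX1} finishes the argument. Without some such explicit pair $(A,B)$ --- or at least a proof that a suitable seed exists --- your write-up does not establish the lemma. (Your appeal to Lemma \ref{lem:dim2Darboux} and Proposition \ref{prop:E2D2} is fine but pertains to the operator statement in Theorem \ref{thm:ExDarb}, not to the flag statement being proved here.)
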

\begin{proof}
  Define the operator
  \[ A[y] := (y'-a y)/z + K y',\quad K = \sqrt{a^2\pm 3}\] Applying
  $A$ to the degree regular basis shown in \eqref{eq:E2bdef} gives a
  flag with a stable degree sequence of $1,2,\ldots$.  Imposing
  \[ y'(0) = a y(0),\quad y'''(0) = 3 a y''(0) \pm 6 a
  y(0), \]
  a direct calculation shows that
  \[ A[y]'(0) = (a+K) A[y][0].\] Since the former conditions defines
  $\cE^{(2b)}$ and since the latter conditions defines
  $\cE^{(1)}(a+K;0)$ (see \eqref{eq:EX1def} for the definition), it
  follows that
  \[ A[\cE^{(2b)}] \subset \cE^{(1)}(a+K;0) \]

  Next, define
  \[ B[y]:= z(1-Kz) y' - (3+(a-2K)z) y \]
  If we suppose that
  \[ y'(0) = (a +K) y(0) \]
  then by direct calculation, 
  \[ B[y]'(0) = a B[y](0),\quad  B[y]'''(0) = 3 a y''(0)+ 6 a y(0) \]
  Therefore,
  \[ B[\cE^{(1)}(a+K;0)] \subset \cE^{(2b)}.\]
  
  Next, observe that $A$ lowers degree by $1$ and that $B$ raises
  degree by $1$.    Hence $BA$ and $AB$ do not raise degree and they
  preserve their respective flags.
  Since $\cE(a+K;0)$
  is a 1-step flag (Theorem \ref{thm:X1flag} and Lemma
  \ref{lem:1stepX1}) it follows that $\cE^{(3b)}_{23}$ is a 2-step flag.
\end{proof}

\begin{lem}
  \label{lem:2stepE2c}
  The flag $\cE^{(2c)}_{23}(a)$ is a 2-step flag.
\end{lem}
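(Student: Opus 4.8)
The plan is to mirror the proof of Lemma \ref{lem:2stepE2b}: I will exhibit a degree-lowering first order operator $A$ and a degree-raising first order operator $B$ connecting $\cE^{(2c)}_{23}(a)$ to an $\rX_1$ flag, and then invoke the fact that $\rX_1$ flags are 1-step. Recall from Proposition \ref{prop:E3flags} that $\cE^{(2c)}_{23}(a) = \cE^{(2)}(a,-a(a+1)/6,1;0)$, so its elements are characterized by the first and third order conditions $y'(0)=a\,y(0)$ and $y'''(0)=3\,y''(0)-a(a+1)\,y(0)$. The first condition suggests the ansatz
\[ A[y] := \frac{y'-ay}{z} + (a-1)y' = \frac{(1+(a-1)z)y'-ay}{z}, \]
whose pole at $z=0$ is cancelled on $\cE^{(2c)}_{23}(a)$ precisely because of $y'(0)=a\,y(0)$, so that $A[\cE^{(2c)}_{23}(a)]\subset\cP$ and $A$ lowers degree by one.

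First I would compute $A[y](0)$ and $A[y]'(0)$ in terms of the free Taylor data $y(0)$ and $y''(0)$, eliminating $y'(0)$ and $y'''(0)$ via the two defining conditions. The choice of the constant $a-1$ is dictated by the requirement that the output collapse to a single first order condition: a direct calculation gives $A[y](0)=y''(0)-a\,y(0)$ and $A[y]'(0)=\tfrac{a+1}{2}\bigl(y''(0)-a\,y(0)\bigr)$, whence
\[ A[\cE^{(2c)}_{23}(a)]\subset\cE^{(1)}\!\left(\tfrac{a+1}{2};0\right), \]
the right-hand side being an $\rX_1$ flag. The intermediate flag is thereby fixed, and the determination of the constant $a-1$ together with the target modulus $(a+1)/2$ — forced by the simultaneous matching of \emph{both} the first- and third-order conditions of $\cE^{(2c)}$ — is the one genuinely delicate point.

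For the return map I would take the partner first order operator
\[ B[y] := z(1+(a-1)z)\,y' - (3+(2a-1)z)\,y, \]
which raises degree by one and whose coefficients are chosen so that $BA$ reproduces the operator \eqref{eq:T2c23def} up to an additive constant. The remaining verification is that $B$ sends the intermediate flag back into $\cE^{(2c)}_{23}(a)$: expanding $B[\tilde y]$ for $\tilde y$ satisfying $\tilde y'(0)=\tfrac{a+1}{2}\tilde y(0)$, one checks that $B[\tilde y]'(0)=a\,B[\tilde y](0)$ and $B[\tilde y]'''(0)=3\,B[\tilde y]''(0)-a(a+1)\,B[\tilde y](0)$, so that $B[\cE^{(1)}(\tfrac{a+1}{2};0)]\subset\cE^{(2c)}_{23}(a)$.

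Since $A$ lowers and $B$ raises degree by one, neither $BA$ nor $AB$ raises degree, so $BA$ preserves $\cE^{(2c)}_{23}(a)$ and $AB$ preserves $\cE^{(1)}(\tfrac{a+1}{2};0)$; the two flags are therefore Darboux-connected. As $\cE^{(1)}(\tfrac{a+1}{2};0)$ is a 1-step flag by Theorem \ref{thm:X1flag} and Lemma \ref{lem:1stepX1}, it follows that $\cE^{(2c)}_{23}(a)$ is a 2-step flag. I expect the main obstacle to be pinning down the coefficients in $A$ and $B$: the operator $A$ must simultaneously cancel the pole at $z=0$ and compress the two defining conditions of $\cE^{(2c)}$ into the single condition of an $\rX_1$ flag, while $B$ must regenerate the third-order condition — and it is not a priori evident that a single common constant $a-1$ achieves all of this, although the computation confirms that it does.
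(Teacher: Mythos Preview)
Your proof is correct and follows exactly the same strategy as the paper's: exhibit first-order intertwiners $A,B$ connecting $\cE^{(2c)}_{23}(a)$ to an $\rX_1$ flag, then invoke the 1-step property of $\rX_1$ flags. Your $B$ coincides with the paper's, while your $A$ uses the coefficient $(a-1)$ in place of the paper's $\tfrac{a-1}{2}$, landing in $\cE^{(1)}\bigl(\tfrac{a+1}{2};0\bigr)$ rather than $\cE^{(1)}(1;0)$; in fact your choice is the one compatible with the given $B$, since a direct check shows that this $B$ sends $\cE^{(1)}\bigl(\tfrac{a+1}{2};0\bigr)$ (not $\cE^{(1)}(1;0)$) into $\cE^{(2c)}_{23}(a)$.
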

\begin{proof}
  The argument is the same as for the proof of Lemma
  \ref{lem:2stepE2b}, but with the following operators:
  \begin{align*}
    A[y] &:= \frac{y'-a y}{z} + \frac{a-1}{2} y'\\
    B[y] &:= z(1+(a - 1) z) y' - (3+(2a -1)z) y.
  \end{align*}
  We then have
  \[ A[\cE^{(2c)}]\subset \cE^{(0)}(1;0),\quad B[\cE^{(0)}(1;0)]
  \subset \cE^{(2c)}.\]
\end{proof}

\begin{lem}
  \label{lem:2stepE2a}
  The flags $\cE^{(2a)}_{13}, \cE^{(2a)}_{03}, \cE^{(2a)}_{12} ,
  \cE^{(2a)}_{02}$ are all 2-step flags.
\end{lem}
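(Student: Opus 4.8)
The plan is to handle all four flags exactly as in Lemmas~\ref{lem:2stepE2b} and~\ref{lem:2stepE2c}: for each one I would exhibit a first order rational intertwiner $A$ carrying it onto an $\rX_1$ flag, which is a $1$-step flag by Theorem~\ref{thm:X1flag} and Lemma~\ref{lem:1stepX1}, and then invoke the appropriate transfer lemma. Since $\dim\cD_2\geq 2$ for every flag in the list by Proposition~\ref{prop:E2D2}, Lemma~\ref{lem:statedel} applies whenever $A$ annihilates the ground state, and Lemma~\ref{lem:dim2Darboux} applies whenever $A$ is injective and $\dim\cD_2=2$. In either case the flag is Darboux-connected to a $1$-step flag and is therefore a $2$-step flag.

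For the two flags with $a_{01}=0$, namely $\cE^{(2a)}_{03}$ and $\cE^{(2a)}_{02}$, the defining first order condition is $y'(0)=0$, so $A[y]:=y'/z$ is polynomial on the flag and annihilates the constant ground state. A direct computation gives $A[\cE^{(2a)}_{03}]=\cE^{(1)}(3/2;0)$ and $A[\cE^{(2a)}_{02}]=\cE^{(1)}(0;0)$, both $\rX_1$ flags, and Lemma~\ref{lem:statedel} finishes these cases. For $\cE^{(2a)}_{13}(a)$, where $a_{01}=1$ and the flag obeys $y'(0)=y(0)$, I would use the one parameter family $A[y]:=(y'-y)/z+Ky'$, which is polynomial and lowers degree by one. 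Choosing $K=(1-3a)/2$ forces every image element to satisfy $p'(0)=0$, so that $A[\cE^{(2a)}_{13}]\subset\cE^{(1)}(0;0)$ with equality by a dimension count; since here $\dim\cD_2=2$ and $A$ does not annihilate the ground state, Lemma~\ref{lem:dim2Darboux} applies. The exceptional values $a\in\{-1/3,1/9\}$ would be examined separately, adjusting $K$ or the target $\rX_1$ flag accordingly.

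The delicate case, and the one I expect to be the main obstacle, is $\cE^{(2a)}_{12}$. Its degree sequence $\{1,2,4,5,\dots\}$ has gaps at degrees $0$ and $3$, which are not adjacent; equivalently the order sequence at the pole is $I_0=\{0,2,4,5,\dots\}$, with gaps at orders $1$ and $3$. The point to verify is that the most general first order intertwiner compatible with $y'(0)=y(0)$ and with polynomial image reduces to $A[y]=\eta(y'-y)/z+\beta y'$, which lowers degree by at most one and has its only pole at $z=0$ (a pole at any other point $b$ forces, through Lemma~\ref{lem:1stordcond}, a vanishing numerator and hence no pole at all). Such an $A$ sends the flag either to a codimension two flag or to the codimension one flag $\lspan\{1,z,4z^2-z^3,\dots\}$, whose order sequence at $0$ has no gaps and which therefore supports no nontrivial second order operator. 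Consequently no single first order Darboux step carries $\cE^{(2a)}_{12}$ directly to an $\rX_1$ flag, and the naive strategy breaks down precisely because the two gaps cannot be merged in one step.

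The resolution I would pursue is to realize the connection through an intermediate two-pole flag. The ground state intertwiner $A[y]=((1+z)y'-y)/z$ carries $\cE^{(2a)}_{12}$ onto a flag that, by direct check, also satisfies a second first order condition $y'(-1)=y(-1)$; up to an affine transformation this is a $\cE^{(11)}_{13}$ flag. The hard part, and the heart of the argument, will then be to show that the passage from $\cE^{(2a)}_{12}$ to $\cP$ can be organized into exactly \emph{two} Darboux steps rather than three, by exploiting the full three dimensional freedom in $\cD_2(\cE^{(2a)}_{12})$ together with the third order constraint $y'''(0)=0$ in choosing the factorization; controlling how the pole at $z=0$ migrates under these intertwiners is exactly where the pole structure of the operators enters decisively, and I would expect this bookkeeping, rather than any single explicit computation, to be the principal difficulty.
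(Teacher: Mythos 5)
Your treatment of the two flags with $a_{01}=0$ is correct and coincides with the paper's: $A[y]=y'/z$ kills the constant, $\dim\cD_2\geq 2$ by Proposition \ref{prop:E2D2}, and Lemma \ref{lem:statedel} does the rest. For $\cE^{(2a)}_{13}$, however, your argument has a real hole. You choose $K=(1-3a)/2$ so that $A$ is injective on the flag and lands in $\cE^{(1)}(0;0)$, and then invoke Lemma \ref{lem:dim2Darboux}; but that lemma \emph{presupposes} that the two flags are Darboux-connected, which by Definition \ref{def:darboux-connected} requires exhibiting a second intertwiner $B$ with $B[\hU_i]\subset U_i$. You never produce $B$, and nothing in the setup supplies it for free (that is precisely what Lemma \ref{lem:statedel} manufactures from the factorization $T=BA+\lambda$ when $A$ annihilates $U_1$). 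The paper avoids this entirely by keeping $K=a_{01}$, so that $A[1+z]=0$ and Lemma \ref{lem:statedel} applies; the image is then the two-pole flag $\cE^{(11)}((1+3a)/2,1;0,-1)$, which for $a\neq 0$ is affine-equivalent to a generic $\cE^{(11)}_{23}$ flag and hence $1$-step by Lemma \ref{lem:1stepE1123}. (Also, your list of exceptional parameter values is incomplete: the leading coefficient of $A[z^n]$ is proportional to $n(1-3a)/2-1$, which vanishes for $a=(n-2)/(3n)$ for every $n$ in the degree sequence, not just $n=1,3$; this does not break the containment $A[U_i]\subset\hU_i$, but it does break your claimed equality of flags.)

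The decisive gap is $\cE^{(2a)}_{12}$, which you explicitly leave open. Your diagnosis there is in fact sharp and correct: the state-deleting intertwiner $A[y]=((1+z)y'-y)/z$ sends $\cE^{(2a)}_{12}$ onto $\cE^{(11)}(1/2,1;0,-1)$, whose degree sequence is $1,3,4,5,\ldots$ and which, after the affine map $z\mapsto -z$, is $\cE^{(11)}(-1/2,-1;0,1)$ with $a_0a_1+a_1-a_0=0$ — i.e.\ a degenerate $\cE^{(11)}_{13}$ flag, which Lemma \ref{lem:1stepE113} only establishes to be $2$-step. So the obvious chain gives three steps, and you correctly refuse to claim two without further work; but a proof proposal that ends by naming the principal difficulty is not a proof. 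It is worth saying that the paper's own proof is terse at exactly this point: it concludes with ``we already showed that $\cE^{(11)}$ is a $1$-step flag,'' which is justified for the image of $\cE^{(2a)}_{13}$ (generic $\cE^{(11)}_{23}$) but not, as written, for the image of $\cE^{(2a)}_{12}$. You have located a genuine soft spot in the argument, but to earn the lemma you would still have to close it — for instance by showing directly that the particular flag $\cE^{(11)}(-1/2,-1;0,1)$ is $1$-step, or by finding a different first-order intertwiner out of $\cE^{(2a)}_{12}$ whose image is a $1$-step flag.
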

\begin{proof}
  By Proposition \ref{prop:E3flags}, all of the above flags are
  various specializations of 
  \[ \cE^{(2)}(a_{01},0, a_{23};0) = \lspan\{ 1+ a_{01} z, z^2 + a_{23}
  z^3, z^4, z^5,\ldots \} .\] Hence, it suffices to prove the
  assertion for this general case.  Equivalently, the above flag
  consists of polynomials satisfying
  \begin{equation}
    \label{eq:E3diffcond}
    y'(0) = a_{01} y(0), \quad  y'''(0) =     3 a_{23} y''(0)
  \end{equation}

  Consider the operator
  \[ A[y] := \frac{y'- a_{01} y}{z} + a_{01} y' \] and note that
  \[ A[a_{01} z + 1] = 0. \] 

  Next, observe that
  \begin{align*}
    A[y]'(z) - \frac{1}{2}(a_{01}+3 a_{23}) A[y](z) &= (a_{01} y(0)
    - y'(0)) \left( \frac{1}{z^2} + \frac{(a_{01} + 3
        a_{23})/2}{z}\right) +\\
    &+\qquad    \frac{1}{2}(y'''(0) - 3 a_{23} y''(0))+ O(z) 
  \end{align*}
  Hence, if $y(z)$ satisfies \eqref{eq:E3diffcond}, then $A[y] \in
  \cE^{(1)}((a_{01} + 3 a_{23})/2;0)$.  

  At this point, let us suppose that $a_{01} \neq 0$ and note that
  \begin{align*}
    A[y]'(z) - a_{01} A[y](z) &= (1+a_{01} z) \left(
      \frac{a_{01}}{z^2} y - \frac{1+a_{01} z}{z^2} y'+ \frac{1}{z}y''\right)
  \end{align*}
  Hence $A[y] \in \cE^{(1)}(a_{01};-1/a_{01})$ for all polynomials
  $y(z)$.  Together, the above calculations demonstrate that if
  $a_{01}\neq 0$, then
  \[ A[\cE^{(2a)}] \subset \cE^{(11)}((a_{01} + 3 a_{23})/2,a_{01} ;0,
  -1/a_{01}).\] Hence, by Lemma \ref{lem:statedel}, the flags
  $\cE^{(2a)}_{13}, \cE^{(2a)}_{12}$ are Darboux connected to the flag
  above.  We already showed that $\cE^{(11)}$ is a 1-step flag, so this
  concludes the proof for the case $a_{01} \neq 0$.

  Finally, let us consider the case $a_{01}=0$.  In this case,
  \begin{align*}
    A[y] &= \frac{y'}{z},\\
    \cE^{(2)}(0,0,a_{23};0) &=  \lspan \{  1, z^2+a_{23} z^3,z^4, z^5,\ldots \}
\\
    A[\cE^{(2)}(0,0,a_{23};0)] &= \lspan \{ 2 + 3a_{23} z, z^2, z^3,\ldots \} \\
    &= \cE^{(1)}(3a_{23}/2;0)
  \end{align*}
  By Lemma \ref{lem:1stepX1}, the latter is a 1-step flag. Since
  $A[1]=0$, applying Lemma \ref{lem:statedel} shows that
  $\cE^{(2a)}_{03}, \cE^{(2a)}_{02}$ are both $2$-step flags.
\end{proof}

\begin{proof}[Proof of Theorem \ref{thm:ExDarb}]
  There are two basic mechanisms which we use to give the proof of the
  conjecture for $\rX_2$ and $\rX_1$ operators.  The first mechanism
  is that of dimensional exhaustion, and is utilized in Lemma
  \ref{lem:X1Darboux} and in Lemma \ref{lem:dim2Darboux}.  This
  mechanism is used to prove the conjecture for $\rX_1$ flags (Lemma
  \ref{lem:X1Darboux}) and also used in the proof of Lemmas
  \ref{lem:1stepE1123}, \ref{lem:2stepE2b} and \ref{lem:2stepE2c}.  All
  these cases require that we exhibit both an $A$ operator, which
  relates the given flag $\cU$ to a ``simpler'' flag $\hcU$, and a $B$
  operator that relates $\hcU$ back to $\cU$.

  The other basic argument is conceptually related to state-deleting
  transformations in quantum mechanics.  Here it suffices to show that
  a 1st order operator that annihilates $U_1$ maps the given flag
  $\cU$ to a simpler flag $\hcU$ and to have in hand a second order
  operator that preserves the given $\cU$.  This is the argument of
  Lemma \ref{lem:statedel}.  This argument is utilized in Lemmas
  \ref{lem:1stepE113}, \ref{lem:1stepE103} and \ref{lem:1stepE112}
  \ref{lem:2stepE2a}.  Taken together, these Lemmas cover the cases of
  all possible $\rX_1$ and $\rX_2$ flags and the operators that
  preserve them.
\end{proof}

\section{Polynomial Sturm-Liouville problems and Darboux transformations}

Our main goal is to complete the classification of $\rX_2$ OPS and what
 remains to do is to select from all the $\rX_2$ operators given in
Section \ref{sect:Classif} for each $\rX_2$ flag, those that give rise to a well
defined Sturm Liouville problem. For this reason, in this Section we
need to review some preliminary results from the theory of
Sturm Liouville problems. We will also provide the main definitions and
properties of algebraic Darboux transformations for second order differential
operators. We emphasize that by construction these transformations will map an
SL-OPS into an SL-OPS.

\subsection{Orthogonal polynomials on the real line defined by a Sturm-Liouville
  problem}
Every second-order eigenvalue equation
\[  T[y]:= p(z) y''+ q(z) y' + r(z) y = \lambda y\]
can be put into formal Sturm-Liouville form
\[ - (Py')' + R y = -\lambda W y \]
where
\begin{align}
  \label{eq:Pdef}
  P(z) &= \exp\left(\int^z  \!\!q/p\,dz\right),\\
  \label{eq:Wdef}
  W(z) &= (P/p)(z),\\
  \label{eq:Rdef}
  R(z) &= -(rW)(z),
\end{align}
With the above definitions, the operator $T[y]$ is formally
self-adjoint with respect to the weight $W(z) dz$ in the sense that
Green's formula, below, holds:
\begin{equation}
  \label{eq:Msymmetric}
  \int  T[f]g \, Wdz - \int T[g] f\, W dz=  P(f'g-fg')
\end{equation}
If the operator $T[y]$ has infinitely many polynomial eigenfunctions,
and if an interval of orthogonality can be appropriately chosen so
that $W(z) dz$ has finite moments and  the right-hand side of
\eqref{eq:Msymmetric} vanishes for polynomials $f(z), g(z)$, then the
eigenpolynomials of $T[y]$ constitute an SL-OPS.

By direct inspection, every $\rX_2$ operator listed in Propositions
\ref{prop:E11D2} and \ref{prop:E2D2} has the form
\[ T[y] := p(z) (y''-2(\log \xi)') + q(z) y' + r(z) y \] where $p(z)$
is a quadratic polynomial, $q(z)$ is a linear form, $\xi(z)$ is either
$z(z-1)$ or $z$ and  $r(z)$ is a rational
function with $\xi(z)$ in the denominator.
Applying an affine change of variable,
\[ z = ax + b \] the coefficients $p(z)$ and $q(z)$ can be put into a
normal form.  There are five classes of these normal forms, which we display in Table 1 together with the interval of orthogonality and the
 weight defined by \eqref{eq:Pdef} -\eqref{eq:Wdef}.
\begin{table}[h]\label{tab2}\caption{}
  \centering
  \begin{tabular}{c|c|c|c|c}
    $p(x)$ & $q(x)$ & $W(x)$ & $I$ &  OPS family \\  \hline
    $1$ & $- 2x$ &$\frac{e^{- x^2}}{\xi(x)^2}$ & $(-\infty,\infty)$
&Hermite\\[10pt]
    $x$ & $\alpha+1- x$ & $\frac{e^{-x} x^\alpha}{ \xi(x)^2}$ &
    $(0,\infty)$& Laguerre\\[10pt]
    $1-x^2$ & $\beta-\alpha-(2+\alpha+\beta) x$ & $\frac{(1-x)^\alpha
    (1+x)^\beta}{\xi(x)^2}$ & $(-1,1)$& Jacobi\\[10pt]
    $x^2$ & $2(x\pm 1)$ & $\frac{e^{\mp 2/x}}{\xi(x)^2}$ & n/a & Bessel\\[10pt]
    $1+x^2$ & $\alpha+ 2(\beta+1)x$ & $\frac{(1+x^2)^\beta e^{a \tan^{-1}
      x}}{\xi(x)^2}$ & n/a &twisted Jacobi 
  \end{tabular}
\end{table}

Just as in the analysis of classical orthogonal polynomial systems
\cite{Lesky} the Bessel and twisted Jacobi cases can be excluded
because it is not possible to choose a interval of orthogonality that
satisfies the finite-moment condition.  Therefore the search for
$\rX_2$ orthogonal polynomial systems narrows to the first 3 cases.  In
each case, the requirement is that $\xi(z)$ have no zeros on the
corresponding interval of orthogonality.  For the Laguerre subcase, there
is the additional constraint that $\alpha>-1$.  For the Jacobi
subcase, the constraint is that $\alpha,\beta>-1$.

%

\subsection{Factorization and orthogonal polynomials}
Consider two differential operators:
\begin{align} 
  T[y]&=py''+q y'+ry,\\
  \hT[y] &= py'' + \hq y' + \hr y,
\end{align}
related by a factorization \eqref{eq:TBA} \eqref{eq:hTAB}. Let us
write
\begin{align}
  \label{eq:Adef} A[y] &= b(y'-wy),\\
  \label{eq:Bdef} B[y] &= \hb(y'-\hw y),
\end{align} where $p(z),q(z),r(z),b(z),w(z),\hb(z),\hw(z)$ are all
rational functions.  We will refer to
  \begin{equation}
    \label{eq:phidef}
    \phi(z) = \exp\int^z\!\! w\,dz,\quad w=\phi'/\phi
  \end{equation}
  as a \emph{quasi-rational factorization eigenfunction} and to $b(z)$
  as the \emph{factorization gauge}.
The reason for the above terminology is as follows.  By \eqref{eq:TBA},
\begin{equation}
  \label{eq:phi0rel}
  T[\phi] = \lambda_0 \phi;
\end{equation}
hence the term factorization eigenfunction.  Next, consider two
factorization gauges $b_1(z), b_2(z)$ and let $\hT_1[y], \hT_2[y]$ be
the corresponding partner operators.  Then,
\[ \hT_2 = \mu^{-1} \hT_1\mu,\quad \text{where } \mu(z) = b_1(z)/
b_2(z).\] Therefore, the choice of $b(z)$ determines the gauge of the
partner operator. This is why we refer to $b(z)$ as the factorization
gauge.  

\begin{prop}
  Let $T[y]$ be a second order rational operator that preserves a
  polynomial flag.  Let $\phi(z)$ be a quasi-rational factorization
  eigenfunction with eigenvalue $\lambda_0$.  Then, there exists a
  rational factorization \eqref{eq:TBA} such that the partner operator
  preserves a primitive polynomial flag.
\end{prop}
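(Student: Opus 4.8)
The plan is to manufacture the factorization directly from $\phi$ and then push a flag across the intertwiners, using the freedom in the factorization gauge $b(z)$ twice: once to guarantee that the partner acts on genuine polynomials, and once to make the resulting flag primitive. First I would set $w=\phi'/\phi$, which is rational precisely because $\phi$ is quasi-rational, and take $A[y]=b(y'-wy)$ as in \eqref{eq:Adef}. Choosing $b$ to be the denominator of $w$ (a polynomial) turns $A[y]=by'-(bw)y$ into an operator sending $\cP$ into $\cP$, so in particular $A[\cU]\subset\cP$. With $\hb=p/b$ and $\hw=b'/b-w-q/p$, define $B[y]=\hb(y'-\hw y)$ as in \eqref{eq:Bdef}. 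The computation already performed in the proof of Lemma \ref{lem:statedel} shows that $BA$ and $T-\lambda_0$ have the same second- and first-order coefficients; since $A[\phi]=b(\phi'-w\phi)=0$ and $T[\phi]=\lambda_0\phi$ by \eqref{eq:phi0rel}, evaluating $\bigl(BA-(T-\lambda_0)\bigr)[\phi]=0$ forces their zeroth-order coefficients to agree as well, so $T=BA+\lambda_0$ as in \eqref{eq:TBA}. This is the required rational factorization, with partner $\hT=AB+\lambda_0$ as in \eqref{eq:hTAB}.

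The invariance of a flag under $\hT$ then follows from a one-line intertwining argument. Since $\hT A=(AB+\lambda_0)A=A(BA+\lambda_0)=AT$, every $y\in U_k$ satisfies $\hT[A[y]]=A[T[y]]\in A[U_k]$, because $T$ preserves the flag $\cU$. Setting $\hat U_k:=A[U_k]$, the nested family $\hat U_1\subset\hat U_2\subset\cdots$ thus consists of polynomial subspaces preserved by $\hT$, with total space $\hcU=A[\cU]$. Applying $A$ to a degree-regular basis $\{p_k\}$ of $\cU$ produces polynomials of strictly increasing degree (dropping at most the single basis element that $A$ annihilates in the state-deleting case), which confirms that $\hcU$ is the total space of a genuine polynomial flag.

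Finally I would arrange primitivity. Let $\nu(z)$ be the greatest common divisor of all elements of $A[\cU]$, so that $A[\cU]=\nu\,\hat{\cV}$ with $\hat{\cV}$ primitive. Replacing the gauge $b$ by $b/\nu$ divides $A$ by $\nu$ and, by the gauge relation recorded just before this proposition (with $\mu=b/(b/\nu)=\nu$), conjugates the partner operator to $\nu^{-1}\hT\nu$. Since $\hT$ preserves $A[\cU]=\nu\hat{\cV}$, the gauge-conjugate operator $\nu^{-1}\hT\nu$ preserves the primitive flag $\hat{\cV}$ by Proposition \ref{prop:gaugeequiv}. This primitive flag, together with the factorization in the gauge $b/\nu$, is the desired conclusion.

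The existence of the factorization is automatic from $A[\phi]=0$, so the substance of the statement — and the main obstacle — lies in forcing the partner to preserve a \emph{polynomial} and ultimately \emph{primitive} flag rather than a flag of rational functions; this is exactly what the two gauge adjustments accomplish. The one point that genuinely requires care is the degree bookkeeping behind the claim that $A[\cU]$ is a bona fide flag: when $\deg w=-1$ the leading terms of $by'$ and $(bw)y$ have equal degree, so a cancellation can depress the expected degree shift for one basis element, and one must check on the degree-regular basis that the images, or their canonical reindexing, retain strictly increasing degrees.
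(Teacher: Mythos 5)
Your proposal is correct and follows essentially the same route as the paper: define $A[y]=b(y'-w y)$ with $w=\phi'/\phi$, set $\hb=p/b$ and $\hw=-w-q/p+b'/b$ so that $T=BA+\lambda_0$, and use the intertwining relation $\hT A=AT$ to transport the flag. The only difference is that you make explicit the primitivity step (dividing the gauge $b$ by the greatest common divisor $\nu$ of $A[\cU]$ and conjugating the partner by $\nu$ via Proposition \ref{prop:gaugeequiv}), which the paper's proof only asserts as ``fixing the gauge up to a scalar multiple''; this is a welcome filling-in of detail rather than a different argument.
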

\begin{proof}
  Let $w(z)=\phi'(z)/\phi(z)$ and let $b(z)$ be an as yet unspecified
  rational function. Set
  \begin{align}
    \label{eq:hwdef}
    &\hw = -w-q/p+b'/b,\\
    &\hb= p/b,
  \end{align}
  and let $A[y], B[y]$ be as shown in in \eqref{eq:Adef}
  \eqref{eq:Bdef}.  An elementary calculation shows that
  \eqref{eq:TBA} holds.  Let $y_1, y_2, \ldots$ be a degree-regular
  basis of the flag preserved by $T$.  We require that the flag
  spanned by $A[y_j]$ be polynomial and primitive (no common
  factors). Observe that if we take $b(z)$ to be the reduced
  denominator of $w(z)$, then $A[y_j]$ is a polynomial for all $j$.
  However, this does not guarantee that $A[y_j]$ is free of a common
  factor.  That is indeed a stronger condition which in fact fixes the gauge
$b(z)$ up to a  choice of scalar multiple.  Finally, the intertwining relation
  \begin{equation}
    \label{eq:hTAAT}
    \hT A = A T
  \end{equation}
  implies that  $A[y_j]$ are eigenpolynomials of the partner $\hT$.
\end{proof}

In the preceding subsection, we showed that a second-order operator
$T[y]$ is formally self-adjoint relative to a weight $W$ defined by
\eqref{eq:Pdef} \eqref{eq:Wdef}.  The following Proposition describes
the effect of a factorization transformation on the corresponding
factorization function and the weight.
\begin{prop}
  \label{prop:dualbphiW}
  Suppose that rational operators 
  \[T[y]=py''+qy'+ry,\quad \hT[y]=py''+\hq y'+ \hr y\] are related by
  a rational factorization with factorization eigenfunction $\phi(z)$
  and factorization gauge $b(z)$, Then the dual factorization gauge,
  factorization eigenfunction and weight function are given by
  \begin{align}
    \label{eq:hbdef}
    & b\hb =p \\
    \label{eq:hWdef}
    &\hW/\hb =  W/b,\\
    \label{eq:hphidef}
    &\hb\hphi=1/(W\phi)
  \end{align}
\end{prop}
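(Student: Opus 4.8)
The plan is to verify the three identities by direct computation from the definitions, since every ingredient needed is already available. The relation \eqref{eq:hbdef} requires no work: it is exactly the prescription $\hb = p/b$ used in the construction of the partner factorization in the preceding proposition, so that $b\hb = p$ holds by definition. The remaining two identities will follow by integrating the first-order data $\hw$ recorded in \eqref{eq:hwdef} and by reading off the coefficient $\hq$ of $y'$ in the partner operator.

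For \eqref{eq:hphidef} I would not need $\hq$ at all. Recalling from \eqref{eq:hwdef} that $\hw = -w - q/p + b'/b$ and that $\hphi = \exp\int^z\!\hw\,dz$ by definition, I integrate using $\int^z\! w\,dz = \log\phi$, $\int^z\! (q/p)\,dz = \log P$ (from \eqref{eq:Pdef}) and $\int^z\! (b'/b)\,dz = \log b$; this gives $\hphi = b/(P\phi)$. Multiplying by $\hb = p/b$ cancels the factor $b$ and leaves $\hb\hphi = p/(P\phi)$, and since $P = pW$ by \eqref{eq:Wdef} this is precisely $1/(W\phi)$.

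For \eqref{eq:hWdef} the single genuine computation is the determination of $\hq$, the coefficient of $y'$ in $\hT = AB + \lambda_0$. Expanding $A[B[y]]$ with $A[y] = b(y'-wy)$ and $B[y] = \hb(y'-\hw y)$, the $y''$ coefficient reproduces $b\hb = p$ (a useful consistency check), while the $y'$ coefficient is $\hq = b\bigl(\hb' - \hb(\hw + w)\bigr)$. Substituting $\hw + w = -q/p + b'/b$ and $\hb = p/b$ and simplifying yields $\hq = q + p' - 2p b'/b$. Hence $\hq/p = q/p + p'/p - 2b'/b$, which integrates to $\hP = pP/b^2$, so that $\hW = \hP/p = P/b^2$; dividing by $\hb = p/b$ then gives $\hW/\hb = P/(pb) = W/b$, as asserted.

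No conceptual obstacle arises here; the proof is a bookkeeping exercise. The only point demanding care is the correct extraction of $\hq$ from the composition $AB$, together with the observation that the indefinite integrals defining $P$, $W$, $\phi$ and $\hphi$ are determined only up to multiplicative constants. Because \eqref{eq:hbdef}--\eqref{eq:hphidef} involve only products and ratios of these functions, the integration constants for $\hphi$ and $\hW$ can be chosen consistently with those for $\phi$, $P$ and $W$ so that all three equalities hold on the nose.
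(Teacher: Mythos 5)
Your proposal is correct and follows essentially the same route as the paper: \eqref{eq:hbdef} from the definition $\hb=p/b$, the relation $\hq=q+p'-2pb'/b$ extracted from the $y'$ coefficient of $AB$ (the paper packages this as the symmetric identity $w+\hw=-q/p+b'/b=-\hq/p+\hb'/\hb$, which is the same computation), and then \eqref{eq:hWdef} and \eqref{eq:hphidef} by integrating $\hq/p$ and $\hw$ against \eqref{eq:Pdef}--\eqref{eq:Wdef} and \eqref{eq:phidef}. Your remark about fixing the multiplicative integration constants consistently is a fine point the paper leaves implicit.
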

\begin{proof}
  Equation \eqref{eq:hbdef} follows immediately from \eqref{eq:Adef}
  \eqref{eq:Bdef} \eqref{eq:TBA}.  From there, equation
  \eqref{eq:hTAB} implies that
  \begin{equation}
    \label{eq:whwrel}
    w+\hw = -q/p+b'/b = -\hq/p + \hb'/\hb.
  \end{equation}
  Hence,
  \begin{equation}
    \label{eq:hqqrel}
    \hq = q+p'-2pb'/b.
  \end{equation}
  From here, \eqref{eq:hWdef} follows by equations \eqref{eq:Pdef}
  \eqref{eq:Wdef}.  Equation \eqref{eq:hphidef} follows from
  \eqref{eq:phidef}.
\end{proof}
The dual weights $W, \hW$ allow us to interpret the intertwining
operators $A[y], B[y]$ in terms of a formally adjoint relation
\begin{equation}
    \label{eq:ABadj}
    \int  A[f] g\,\hW dx +
    \int  B[g] f\, Wdx  =  (P/b) f g 
\end{equation}
If the right hand side vanishes on an appropriately chosen interval of
orthogonality, and if the partner operators $T, \hT$ both admit an
infinite sequence of eigenpolynomials, then the operators $T$ and $\hat T$ and
their corresponding eigenfunctions are related by a 1-step Darboux
transformation.

The dual factorization functions $\phi, \hphi$ allow us to express the
adjoint intertwiners as Wronskians:
\begin{align*}
  A[y] &=  b\phi^{-1} \cW[\phi,y]\\
  B[y] &=  \hb\hphi^{-1} \cW[\hphi,y].
\end{align*}

In Theorem \ref{thm:ExDarb} of Section \ref{sect:fact}, we established
that every $\rX_2$-operator is Darboux-connected to a classical
operator and that the requisite intertwiners also connect the
corresponding exceptional flag with the standard polynomial
flag. Theorem \ref{thm:Main} follows as an immediate corollary.   In
light of the above remarks, it is convenient  to give the connecting
intertwiners as Wronskians of factorizing functions of the classical
operators.   Therefore, before turning to the exhaustive
classification, we must review the possible quasi-rational factorizing functions for the classical operators.

\subsection{The $\rX_2$ Hermite polynomials}
The classical Hermite orthogonal polynomials are orthogonal relative
to the weight
\[ W(x) = e^{-x^2}. \] The nth Hermite polynomial
$H_n(x)$ satisfies the differential equation
\[ \cH[H_n] = - 2n H_n \]
where
\[ \cH[y] = y'' -2x y'\]

The exhaustive classification of the $\rX_2$ polynomials confirms the
factorization conjecture.  This means that all $\rX_2$ Hermite
polynomials are given as Wronskians of the classical polynomials
together with fixed quasi-rational factorization eigenfunction of the
classical Hermite operator $\cH[y]$.  These
quasi-rational eigenfunctions are listed below:
\begin{align}
  \psi^{(1)}_n(x) &=  H_n(x), &
  \cH[\psi^{(1)}] &= -2n\psi^{(1)}\\
  \psi^{(2)}_n(x) &=  e^{x^2} H_n(i x), &
  \cH[\psi^{(2)}] &= 2(n+1) \psi^{(2)}.
\end{align}

We will use $\hH_n(x)$ to denote the $\rX_2$ Hermite polynomials,
where the degree index $n$ skips exactly two values.  These
exceptional Hermite polynomials are orthogonal relative to a weight of
the form
\[ \hW(x;\alpha,\beta) = \frac{e^{-x^2}}{\xi(x)^2} \] where the
denominator $\xi(x)$ is a quadratic polynomial.  Consequently, the
$\hH_n(x)$ are eigenpolynomials of an operator of the form
\[ \hcH[y] : = \cH[y] - 2(\log \xi)' y' + r(x) y,\] where $r(x)$ is
rational in $x$ and where the prime denotes a derivative with respect
to $x$.  In order for the weight to be non-singualr, the quadratic
$\xi(x)$ must have imaginary roots.  Also, as we show below, the
rational term $r(x)=0$ always vanishes.  This is established on a
case-by-case basis, and has no apriori explanation.

\subsection{$\rX_2$-Laguerre polynomials}
The classical Laguerre weight is 
\[ W_\alpha(x) = e^{-x} x^\alpha\]
The classical Laguerre operator is 
\[ \cL_\alpha[y] := xy''+(\alpha+1-x) y' \]
The  quasi-rational eigenfunctions of this 
operator are
\begin{align}
  \phi^{(1)}_{n}(x;\alpha) &= L^{(\alpha)}_n(x) &
  \cL_\alpha[\phi^{(1)}_n] &= - n \phi^{(1)}_n \\
  \phi^{(2)}_{n}(x;\alpha) &= x^{-\alpha} L^{(-\alpha)}_n(x) &
  \cL_\alpha[\phi^{(2)}_n] &= (\alpha- n) \phi^{(2)}_n
  \\
  \phi^{(3)}_{n}(x;\alpha) &= e^x  L^{(\alpha)}_n(-x) &
  \cL_\alpha[\phi^{(3)}_n] &= (\alpha+n+1)\phi^{(3)}_n
  \\ 
  \phi^{(4)}_{n}(x;\alpha) &= e^x x^{-\alpha} L^{(-\alpha)}_n(-x) &
  \cL_\alpha[\phi^{(4)}_n] &= (n+1) \phi^{(4)}_n 
\end{align}

In confirmation of the factorization conjecture, all $\rX_2$ Laguerre
polynomials are given as first and second-order Wronskians of the
classical Laguerres and the above factorization functions.  The
$\rX_2$ polynomials themselves will be denoted by $\hL^{(\alpha)}_n$
the range of $n$ omits exactly two degrees.  In all cases, the
$\hL^{(\alpha)}_n$ are orthogonal relative to a weight of the form
\begin{equation}
    \hW(x;\alpha) := \frac{e^{-x} x^\alpha}{\xi(x;\alpha)^2},  
\end{equation}
where the denominator $\xi(x;\alpha)$ is a quadratic polynomial in
$x$.  The parameter $\alpha$ has to be restricted so that
$\xi(x;\alpha)$ has no zeros in the interval of orthogonality $x\in
(0,\infty)$.  The exceptional polynomials $\hL^{(\alpha)}_n$ arise as
eigenpolynomials of a second order operator
\[\hcL_\alpha[y] = x y'' + (1+\alpha-x) y' - 2 (\log \xi)' y' +
r(x;\alpha) y \]
where $r(x;\alpha)$ is a rational function in $x$ which will be
adjusted so that, in all cases, 
\[   \hcL_\alpha[\hL^{(\alpha)}_n] = -n \hL^{(\alpha)}_n\] 

\subsection{The $\rX_2$ Jacobi polynomials}
The classical Jacobi OPs are orthogonal relative to the weight
\[ W(x;\alpha,\beta) = (1-x)^\alpha (1+x)^\beta,\quad
\alpha,\beta>-1. \] The nth Jacobi polynomial
$P^{(\alpha,\beta)}_n(x)$ satisfies the differential equation
\[ \cT_{\alpha,\beta}[P^{(\alpha,\beta)}_n] = -
n(n+\alpha+\beta+1)P^{(\alpha,\beta)}_n \] where
\[ \cT_{\alpha,\beta}[y] = (1-x^2) y'' + (\beta-\alpha -
(2+\alpha+\beta)x) y' \]

The exhaustive classification of the $\rX_2$ polynomials confirms the
factorization conjecture.  This means that all $\rX_2$ Jacobi
polynomials are given as Wronskians of the classical polynomials
together with fixed quasi-rational factorization eigenfunction of the
classical Jacobi operator $\cT_{\alpha,\beta}[y]$.  These
quasi-rational eigenfunctions are listed below:
\begin{align}
  \phi^{(1)}_n(x;\alpha,\beta) &=  P^{(\alpha,\beta)}(x), &
  \cT[\phi^{(1)}] &= -n(n+\alpha+\beta+1) \phi^{(1)}\\
  \phi^{(2)}_n(x;\alpha,\beta) &=  (1+x)^{-\beta}P^{(\alpha,-\beta)}(x), &
  \cT[\phi^{(2)}] &= (\beta-n)(n+\alpha+1) \phi^{(2)}\\
  \phi^{(3)}_n(x;\alpha,\beta) &=  (1-x)^{-\alpha}P^{(-\alpha,\beta)}(x), &
  \cT[\phi^{(3)}] &= (\alpha-n)(n+\beta+1) \phi^{(3)}\\
    \phi^{(4)}_n(x;\alpha,\beta) &=  (1-x)^{-\alpha}(1+x)^{-\beta}
  P^{(-\alpha,-\beta)}(x), & 
  \cT[\phi^{(4)}] &= (n+1)(\alpha+\beta-n) \phi^{(4)}
\end{align}

We will use $\hP^{(\alpha,\beta)}_n(x)$ to denote the $\rX_2$
Jacobi polynomials, where the degree index $n$ skips exactly two values.
These exceptional Jacobi polynomials are orthogonal relative to a
weight of the form
\[ \hW(x;\alpha,\beta) =
\frac{(1-x)^\alpha(1+x)^\beta}{\xi(x;\alpha,\beta)^2} \] where the
denominator $\xi(x;\alpha,\beta)$ is a quadratic polynomial.  
Consequently, the $\hP^{(\alpha,\beta)}_n(x)$ are eigenpolynomials of
an operator of the form
\[ \hcT_{\alpha,\beta}[y] : = \cT_{\alpha,\beta}[y] - 2(1-x^2)(\log
\xi)' y' + r(x;\alpha,\beta) y,\] where $r(x;\alpha,\beta)$ is
rational in $x$ and where the prime denotes a derivative with respect
to $x$.  The parameters $\alpha,\beta >-1$ are so restricted in order
to have finite moments of all orders.  Additional restrictions must be
imposed on $\alpha,\beta$ to ensure that $\xi(x;\alpha,\beta)$ has no
zeros in the interval of orthogonality $x\in (-1,1)$.

\section{Classification of codimension 2 XOPs}\label{sec:XOPs}

The main result of this section is a complete list of $\rX_2$
orthogonal polynomial systems together with the intertwining operators
that connect them to the classical families of Hermite, Laguerre and
Jacobi.   The classification is summarized in the following.

\begin{thm}
\label{thm:x2op}
Up to a real affine transformation of the independent variable, all $\rX_2$
orthogonal polynomial systems are gathered in the following table:

\begin{table}[h]\label{tab:clas}\caption{Classification of $\rX_2$ orthogonal
polynomial systems}
\begin{center}
\begin{tabular}[c]{|p{1.5cm}|p{1.5cm}|p{1.5cm}|p{1.5cm}||p{1.5cm}|p{
1.5cm } |p{ 1.5cm} |} \cline{2-6}
\multicolumn{1}{r}{} & \multicolumn{1}{|l|}{ $\quad\cE^{(11)}_{23}$ } &
$\quad\cE^{(11)}_{13}$ &
$\quad\cE^{(11)}_{03}$ &  $\quad\cE^{(2a)}_{13}$&
$\quad\cE^{(2a)}_{03}$
\\ \hline
\begin{center}Hermite \end{center}&   & &
 \begin{center}\textbf{1-step}\par\S\textbf{\ref{ssect:1103Hermite}}\end{center}
& &   \\ \hline
\begin{center}Laguerre \end{center}&
\begin{center}\textbf{ 1-step }\par
\textbf{\S\ref{ssect:1123Laguerre}}\end{center}& \begin{center} 2-step \par
\S\ref{ssect:1113Laguerre}\end{center}& 
\begin{center} 1-step \par \S\ref{ssect:1103Laguerre}\end{center}& 
\begin{center} 2-step \par \S\ref{ssect:2a13Laguerre}\end{center}&
\begin{center} 2-step \par \S\ref{ssect:2a03Laguerre}\end{center} \\ \hline
\begin{center}Jacobi \end{center}&
 \begin{center}\textbf{1-step} \par\textbf{\S\ref{ssect:1123Jacobi}}\end{center}
&
 \begin{center}2-step \par\S\ref{ssect:1113Jacobi}\end{center} &
 \begin{center}1-step \par\S\ref{ssect:1103Jacobi}\end{center} & 
 \begin{center}2-step \par\S\ref{ssect:2a13Jacobi}\end{center} & 
 \begin{center}2-step \par\S\ref{ssect:2a03Jacobi}\end{center} \\
\hline
\end{tabular}
\end{center}
\end{table}
\end{thm}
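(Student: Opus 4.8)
The plan is to assemble the classification from three ingredients already established: the enumeration of $\rX_2$ flags together with their $\cD_2$ operators (Propositions \ref{prop:E11D2} and \ref{prop:E2D2}), the Sturm-Liouville normal forms of Section 5.1, and the Darboux connection of Theorem \ref{thm:ExDarb}. For each flag in the classification of Theorem \ref{thm:X2flag} I would run the associated operators through the affine reduction of Section 5.1, writing each in the form $T[y] = p(z)\left(y'' - 2(\log\xi)'\, y'\right) + q(z) y' + r(z) y$ with $\xi = z$ or $\xi = z(z-1)$, and bringing the pair $(p,q)$ into one of the five canonical forms of Table 1. The Bessel and twisted-Jacobi columns are discarded immediately by the finite-moment criterion, exactly as in the classical Lesky analysis, leaving only the Hermite, Laguerre, and Jacobi normal forms to be examined.

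The selection step is where the table is actually produced. For each surviving pairing of a flag with a classical family I would impose the three requirements of Definition \ref{def:OPS}: positivity of the weight $W = W_{\mathrm{cl}}/\xi^2$ on the interval $I$, finiteness of all moments, and the vanishing of $pW$ at the endpoints. Since $W_{\mathrm{cl}}$ and $I$ are fixed by the classical family, positivity reduces to the requirement that the quadratic $\xi(x)$ have no zeros in $I$; this condition, together with $\alpha > -1$ in the Laguerre case and $\alpha,\beta > -1$ in the Jacobi case, carves out the admissible ranges of the flag moduli and of the classical parameters. Running this test across all the flags of Propositions \ref{prop:E11flags} and \ref{prop:E3flags} shows that the degenerate flags $\cE^{(11)}_{12}$, $\cE^{(2a)}_{12}$, $\cE^{(2a)}_{02}$, $\cE^{(2b)}_{23}$, and $\cE^{(2c)}_{23}$ admit no admissible weight and are therefore excluded, while the five remaining flags survive precisely in the combinations recorded in the table.

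For each admissible entry I would then exhibit the explicit construction. Theorem \ref{thm:ExDarb} guarantees that the $\rX_2$ operator is Darboux-connected to the corresponding classical operator and that the same intertwiners carry the standard flag to the exceptional one; expressing these intertwiners as Wronskians $A[y] = b\,\phi^{-1}\,\cW[\phi, y]$ through the machinery of Section 5.2 then realizes each $\rX_2$ family as a first- or second-order Wronskian of classical polynomials against the listed quasi-rational eigenfunctions $\phi^{(i)}_n$. The step count recorded in the table (1-step versus 2-step) is inherited directly from the flag analysis of Section \ref{sect:fact}. Along the way one must verify, case by case, that the residual term $r(x)$ in the operators $\hcH$, $\hcL_\alpha$, and $\hcT_{\alpha,\beta}$ actually vanishes, so that each exceptional operator has the clean eigenvalue form $T[\hat{y}_n] = \lambda_n \hat{y}_n$ with the expected spectrum.

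I expect the main obstacle to be twofold. First, the positivity analysis of $\xi$ is delicate: one must track how the roots of the quadratic $\xi(x)$ depend on the flag moduli and on $\alpha,\beta$, and confirm for each family that the admissible parameter ranges are nonempty and correctly described, since this is exactly what separates the genuinely new $\rX_2$-Laguerre and $\rX_2$-Jacobi families from the empty cases. Second, the vanishing of $r(x)$ has, as noted in Section 5, no structural explanation and must be checked by direct computation in every surviving case; the completeness requirement (the third condition of Definition \ref{def:OPS}) likewise demands a separate argument, invoking the density results of \cite{GKM10,GKM12} where they apply and a direct Wronskian-based density argument for the new families.
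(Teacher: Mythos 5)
Your proposal follows essentially the same route as the paper's own proof: take the flag classification of Theorem \ref{thm:X2flag} together with the operators of Propositions \ref{prop:E11D2} and \ref{prop:E2D2}, reduce each operator by an affine change of variable to one of the canonical forms of Table 1 (discarding Bessel and twisted Jacobi on moment grounds), test the resulting weight for regularity and finite moments to carve out the admissible parameter ranges and eliminate exactly the flags $\cE^{(11)}_{12}$, $\cE^{(2a)}_{12}$, $\cE^{(2a)}_{02}$, $\cE^{(2b)}_{23}$, $\cE^{(2c)}_{23}$, and then realize each surviving family through the Wronskian intertwiners supplied by Theorem \ref{thm:ExDarb}. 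The points you flag as requiring case-by-case verification (the root locus of $\xi$ as a function of the moduli and of $\alpha,\beta$, and the vanishing or explicit form of the residual term $r(x)$) are precisely the computations carried out in Sections 6.1--6.3, so the plan is sound and matches the paper's argument.
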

In Table 2 we find the classification of $\rX_2$ orthogonal
polynomial systems. In each cell we give the number of iterated Darboux
transformations to obtain these families from a classical OPS, and we specify
the subsection where each family is described. Empty cells mean that an OPS of
that type does not exist for the given flag, and the same is true for all the
other $\rX_2$ flags not included in the table. The cells marked in bold
correspond to $\rX_2$-OPS previously known in the literature, while all other
cases are new.

In the rest of this section we will select the $\rX_2$ operators for each of
the $\rX_2$ flags in Section \ref{sect:Classif} that can be transformed into a
well defined Sturm Liouville problem of Hermite, Laguerre or Jacobi type. We
allow affine changes of variables and basically we need to transform the
leading order of the $\rX_2$ operator into $1$, $x$ or $1-x^2$ and verify that
the weight is non-singular in the corresponding interval and it has well
defined moments of all orders. This will exclude many cases and it will impose
constraints on the remaining free parameters for the cases that survive.

\subsection{$\rX_2$-Hermite OPS}

 \subsubsection{No Hermite polynomials for the 2-pole flag $\cE^{(11)}_{23}$ }
  The leading order coefficient in \eqref{eq:T1123def}, is
  \[ -\frac{1}{2} z^2 (a_0-a_1) (a_0-a_1+4)-z (a_0 a_1-a_0-a_1^2+3
  a_1)-\frac{a_1^2}{2}+a_1 \]
  We require the coefficient of $z^2$ to vanish.
  Setting $a_1= a_0$ transforms the above into
  \[ -a_0 ( 2 z + a_0/2-1) \]
  Setting $a_1 = a_0 +4$ gives
  \[ (a_0+2)(2z-a_0/2+2) \]
  In other case, it is impossible to obtain a Hermite-like operator.

\subsubsection{No Hermite polynomials for the 2-pole flag
$\cE^{(11)}_{13}$}
  The leading order coefficient in \eqref{eq:T1113def}, is
  \[ -(c_0+c_1)\frac{z^2}{2}+ c_0\left(z-\frac{1}{2}\right)\]
  It is not possible to specialize $c_0, c_1$ so that the above
  polynomial reduces  to a constant.

 \subsubsection{1-step Hermite polynomials that span the 2-pole  flag
$\cE^{(11)}_{03}$}
  \label{ssect:1103Hermite}
  Setting $\alpha_0,\alpha_1 = -1/2, q_0 = 1$ in \eqref{eq:T1103def}
  and applying the change of varibles
  \[ z = i/\sqrt{2} x + 1/2 \]
  gives a Hermite-type operator
  \[ \hcH[y] :=  y'' - 2xy - 2(\log\xi)' y' \]
  where
  \[ \xi(x) = 1+2x^2 = -\frac{1}{2} H_2(i x).\]
  The adjoint intertwiners and the exceptional polynomials are shown
  below:
  \begin{align}
    B[y] &= e^{-x^2} W [\psi^{(2)},y]\\
    A[y] &= \frac{y'}{\xi(x)} \\
    \hH_0&= 1\\
    \hH_n&=B[H_{n-3}],\quad n=3,4,5,\ldots \\
    \hcH[\hH_n] &= -2n \hH_n\\
    A[\hH_n] &= 4n H_{n-3},\quad n=0,3,4,5,\ldots
  \end{align}
  The above polynomials are related to the CPRS exactly-solvable
  potential \cite{CPRS,felsmith} and constitute the codimension-2
  instance of the modified Hermite polynomials introduced in
  \cite{dubov}.
This family was also described independently in  \cite{Dutta-Roy} for arbitrary
codimension.
\subsubsection{No Hermite polynomials for the 2-pole flag $\cE^{(11)}_{12}$}
By inspection of \eqref{eq:T1112def}, a Hermite-type operator requires
\[ \alpha_0 = \alpha_1 = \frac{1}{2} q_0 \neq 0 \]
Applying a change of variable
\[ z= a x \]
yields the weight
\[ W(x) = e^{-2a^2 x^2}{(1-4a^2 x^2)^2} \]
To have a real weight requires $a$ to be either real, or purely
imaginary.  In the first, case, the weight is singular; in the latter
case there are no singularities  but the finite-moment condition is violated.

\subsubsection{No Hermite-type polynomials for the 1-pole  flags $\cE^{(2a)},
\cE^{(2b)}$ and $\cE^{(2c)}$.}
A real valued operator and weight requires the unique pole to be
real.  However, a Hermite-type weight requires the entire real line as
the interval of orthogonality.  Therefore, even if Hermite type weights
of the form
\[ W(x) = \frac{e^{-x^2}}{(x-b)^4} \]
do exist, since $b$ is real, the resulting weight is singular.

\subsection{$\rX_2$-Laguerre  OPS}

\subsubsection{1-step Laguerre polynomials that span the 2-pole flag
$\cE^{(11)}_{23}$}
\label{ssect:1123Laguerre}
By direct inspection of \eqref{eq:T1123def} a Laguerre-type operator
requires either $a_1=a_0+4$ or $a_1=a_0$.  We consider these two
cases in turn
\begin{itemize}
\item[(I)] Imposing $a_1=a_0+4$ in \eqref{eq:T1123def}, making an affine
  change of variable 
  \[ x = (a_0+2)(z-a_0/4 -1), \] and setting
  \[ \alpha = a_0(4+a_0)/4\] gives the operator
  \[ \hcL_\alpha[y] := x y'' + (1+\alpha-x) y' - 2 (\log \xi)'(x y' + \alpha y)
\] where
  \[ \xi(x;\alpha) = L^{(\alpha-1)}_2(-x) =
  (x^2+2(\alpha+1)x+\alpha(\alpha+1))/2\] and the prime symbol denotes
  the derivative with respect to $x$.  We impose $\alpha>0$ in order
  to avoid positive zeros of $\xi(x;\alpha)$.  The resulting
  orthogonal polynomials are codimension-2 instances of the type I
  exceptional Laguerre polynomials \cite{SO1,GKM10}. The corresponding
  polynomials and the adjoint intertwining relation are shown below:
    \begin{align} 
      A[y] &:= x^{\alpha+1}    \cW[x^{-\alpha}, y]/ \xi(x;\alpha)\\
      B[y] &:= e^{-x}\cW[\phi^{(3)}_2(x;\alpha-1),    y]\\
      \hL_n^{(\alpha)}(x) &= B[L^{(\alpha-1)}_{n-2}],\qquad n=2,3,4,\ldots \\
      A[\hL^{(\alpha)}_n]&= (\alpha+n)L^{(\alpha-1)}_{n-2}.
    \end{align}
  \item[(II)] Imposing  $a_1=a_0$ in \eqref{eq:T1123def},  making an
    affine change of variable
    \[ x= a_0(4z-2+a_0)/4 ,\] and setting
    \[ \alpha = a_0^2/4-1 \]gives the operator
    \[ \hcL_\alpha[y] := xy'' +(1+\alpha-x) y' - 2x \left(\log \xi\right)'
    (y'-y) \] 
    where
    \[ \xi(x;\alpha) = L^{(-\alpha-1)}_2(x) = (x^2+ 2(\alpha-1)x +
    \alpha^2-\alpha)/2 , \quad \alpha>1\] The resulting orthogonal
    polynomials are codimension-2 instances of the type II exceptional
    Laguerre polynomials \cite{SO1,GKM10}. The definition of these
    polynomials and the adjoint differential relation are shown below
    \begin{align}
      A[y]&:= \frac{e^{-x}}{\xi(x;\alpha)} \cW[e^x,y] \\
      B[y] &:= x^{\alpha+2} \cW[\phi^{(2)}_2(x;\alpha+1),y] \\
      \hL^{(\alpha)}_n &= B[L^{(\alpha+1)}_{n-2}],\quad n=2,3,4,5,\ldots \\
      A[\hL^{(\alpha)}_n] &= (3-\alpha-n) L^{(1+\alpha)}_{n-2}
    \end{align}
  \end{itemize}

\subsubsection{2-step Laguerre polynomials that span the 2-poles flag
    $\cE^{(11)}_{13}$} 
  \label{ssect:1113Laguerre}
  By direct inspection of \eqref{eq:T1113def}, a Laguerre-type
  operator requires $c_0=1,c_1 = 0$.  Applying the affine
  transformation
  \[ x= (z-1/2) \frac{a_0 (2+a_0)}{a_0+1} \] and setting
  \[ \alpha = \frac{a_0^2 + 2 a_0 + 2}{2(a_0+1)} \]
  gives the operator
\[\hcL_\alpha[y] := xy'' +(1+\alpha-x) y' - 2x(\log \xi)' y' +
\frac{2(\al-1)(\al+1-x)}{ \xi(x;\alpha)} y\]
 where
  \[ \xi(x;\alpha) = x^2+1-\alpha^2 =
  e^{-2x} x^{1+\alpha}
  \cW[\phi^{(4)}_1(x;\alpha),\phi^{(3)}_1(x;\alpha)],\quad
  |\alpha|<1 \] The adjoint 
  intertwiners and the exceptional polynomials are:
  \begin{align}
    B_\alpha[y] &:= \frac{1}{\alpha} e^{-2x} x^{2+\alpha} \cW[
    \phi^{(3)}_1(x;\alpha),
    \phi^{(4)}_1(x;\alpha),y]\\
    \hL^{(\alpha)}_1 &:= L^{(\alpha)}_1(-x) =  x+\alpha+1\\
    \hL^{(\alpha)}_n &:= B_\alpha[ L^{(\alpha)}_{n-3}],\quad
    n=3,4,5,\ldots\\
    A_\alpha[y] &:= \frac{x^{2+\alpha}}{\alpha \xi(x;\alpha)^2}
    \cW[x^{-\alpha}(x-\alpha+1),x+\alpha+1,y] \\
    A_\alpha[\hL^{(\alpha)}_n] &= -(n-1)(\alpha+n-1)
    L^{(\alpha)}_{n-3},\quad n=1,3,4,5,\ldots
  \end{align}
  Note: for $\alpha=0$, the above definitions have to be treated as a limit
  process.  A straightforward calculation shows that
  \begin{align}
    B_0[y] &= -x(1+x^2)y''+(2x^3+x^2+2x-1) y' -(x^3+x^2+2x-2) y
  \end{align}

 \subsubsection{1-step Laguerre polynomials that span the 2-poles flag
    $\cE^{(11)}_{03}$}
  \label{ssect:1103Laguerre}
  Inspection of \eqref{eq:T1103def} reveals that a Laguerre-type
  operator requires
  \[ q_0 + c_0 + c_1 = 1\]
  Since we are free to scale the operator, no generality is lost by
  imposing $c_0-c_1 =1$, which gives us
  \[ c_0  = (1-q_0)/2,\quad c_1= -(1+q_0)/2 \]
  Applying the affine change of variables
  \[ x = q_0(1-q_0 - 2z) \]
  and setting 
  \[ \alpha= 1-q_0^2 \]
  gives the operator
  \begin{align}
    \hcL_k[y]&:=  x y'' + (1+\alpha-x) y' - 2x (\log \xi)' y' 
    \intertext{where}
    \xi(x;\alpha) &= L^{(-\alpha-1)}_2(-x) = (x^2+2(1-\alpha)x +
    \alpha^2-\alpha)/2 ,
  \end{align}
  and where 
  \[\alpha \in (-1,0) \cup (1,\infty)\] in order to avoid
  positive zeros in $\xi(x;\alpha)$ and to have finite moments.  The
  corresponding exceptional polynomials and intertwiners are shown
  below:
  \begin{align}
    B[y] &:= e^{-x} x^{2+\alpha} \cW[\phi^{(4)}_2(x;1+\alpha), y ]\\
    A[y] &= \frac{y'}{\xi(x;\alpha)}\\
    \hL^{(\alpha)}_0(x) &= 1\\
    \hL^{(\alpha)}_n(x) &= B[L^{(\alpha+1)}_{n-3}],\quad n=3,4,5,\ldots\\
    A[\hL^{(\alpha)}_n] &= n L^{(\alpha+1)}_{n-3},\quad n=0,3,4,5,\ldots
  \end{align}

\subsubsection{No Laguerre polynomials for the 2-poles flag  $\cE^{(11)}_{12}$}
By inspection of \eqref{eq:T1112def}, $q_0= c_0 + c_1$.  Without loss
of generality,  
\[ c_0 - c_1 = 1,\quad c_0 + c_1 = a \]
where $a$ is a new operator parameter.  Making the affine change of
variables
\[ x = a((1+a)-2z) \] 
gives the weight
\[ \hW_\alpha(x) = \frac{e^{-x} x^{a^2-1}}{(x-a^2-a)^2(x-a^2+a)^2} \]
In order to have a real weight we need $a$ to be either real or pure
imaginary.  In the first case, the denominator will have a positive
zero; the weight is singular.  In the former case, the finite moment
condition is violated.  Therefore, there are no $\rX_2$ polynomials
that span this flag.

\subsubsection{1-step Laguerre polynomials for the 1-pole flag
$\cE^{(2a)}_{13}$}
\label{ssect:2a13Laguerre}
We refer to the $\cE^{(2)}$ flags and the corresponding OPS as
1-pole because the weight function has one pole,
unlike the 2-poles present in the weight functions of the $\cE^{(11)}$
families. This pole in the weight has higher multiplicity.

By direct inspection of \eqref{eq:T2a13def}, a Laguerre-type operator
requires either $a=1/3$, or $a=3$.  Setting $a=1/3$,  making the
change of variables
$x = z+3/4$ yields a singular weight, namely
\[ \frac{e^{-x} x^{-1/4}}{(4x-3)^4} \]
Setting  $a=3$ and making the change of variables
\[ x=3z-3/4 \]
gives the operator
\[ \hcL[y] := x y''+(5/4-x) y' - \frac{ 4x y' + y}{x+3/4}\]
and the  weight
\[ \hW(x) = \frac{e^{-x} x^{1/4}}{(4x+3)^4},\] which is both
non-singular and has finite moments of all orders.  The remarkable
feature of this weight is that it has a  fourth order
pole, unlike the two second order poles of the previously discussed
$\rX_2$ families.  The adjoint intertwiners and the exceptional
polynomials for this weight are shown below:
\begin{align}
  B[y]&:= \frac{e^{-2x} x^{9/4}}{(x+3/4)} \cW[\phi^{(4)}_1(x;1/4)
  ,\phi^{(3)}_2(x;1/4) ,y] \\
  \hL_1(x) &:= x+15/4\\
  \hL_n(x) &:= B[L^{(1/4)}_{n-3}],\quad n=3,4,5,\ldots\\
  A[y] &:= \frac{x^{9/4}}{\left(x+3/4\right)^3}
  \cW[x^{-1/4},x+15/4,y]\\
  A[\hL_n] &= \frac{25}{128} (n-1) (4n+1) L^{(1/4)}_{n-3},\quad n=1,3,4,5,\ldots
\end{align}

\subsubsection{2-step Laguerre polynomials for the 1-pole flag
$\cE^{(2a)}_{03}$}
\label{ssect:2a03Laguerre}
By inspection of \eqref{eq:T2a03def}, a Laguerre-type operator
requires $q_0=3$.  Making the affine change of variable
\[ x = \frac{3}{4}(2z-1) \]
gives the operator
\[ \hcL[y] := xy'' + \left(3/4-x\right) y' - \frac{4 x
  y'}{x+3/4} \] 
and the weight
\[ \hW(x) := \frac{e^{-x} x^{-1/4}}{(4x+3)^4} \]
The adjoint intertwiners and the exceptional polynomials are shown below:
\begin{align}
  B[y] &:= \frac{e^{-2x} x^{7/4}}{x+3/4}
  \cW[\phi^{(4)}_2(x;-1/4), \phi^{(3)}_1(x;-1/4), y]\\
  \hL_0 &= 1 \\
  \hL_n &:= B[L^{(-1/4)}_{n-3}],\quad n=3,4,5,\ldots \\
  A[y]&:= \frac{x^{7/4}}{\left(x+3/4\right)^3}
  \cW[1,x^{1/4}(x+15/4),
  y]\\
  A[\hL_n] &= \frac{25}{128} n(5-4n) L^{(-1/4)}_{n-3},\quad
  n=0,3,4,5,\ldots
\end{align}

\subsubsection{No Laguerre polynomials for the 1-pole flags
$\cE^{(2a)}_{02},
  \cE^{(2a)}_{12},\cE^{(2b)}_{23}$ and $\cE^{(2c)}_{23}$}
Setting $p_0=0$ and applying  an affine transformation, the operator
\eqref{eq:T2a12def}  yields a singular Laguerre-type weight
\[ \hW(x) = \frac{e^{-x} x^{1/4}}{(4x-3)^4} \] By direct inspection of
\eqref{eq:T2a02def}, \eqref{eq:T2b23def} \eqref{eq:T2c23def}, the
operators in question do not admit a Laguerre form.

\subsection{$\rX_2$ -Jacobi OPS}

 \subsubsection{1-step Jacobi polynomials that span the 2-pole flag
$\cE^{(11)}_{23}$}
  \label{ssect:1123Jacobi}
  The quadratic coefficient of $y''$ in \eqref{eq:T1123def} factors as
  \begin{align}
    &  -\frac{1}{2} (a_1-a_0)(a_1-a_0-4)(z-z_1)(z-z_2) \intertext{where}
  &z_1 = \frac{a_1} {a_1-a_0-4},\quad z_2 = \frac{a_1-2}{a_1-a_0-4}    
  \end{align}
  We seek an affine change of variable that transforms this quadratic
  into $1-x^2$.  There are two possibilities according to which root
  is sent to $+1$ or $-1$.  However, since the two resulting families
  are related by an affine change of variable, it suffices to consider
  just one such transformation.  Employing the transformation
  \[ z = \frac{z_2}{2}(x+1)-\frac{z_1}{2}  (x-1) \]
   setting
  \[ \alpha = \frac{2(z_1-1)z_1(2z_2-1)}{z_1-z_2},\quad \beta =
  \frac{2(2 z_1-1) z_2 (z_2-1)}{z_1-z_2} \]
  and adding a constant term,
  transforms $T^{(11)}_{23}[y]$ into the operator
  \[ \hcT_{\alpha,\beta}[y] = \cT_{\alpha,\beta}[y] - 2  (\log \xi)'
  ((1-x^2) y'+ \beta(1-x)y) +2(\alpha-\beta-1) y\]
  where 
  \begin{align}
     \xi(x;\alpha,\beta) &= P^{(-\alpha-1,\beta-1)}_2(x) \\
     &= \frac{1}{4}  \binom{\beta-\alpha+2}{2}(x-1)^2+
  \frac{1}{2}(\beta-\alpha+1)(1-\alpha)(x-1) +\binom{\alpha}{2}
  \end{align}
  In this way, we have arrived at the codimension-2 instance of the
   exceptional Jacobi-type polynomials introduced by Odake and Sasaki
  \cite{SO1,GKM12}.

  We require that $\xi(x;\alpha,\beta)$ have no zeros in the interval
  of orthogonality $x\in [-1,1]$.  The above affine transformation
  maps $-1,1$ to the roots of $\xi(x)$ and maps
  \begin{align}
     z_1 &=  \frac{1}{2} \pm\frac{1}{2} \sqrt{-a(1+a+b)/b}
     ,\quad a=\alpha-1,\; b=-\beta-1\\
     z_2 &=\frac{1}{2} \mp\frac{1}{2}
     \sqrt{-b(1+a+b)/a} 
  \end{align}
  to $\pm 1$.   Therefore, an equivalent condition is that $z_1,z_2$
  are either complex-valued or lie in the interval $(0,1)$.  The
  solutions to this constraint in the $(a,b)$ plane are  the disjoint
  union of the following regions: (i) $a,b>0$; (ii) $a>0, b<-1$; (iii)
  $a<-1, b>0$; (iv) $-1<a,b<0$.  Finite moments require
  $\alpha,\beta>-1$.  Therefore, in the final analysis, we have two
  classes orthogonal polynomials with a non-singular weight and finite
  moments: $\alpha>-1, \beta>0$ and $0<\alpha<1, -1<\beta<0$; c.f.,
  Proposition 4.5 of \cite{GKM12}.

  The exceptional polynomials and the adjoint intertwiners are shown below:
  \begin{align}
    A[y] &:= \frac{(1+x)^{\beta+1}}{\xi(x;\alpha,\beta) }\cW[(1+x)^{-\beta},y]\\
    B[y] &:=  (1-x)^{\alpha+2}\cW[\phi^{(2)}_2(x;\alpha+1,\beta-1),y ]\\
    \hP^{(\alpha,\beta)}_n &=    B[P^{(\alpha+1,\beta-1)}_{n-2}   ]\\
    \hcT_{\alpha,\beta} &= BA + (2+\beta)(\alpha-1)\\
    \cT_{\alpha+1,\beta-1} &= AB+ (2+\beta)(\alpha-1)\\
    \hcT[\hP_n] &= -(n-2)(n-1+\alpha+\beta) \hP_n     \\
    A[\hP_n^{(\alpha,\beta)}] &= -(\alpha+n-3)(\beta+n)P^{(\alpha+1,\beta-1)}_{n-2}
  \end{align}

  \subsubsection{2-step Jacobi polynomials that span the 2-pole flag
$\cE^{(11)}_{13}$}
  \label{ssect:1113Jacobi}
  The quadratic coefficient of $y''$ in \eqref{eq:T1113def} factors as 
  \[ \frac{c_0}{2}((R+1)z -1)((R-1)z +1),\quad\text{where } R=
  \sqrt{-\frac{c_1}{c_0}}\]
  Employing the affine transformation
  \[ z = \frac{R x+1}{1-R^2} \]
  and setting
  \begin{align}
    \alpha &= \frac{1}{1-R} + \frac{a_0}{1-R} - \frac{R}{(1+a_0)(1-R)}\\
    \beta &= \frac{1}{1+R} + \frac{a_0}{1+R} + \frac{R}{(1+a_0)(1+R)}
  \end{align}
  transforms the operator $T^{(11)}_{13}$ into
  \[ \hcT_{\alpha,\beta}[y] = \cT_{\alpha,\beta}[y] -2(1-x^2)(\log
  \xi)' y' - \frac{8(\alpha-1)(\beta-1)
    P^{(\alpha,\beta)}_1(x)}{\xi(x;\alpha,\beta)}y \]
  where
  \begin{align}
    \xi(x;\alpha,\beta) &= (x^2+1)(\alpha^2-\beta^2) + 2x
    (\alpha^2+\beta^2-2) \\
    &= \frac{4 a_0 (2+a_0) (1+a_0-R)(1+a_0+R)}{(1+a_0)^2(R^2-1)^2}
    (x+R)(Rx +1)
  \end{align}
  For a real, non-singular weight, we require $R=e^{i t},\; t\in
  \Rset$ to be a unit-length complex number.  A direct calculation
  shows that 
  \begin{align*}
     R &= \frac{\alpha^2+\beta^2-2}{\alpha^2-\beta^2} \pm
  \frac{2\sqrt{(\alpha^2-1)(\beta^2-1)} }{\alpha^2-\beta^2} \\
  \frac{1}{R} &= \frac{\alpha^2+\beta^2-2}{\alpha^2-\beta^2} \mp
  \frac{2\sqrt{(\alpha^2-1)(\beta^2-1)} }{\alpha^2-\beta^2} 
  \end{align*}
Therefore, the parameters $\alpha,\beta$ must satisfy
  \[ -1<\alpha<1,\; \beta>1,\quad\text{or}\quad \alpha>1,\;
  -1<\beta<1 \]
  The corresponding exceptional polynomial and the adoint intertwiners
  are shown below
  \begin{align}
    A[y]&:= \frac{(1+x)^{\beta+2}}{\beta \xi(x;\alpha,\beta)}
    \cW[(1+x)^{-\beta}P^{(\alpha,\beta-2)}_1,P^{(-\alpha-2,\beta)}_1,y]
    \\
    B[y] &:= \frac{(1-x)^{6+2\alpha}(1+x)^{2+\beta}}{\beta}
    \cW[\phi^{(2)}_1(x;\alpha+2,\beta) ,
    \phi^{(4)}_1(x;\alpha+2,\beta),y] \\
    \hP^{(\alpha,\beta)}_1 &= P^{(-\alpha-2,\beta)}_1 \\
    \hP^{(\alpha,\beta)}_n &= B\left[P^{(\alpha+2,\beta)}_{n-3}\right],\quad
    n=3,4,5,\ldots \\
    \hcT[\hP_n] &= -n(n-3+\alpha+\beta) \hP_n\\
    A[\hP^{(\alpha,\beta)}_n] &=
    \frac{1}{16}(n-1)(n+\alpha-2)(n+\beta-1)(n+\alpha+\beta-2)
    P^{(\alpha+2,\beta)}_{n-3},\quad n=1,3,4,5,\ldots 
  \end{align}
  As above, for the case of $\beta=0$, the definitions above must be
  treated as a limit.

  \subsubsection{1-step Jacobi polynomials that span the 2-pole flag
$\cE^{(11)}_{03}$}
  \label{ssect:1103Jacobi}
  The quadratic coefficient of $y''$ in \eqref{eq:T1103def} factors as
  \begin{align}
    & -\left(q_0+c_0 +c_1\right)\frac{z^2}{2}+
      \frac{q_0z}{2}+c_0\left(z-\frac{1}{2}\right)= -\frac{1}{2
      z_1 z_2} (z-z_1)(z-z_2) \intertext{where,} &c_0 =
    1,\; c_1 =
    \left(1-\frac{1}{z_1}\right)\left(1-\frac{1}{z_2}\right),\quad q_0
    = -2+\frac{1}{z_1} + \frac{1}{z_2}
  \end{align}
  Note that no generality is lost by scaling $c_0 = 1$ because, if
  $c_0 =0$, then the operator does not have a pole at $z=0$.
  Employing the affine transformation
  \[ z = \frac{z_1(x+1)-z_2(1- x)}{2}\]
  and setting
  \[ \alpha = \frac{2(z_1-1)z_1(2z_2-1)}{z_1-z_2},\quad \beta =
  -\frac{2(2 z_1-1) z_2 (z_2-1)}{z_1-z_2} \]
  and adding a constant term,
  transforms $2 z_1 z_2 T^{(11)}_{23}$ into the operator
  \[ \hcT_{\alpha,\beta}[y] = \cT_{\alpha,\beta}[y] - 2  (\log \xi)'
  (1-x^2) y'\]
  where 
  \begin{align}
     \xi(x;\alpha,\beta) &= P^{(-\alpha-1,-\beta-1)}_2(x) \\
     &= \frac{1}{4}  \binom{2-\beta-\alpha}{2}(x-1)^2+
  \frac{1}{2}(1-\beta-\alpha)(1-\alpha)(x-1) +\binom{\alpha}{2}
  \end{align}

  We require that $\xi(x;\alpha,\beta)$ have no zeros in the interval
  of orthogonality $x\in [-1,1]$.  The above affine transformation
  maps $-1,1$ to the roots of $\xi(x)$ and maps
  \begin{align}
     z_1 &=  \frac{1}{2} \pm \frac{1}{2}\sqrt{-a(1+a+b)/b}
     ,\quad a=\alpha-1,\; b=\beta-1\\
     z_2 &=\frac{1}{2} \mp
     \frac{1}{2}\sqrt{-b(1+a+b)/a} 
  \end{align}
  to $\pm 1$.  Therefore, an equivalent condition is that $z_1,z_2$
  are either complex-valued or lie in the interval $(0,1)$.  This
  constraint, toghether with the finite moments constraint, gives us 4
  disjoint classes of acceptable parameter values:
  \begin{itemize}
  \item[(i)] $\alpha,\beta >1$;
  \item[(ii)] $1<\alpha<3,\; -1<\beta<0,\; \alpha+\beta<2$;
  \item[(iii)] $1<\beta<3,\; -1<\alpha<0,\; \alpha+\beta<2$;
  \item[(iv)] $0<\alpha,\beta<1$
  \end{itemize}

  The exceptional polynomials and the adjoint intertwiners are shown below:
  \begin{align}
    A[y] &:= \frac{y'}{P^{(-\alpha-1,-\beta-1)}_2(x)}\\
    B[y] &:=
    (1-x)^{2+\alpha}(1+x)^{2+\beta}\cW[\phi^{(4)}_2(x;\alpha+1,\beta+1),y
    ]\\ 
    \hP^{(\alpha,\beta)}_0 &= 1\\
    \hP^{(\alpha,\beta)}_n &=    B[P^{(\alpha+1,\beta+1)}_{n-3}],\quad
    n=3,4,5,\ldots\\
    \hcT[\hP_n] &= -(n-2)(n-1+\alpha+\beta) \hP_n     \\
    A[\hP_n^{(\alpha,\beta)}] &=
    -n(\alpha+n-3)P^{(\alpha+1,\beta+1)}_{n-3},\quad n=0,3,4,5,\ldots
  \end{align}

\subsubsection{No Jacobi polynomials for the 2-pole flag $\cE^{(11)}_{12}$}
The quadratic coefficient of $y''$ in \eqref{eq:T1112def} factors as
  \begin{align}
    & \left(c_0 +c_1-q_0\right)\frac{z^2}{2}+
      \left(\frac{q_0}{2}-c_1\right)-\frac{c_0}{2}= -\frac{1}{2
      z_1 z_2} (z-z_1)(z-z_2) \intertext{where,} &
    c_0 =    1,\; c_1 =
    \left(1-\frac{1}{z_1}\right)\left(1-\frac{1}{z_2}\right),\quad q_0
    = 2-\frac{1}{z_1} - \frac{1}{z_2} + \frac{2}{z_1 z_2}
  \end{align}
  Note that no generality is lost by scaling $c_0 = 1$ because, if
  $c_0 =0$, then the operator does not have a pole at $z=0$.
  Employing the affine transformation
  \[ z = \frac{z_1(x+1)-z_2(1- x)}{2}\]
  and setting
  \[ \alpha = -\frac{2(z_1-1)z_1(2z_2-1)}{z_1-z_2},\quad \beta =
  \frac{2(2 z_1-1) z_2 (z_2-1)}{z_1-z_2} \]
  gives a weight of the form
  \[ \hW(x;\alpha,\beta ) = \frac{(1-x)^\alpha
    (1+x)^\beta}{\left(P^{(\alpha-1,\beta-1)}_2(x)\right)^2} \] 
   Since
  \begin{align}
     z_1 &=  \frac{1}{2} \pm \frac{1}{2}\sqrt{a(1+a+b)/b}
     ,\quad a=\alpha+1,\; b=\beta+1\\
     z_2 &=\frac{1}{2} \mp
     \frac{1}{2}\sqrt{b(1+a+b)/a} 
  \end{align}
  and since $\alpha,\beta>-1$ is required for finite moments, the
  roots $z_1,z_2$ are real, and one of them lies outside the interval
  $(0,1)$. Therefore, if $\alpha,\beta>-1$, the above weight must be
  singular on $x\in (-1,1)$.

\subsubsection{2-step Jacobi polynomials that span the 1-pole flag
$\cE^{(2a)}_{13}$}
\label{ssect:2a13Jacobi}
The quadratic coefficient of $y''$ in \eqref{eq:T2a13def} factors as
\[\left( \left(1-3 a\right)
        \left(3-a\right)\frac{z^2}{4} +2 \left(1-a\right)z+1\right) 
= \frac{1}{4}((a-3)z -2)((3a-1)z-2) \]
In order to have a Jacobi-type operator, we require $a\neq 3,1/3,-1$;
in the latter case we obtain a perfect square.  Applying the affine
transformation 
\[ z = \frac{(x+1)}{a-3} -\frac{x-1}{3a-1} \]
yields the operator
\[ \hcT_a[y]:= \cT_{\alpha,\beta}[y] - 4(1-x^2)(\log\xi)' y'-
\frac{8}{\xi(x;a)}  y \]
where
\[ \xi(x;a) = (1+a) x + 2(a-1) \]
and where
\[ \alpha = 2+ \frac{6}{a-3},\quad \beta = \frac{2}{3a-1} \]
Just as for the Laguerre-type polynomials, the corresponding weight
involves a 4th order pole: 
\[ \hW(x;a) = \frac{(1-x)^\alpha(1+x)^\beta}{\xi(x;a)^4} \] In order
to obtain a non-singular weight we must have $a>3$ or $a<1/3$.
However, in order to have $\alpha,\beta>-1$ (finite moments), we must
restrict the latter condition to $a<-1/3,\; a\neq -1$.  The
corresponding values of $\alpha,\beta$ range from $\alpha>2,\;
0<\beta<2$ in the former case, and $1/5<\alpha<2,\; -1<\beta<0,\;
(\alpha,\beta)\neq (1/2,-1/2)$ in the latter case.  Of course $\alpha,
\beta$ are not independent, but rather are linked by the relation
\[ 4\alpha\beta+\beta-\alpha+2 = 0 \]
The adjoint intertwiners and the exceptional polynomials for this
flag and weight are shown below:
\begin{align}
  B[y] &:=
  \frac{(1-x)^{2\alpha+6}(1+x)^{\beta+2}}{a(a-1)(1+3a)\xi(x;a)}
  \cW[\phi^{(4)}_1(x;\alpha+2,\beta) , \phi^{(2)}_2(x;\alpha+2,\beta),
  y] \\
  A[y] &:= \frac{(3a-1)^5 (a-3)^3}{36(1+3a) \xi(x;a)^3}
  \cW[(1+x)^{-\beta}, 2(1+a)(x-1)+(a-1)(3a-1),y]\\
  \hP_1(x;a) &=  2(1+a)(x-1)+(a-1)(3a-1)\\
  \hP_n(x;a) &:= B[P^{(\alpha,\beta)}_{n-3}],\quad
  n=3,4,5,\ldots,\\
  A[\hP_n1] &=
  (n-1)(n-3+\alpha)(n+\beta)(n-2+\alpha+\beta)P^{(\alpha,\beta)}_{n-3},\quad n=1,3,4,5,\ldots
\end{align}

\subsubsection{2-step Jacobi polynomials that span the 1-pole flag
$\cE^{(2a)}_{03}$}
\label{ssect:2a03Jacobi}
The quadratic coefficient of $y''$ in \eqref{eq:T2a03def} factors as
\[\left( (3-q_0)\frac{z^2}{4} -2z+1\right)  =
\frac{(z-z_1)(z-z_2)}{z_1^2}\]
where
\[ z_1, z_2 = \frac{-4 \pm 2 \sqrt{1+q_0}}{q_0-3} ,\quad z_2 =
\frac{z_1}{2z_1-1}.\] Applying the affine transformation
\[ z = \frac{z_1(x+1)}{2} -\frac{(x-1)z_2}{2},\quad z_1\neq z_2 \]
yields the operator
\[ \hcT[y]:= \cT_{\alpha,\beta}[y] - 4(1-x^2)(\log\xi)' y'\]
where
\[ \xi(x;z_1) = (z_1-1) x + z_1 \]
and where
\[ \alpha = \frac{3}{2} z_1 -1,\quad \beta = \frac{3}{2} z_2 -1,\quad 4\alpha\beta+\alpha+\beta-2=0\]
Just as for the Laguerre-type polynomials, the corresponding weight
involves a 4th order pole: 
\[ \hW(x;z_1) = \frac{(1-x)^\alpha(1+x)^\beta}{\xi(x;z_1)^4} \] In
order to obtain a non-singular weight we require $z_1\neq z_2$ to have
the same sign.  This implies that $z_1>1/2,\; z_1\neq 1$, which in
turn implies that $\alpha,\beta>-1/4,\; \alpha,\beta\neq 1/2$
but are subject to the relation
\[ 4\alpha\beta+\alpha+\beta-2=0\]
The finite moment condition is therefore automatically satisfied.
The adjoint intertwiners and the exceptional polynomials for this
flag and weight are shown below:
\begin{align}
  B[y] &:=
  \frac{(1-x)^{2\alpha+6}(1+x)^{\beta+2}}{P^{(-\alpha-2,\beta)}_1(x)}
  \cW[\phi^{(4)}_2(x;\alpha+2,\beta) , \phi^{(2)}_1(x;\alpha+2,\beta),
  y] \\
  A[y] &:= \frac{2(1+\alpha)^3  (1+x)^{2+\beta}}{(\beta-1)^2 \alpha (\alpha-2)^2}
  \cW[1,(1+x)^{-\beta}(1+\alpha+(x-1)\beta(1-2\alpha),y]\\
  \hP_0(x;z_1) &=  1\\
  \hP_n(x;z_1) &:= B[P^{(\alpha+2,\beta)}_{n-3}],\quad
  n=3,4,5,\ldots,\\
  A[\hP_n] &=
  n(n-2+\alpha)(n-1+\beta)(n-3+\alpha+\beta)P^{(2+\alpha,\beta)}_{n-3},\quad
  n=0,3,4,5,\ldots
\end{align}

\subsubsection{No Jacobi polynomials for the 1-pole flags $\cE^{(2a)}_{02},
  \cE^{(2a)}_{12},\cE^{(2b)}_{23}, \cE^{(2c)}_{23}$}

Setting
\[ z_1,z_2 =  \frac{-1\pm \sqrt{1-p_0}}{p_0} \] and applying the
affine change of variables
\[ z = \frac{z_1(x+1)}{2} -\frac{(x-1)z_2}{2},\quad z_1\neq z_2 \]
transforms the operator in \eqref{eq:T2a12def} into Jacobi form.
The corresponding weight is
\[ \hW(x;z_1) = \frac{(1-x)^\alpha (1+x)^\beta}{(x(z_1+1)+z_1)^4}\]
where 
\[ \alpha = -1+\frac{3}{2} z_1,\quad \beta = -1+\frac{3}{2} z_2 \]
A non-singular weight requires that $z_1, z_2$ be real and have the
same sign.  Since
\[ z_2 = \frac{-z_1}{2z_1+1} \] the only possibility is that
$z_1,z_2<-1/2$.  However, this means that $\alpha,\beta<-1$, which
violates the finite moments condition.

By direct inspection of \eqref{eq:T2a02def} \eqref{eq:T2b23def}, a
Jacobi-type operator must have a singularity at $x=0$.  The
coefficient of $y''$ in \eqref{eq:T2c23def} is a perfect square, which
does not permit a Jacobi-type operator.

\section{Summary and outlook}

In the present paper we have given a classification of exceptional orthogonal polynomial systems of codimension two ($\rX_2$-OPS).
The classification includes all the cases previously known in codimension two plus some new examples of exceptional polynomials.
Among the new families, the one-pole flags are clearly special. Generically, the
weight of a $\rX_m$-OPS is a rational modification of a classical weight with
$m$ double poles, and this is the case for all the families known to date. The
Jacobi and Laguerre OPS that span the $\cE^{(2a)}$ flag have codimension two but
only one pole in their weight, with quadruple multiplicity. They also have one
less free parameter than the usual Laguerre and Jacobi families, i.e. no free
parameters for the $\cE^{(2a)}$-Laguerre and just one free parameter for the
$\cE^{(2a)}$-Jacobi. The explanation for the presence of these exotic families
is that generically they would belong to a higher-codimensional family, but that
a careful tuning of the parameters can make the codimension drop by one and have
two of the poles of the weight coalesce. Thus, the generic weight of an 
$\rX_m$-OPS is a classical weight divided by the square of a certain degree $m$
polynomial $\xi(x)$ with simple roots that lie outside the interval of
orthogonality, but we know that degenerate cases are also possible.

We have also shown that every $\rX_2$-OPS can be obtained from a classical OPS by a sequence of at most two Darboux transformations, and we conjecture this result to be true {\it mutatis mutandis} for any codimension $m$. Even if the conjecture could be proved to be true, the scheme of multiple step Darboux transformations is still very rich: there are four quasi-rational factorizing functions for the Laguerre and Jacobi families and two for the Hermite. The SL-OPS obtained by 1-step Darboux transformations have been studied in all cases, but multi-step Darboux transformations might mix factorizing functions of different kinds and all the possibilities have not yet been explored. It could also  happen that even if the intermediate weights in a multi-step Darboux transformation are singular, the final weight will be regular.
All cases when this happens have been studied for multi-step state-deleting Darboux transfomations in a more general Sturm-Liouville context (not necessarily polynomial) by Krein and Adler \cite{krein,adler}. A generalization of Krein-Adler's Theorem to multi-step isospectral transformations has been performed by Grandati \cite{Grandati1}, but the full characterization of SL-OPS obtainable via multi-step Darboux transformations of mixed type remains an open problem.

Another consequence of the conjecture is that all exceptional polynomials could be written as Wronskian determinants involving essentially classical orthogonal polynomials (more specifically, involving one classical polynomial and many quasi-rational factorizing functions). 
\vskip 0.6cm
\paragraph{\textbf{Acknowledgements}}
\thanks{

The research of DGU was supported in part by MICINN-FEDER grant MTM2009-06973
and CUR-DIUE grant 2009SGR859. The research of NK was supported in part by NSERC
grant RGPIN 105490-2011. The research of RM was supported in part by NSERC grant
RGPIN-228057-2009.
}

\bibliographystyle{plain}

\end{document}